\newcommand{\commentsymbol}{\it\color{gray}$\triangleright$~}
\newcommand{\Comment}[1]{{\commentsymbol#1}}
\newcommand{\DefineFunction}[1]{\SetKwFunction{#1}{#1}}
\definecolor{darkgreen}{rgb}{0,0.5,0}
\newcommand{\cS}[1]{#1\ding{171}}
\newcommand{\cH}[1]{\textcolor{red}{#1\ding{170}}}
\newcommand{\cC}[1]{\textcolor{darkgreen}{#1\ding{168}}}
\newcommand{\cD}[1]{\textcolor{blue}{#1\ding{169}}}
\DeclareRobustCommand{\RArrow}[1]{%
\parbox{#1}{\tikz{\draw[->](0,0)--(#1,0);}}
}
\DeclareRobustCommand{\LArrow}[1]{%
\parbox{#1}{\tikz{\draw[<-](0,0)--(#1,0);}}
}
\newcommand{\rL}{{\LArrow{3pt}}}
\newcommand{\rR}{{\RArrow{3pt}}}
\tikzstyle{efce}=[fill=yellow!30!white,thick,draw=yellow!50!black,dashdotted]
\tikzstyle{efcce}=[fill=blue!15!white,draw=blue,thick,dashed]
\tikzstyle{nfcce}=[fill=red!15!white,draw=red,thick]
\tikzstyle{majortick}=[black,semithick]
\tikzstyle{majorgrid}=[very thin,gray!50]
\tikzstyle{minorgrid}=[gray,dotted]
\newcommand{\delimit}[3]{\newcommand{#1}[1]{\left#2##1\right#3}}
\DeclareMathOperator*{\E}{\mathbb E}
\newcommand{\R}{\mathbb R}
\newcommand{\N}{\mathbb N}
\newcommand{\zo}{\{0,1\}}
\renewcommand{\vec}{\boldsymbol}
\newcommand{\vx}{\vec x}
\newcommand{\mediator}{\textup{\bf M}}
\newcommand{\nfcce}{\textup{NFCCE}}
\newcommand{\efcce}{\textup{EFCCE}}
\newcommand{\efce}{\textup{EFCE}}
\newcommand{\Root}{\Root}
\newcommand{\revision}[1]{{#1}}
\let\op\operatornamewithlimits
\let\eps\varepsilon
\let\mc\mathcal
\let\Root\varnothing
\newcommand{\poly}{\op{poly}}
\definecolor{p1color}{RGB}{31,119,180}
\definecolor{p2color}{RGB}{255,127,14}
\definecolor{p3color}{RGB}{44,160,44}
\definecolor{p4color}{RGB}{214,39,40}
\newcommand{\pone}{{\ensuremath{\color{p1color}\blacktriangle}}\xspace}
\newcommand{\ptwo}{{\ensuremath{\color{p2color}\blacktriangledown}}\xspace}
\newcommand*\circled[1]{\tikz[baseline=(char.base)]{
            \node[shape=circle,draw,inner sep=1pt] (char) {\normalfont\scriptsize\sffamily #1};}}
\newcommand{\defeq}{\mathrel{:\mkern-0.25mu=}}
\newcommand{\rele}{\bowtie}
\newcommand{\sr}{\texttt{sr}}
\newcommand{\curxi}{{\vec{\xi}}^{(T)}}
\NewDocumentCommand{\Sr}{O{i}}{\Xi^{\sr}_{#1}}
\DeclareMathOperator{\co}{co}
\NewDocumentCommand{\margi}{O{i}}{\vec{m}_{#1}}
\NewDocumentCommand{\marg}{O{i}}{\vec{m}_{#1}}
\NewDocumentCommand{\tildemarg}{O{i}}{\vec{m}_{#1}^{(t)}}
\NewDocumentCommand{\hatmarg}{O{i}}{\vec{m}_{\hat{\vec\xi},#1}}
\newcommand{\nodefont}{\bfseries\sffamily}
\newcommand{\pfont}[2]{{\nodefont\color{p#1color} #2}}
\newcommand{\Vrs}{V^{\texttt{rs}}}
\newcommand{\Ers}{E^{\texttt{rs}}}
\newcommand{\Grs}{G^{\texttt{rs}}}
\newcommand{\X}{\mathcal{X}}
\newcommand{\centercell}[1]{\multicolumn{1}{c}{#1}}
\newcommand{\payoffspace}[3]{%
\begin{tikzpicture}
  \node at (0,0) {\includegraphics[scale=.7]{plots/#1#2#3_payoff_space.pdf}};
  \ifthenelse{#2=3}{
    \node[anchor=north west,fill=white,rounded corners=2pt] at (-2.3,2.7) {\scalebox{.8}{Game: \gamelbl{$^{#2}$#1#3}}};
  }{
    \node[anchor=south,fill=white,rounded corners=2pt,outer ysep=1mm] at (.5,2.35) {\scalebox{.8}{Game: \gamelbl{$^{#2}$#1#3}}};
  }
\end{tikzpicture}
}%
\DeclareRobustCommand\gamelbl[1]{\tikz[baseline,anchor=base]  \node[fill=black!50,rounded corners=2pt,inner xsep=.9mm,inner ysep=.7mm,text=white] (X) {\scalebox{1}{\sffamily #1}};}
\newcommand{\makelegend}{
    \begin{tikzpicture}
        \tikzstyle{efce}=[fill=yellow!30!white,thick,draw=yellow!50!black,dashdotted]
        \tikzstyle{efcce}=[fill=blue!15!white,draw=blue,thick,dashed]
        \tikzstyle{nfcce}=[fill=red!15!white,draw=red,thick]
        \tikzstyle{lbl}=[inner sep=0mm]
    
        \begin{scope}
            \filldraw[nfcce] (0, 0) rectangle +(.7,.25);
            \node[lbl,anchor=south west] at (.85, 0) {\small NFCCE};
        \end{scope}
        \begin{scope}[xshift=3cm]
            \filldraw[efcce] (0, 0) rectangle +(.7,.25);
            \node[lbl,anchor=south west] at (.85, 0) {\small EFCCE};
        \end{scope}
        \begin{scope}[xshift=6cm]
            \filldraw[efce] (0, 0) rectangle +(.7,.25);
            \node[lbl,anchor=south west] at (.85, 0) {\small EFCE};
        \end{scope}
    \end{tikzpicture}
}
\theoremstyle{plain}
\newtheorem{theorem}{Theorem}[section]
\newtheorem{lemma}[theorem]{Lemma}
\theoremstyle{definition}
\newtheorem{definition}[theorem]{Definition}
\newtheorem{example}[theorem]{Example}
\theoremstyle{remark}
\newtheorem{remark}[theorem]{Remark}
\numberwithin{theorem}{section}
\newskip\algoinsideskipamount \algoinsideskipamount=2pt plus 0pt minus 0pt
\definecolor{darkgrey}{gray}{0.3}
\definecolor{commentcolor}{gray}{0.5}
\title{Optimal Correlated Equilibria in General-Sum Extensive-Form Games: Fixed-Parameter Algorithms, Hardness, and Two-Sided Column-Generation}
\author{
Brian Hu Zhang \\ \small Carnegie Mellon University 
\and Gabriele Farina \\ \small MIT 
\and Andrea Celli \\ \small Bocconi University 
\and Tuomas Sandholm \\ {\small \begin{tabular}{c}Carnegie Mellon University\\ \small Strategy Robot, Inc.\\ \small Strategic Machine, Inc\\ \small Optimized Markets, Inc.\end{tabular}} }
\begin{document}
\maketitle

\begin{abstract}%
We study the problem of finding {\em optimal correlated equilibria} of various sorts in extensive-form games: \textit{normal-form coarse correlated equilibrium (NFCCE)}, \textit{extensive-form coarse correlated equilibrium (EFCCE)}, and \textit{extensive-form correlated equilibrium (EFCE)}. 
We make two primary contributions. 
First, we introduce a new algorithm for computing optimal equilibria in all three notions. Its runtime depends exponentially only on a parameter related to the information structure of the game. 
We also prove a fundamental complexity gap: while our size bounds for NFCCE are similar to those achieved in the case of team games by \citet{Zhang22:Team_DAG}, this is impossible to achieve for the other two concepts under standard complexity assumptions. 
Second, we propose a {\em two-sided column generation approach} for use when the runtime or memory usage of the previous algorithm is prohibitive. Our algorithm improves upon the {\em one-sided} approach of \citet{Farina21:Connecting} by means of a new decomposition of correlated strategies which allows players to re-optimize their sequence-form strategies with respect to correlation plans which were previously added to the support. 
Experiments show that our techniques outperform the prior state of the art for computing optimal general-sum correlated equilibria. %
\end{abstract}

\section{Introduction}
Recent algorithms for computing {\em Nash equilibria} in zero-sum imperfect-information extensive-form games have led to breakthroughs, most notably strong agents for two-player no-limit Texas hold'em poker~\cite{Moravvcik17:DeepStack,Brown18:Superhuman}. However, in general-sum and/or multi-player games, computing Nash equilibria is hard even in normal-form games~\cite{Chen09:Settling}. Further, in real-world situations, the assumption of player strategies being {\em independent}, as posited in Nash equilibrium, might not hold true. For example, agents may share conventions, or communicate with a trusted mediator. Both of these concerns motivate the definition and computational study of notions of {\em correlated equilibria}.

In correlated equilibria, an outside {\em mediator} can recommend, but not enforce,  certain actions. More precisely, the mediator first draws a strategy {\em profile} from a publicly-agreed distribution, and recommends to each player their chosen strategy. The players may then choose whether to accept the recommendation or to {\em deviate} and play an arbitrary action instead. A {\em normal-form correlated equilibrium} (NFCE)~\cite{Aumann74:Subjectivity} is a distribution of profiles for which no player is ever incentivized to deviate. In a {\em normal-form coarse correlated equilibrium} (NFCCE)~\cite{Moulin78:Strategically,celli2019computing}, each player must choose to commit to following the recommendation {\em before} receiving it---if a player commits, she must play the recommended strategy; if she does not commit, she does not receive a recommendation.

Both above notions of correlated equilibria were originally defined only for normal-form games. More recently, \citet{Stengel08:Extensive}, and \citet{Farina20:Coarse} defined and studied notions of correlated equilibria in extensive-form games. In an {\em extensive-form correlated equilibrium} (EFCE), each player receives recommendations throughout the game at each of their decision point, and again can choose to follow or ignore the recommendation. In an {\em extensive-form coarse correlated equilibrium} (EFCCE), at each decision point, each player must commit to following the recommendation before seeing it. In both cases, a player that deviates no longer receives recommendations for the remainder of the game. We refer the reader to \Cref{fig:difference between correlated solution concepts} for a visual summary of the difference between the solution concepts.

Our focus is on computing {\em optimal} NFCCEs, EFCCEs, and EFCEs, which are the equilibria that maximize a given linear objective function. Computing optimal correlated equilibria, in any of these notions, is NP-hard in the size of the game tree, even in two-player games with chance nodes, or three-player games without chance nodes~\citep{Stengel08:Extensive}. Some special cases are known to be solvable efficiently. \citet{Stengel08:Extensive} show that in two-player games without chance moves, optimal equilibria in all three equilibrium notions can be computed in polynomial time. More recently, \citet{Farina20:Polynomial} extend the positive result to so-called {\em triangle-free games}, which strictly include all two-player games with public chance actions.

The problem of computing \emph{one} EFCE (and, therefore, one NFCCE/EFCCE) can be solved in polynomial time in the size of the game tree~\cite{huang2008computing} via a variation of the {\em Ellipsoid Against Hope} algorithm~\cite{papadimitriou2008,jiang2015polynomial}. Moreover, there exist decentralized no-regret learning dynamics guaranteeing that the empirical frequency of play after $T$ rounds is an $O(1/\sqrt{T})$-approximate EFCE with high probability, and an EFCE almost surely in the limit~\cite{celli2020no,farina2021simple}. Using regret minimizers to play large multi-player games has already led to superhuman practical performance in multi-player poker~\cite{Brown19:Superhuman}. As stated above, however, computing optimal equilibria is much harder.

Correlated equilibria have a close relationship with {\em adversarial team games}, that is, games where two teams compete against each other \cite{vonStengel97:team,celli18:computational}. An efficient algorithm for representing the space of correlated strategies of a team of players also gives an efficient algorithm for solving adversarial team games. Until recently, the state of the art for solving team games was to represent the space of correlated strategies of the team, as if to compute an extensive-form correlated equilibrium of that team~\cite{Farina21:Connecting}. Recently, \citet{Zhang22:Team_TreeDecomp} and \citet{Zhang22:Team_DAG} have developed new methods of solving team games based on {\em public states}. Their work gives a construction of the decision space of a team whose complexity is dependent on natural parameters of the game. However, their construction does not extend to general-sum correlation:  for that, we are not only interested in the player reach probabilities of the terminal states, but also, among other things, in the {\em marginal} strategies of each individual player. This difference, as we will explain, creates a \emph{critical separation between adversarial team games and general-sum correlation}.

\paragraph{Contributions and paper structure}

This paper makes a number of contributions related to the computation of \emph{optimal} (\textit{i.e.}, one that maximizes a given linear objective function, such as social welfare or any weighted sum of expected player utilities) NFCCE, EFCCE, and EFCE in general multi-player general-sum extensive-form games. At a high level, we distinguish between \emph{conceptual}, \emph{complexity-theoretic}, and \emph{algorithmic} contributions.
\begin{itemize}
\item[-] \emph{Conceptual contributions.} At the conceptual level, we show that the problem of computing an optimal NFCCE, EFCCE, and EFCE, can be converted into the problem of computing an optimal strategy for a player in a suitably-constructed  game. The equivalent game, which we call a \emph{mediator-augmented game}, explicitly captures the decision problem that each player would face if the correlation device were an explicit player in the game, called the \emph{mediator}. The action space of the mediator depends on the solution concept being analyzed: NFCCE, EFCCE, or EFCE. 

While the mediator-augmented formalism greatly simplifies the treatment---providing what we hope will be an important conceptual framework for further analysis of these solution concepts---this game reformulation preserves the computational aspects of computing an optimal equilibrium, including their hardness aspects. Indeed, a key point regarding the mediator-augmented game is that the mediator faces \textit{imperfect recall}. This is because the mediator cannot leak information across the players, so the mediator has to forget what it has observed about the other players when making a recommendation to a given player. Otherwise, the mediator's recommendations would not form a correlated profile at all, much less any equilibrium.  

Optimizing for the strategy of an imperfect-recall player (here, the mediator) is known to be hard~\cite{Koller92:Complexity,Chu01:NP}. To tackle the issue, in our paper we study effective extended formulations (in the mathematical programming sense, \textit{e.g.}, \cite{Conforti10:Extended}) of the decision space of the mediator, by removing the imperfect recall at the expense of a (worst-case exponential) increase in the number of decision points for the mediator player. 

\item[-] \emph{Complexity-theoretic contributions.} We then proceed to show how certain recent results regarding parameterized complexity of imperfect-recall decision problems can be applied to the mediator-augmented game. A critical technical step in applying those results lies in characterizing the complexity of the \emph{public states} of the decision problem faced by the mediator in the mediator-augmented game as a function of the original (not mediator-augmented) input game. Specifically, we give bounds on the size of the public states of mediator-augmented games for each of the solution concepts as a function of the depth $d$, the maximum branching factor $b$, and a suitably-defined \emph{information-complexity} $k$ of the input game that is independent of the solution concept. %
However, our overall complexity bounds are different depending on the solution concept: the bound for NFCCE in particular does not depend exponentially on the depth of the game, whereas the bounds for EFCCE and EFCE do. We show that this difference is inherent, therefore contributing new complexity-theoretic separations between the solution concepts.
\begin{itemize}
    \item[i.] We show that an optimal EFCE in an extensive-form game can be computed by solving a linear program of size $O^*((bd)^k)$, where the notation $O^*$ suppresses factors polynomial in the size of the game (\Cref{th:dag size}). For optimal EFCCE and optimal NFCCE, we establish bounds of $O^*((b+d-1)^k)$ and $O^*((b+1)^k)$, respectively.
    \item[ii.] In games with {\em public player actions}, we show that the bounds for NFCCE and EFCCE can be further improved to $O^*(3^k)$ and $O^*(d^k)$, respectively (\Cref{th:public actions}). We show that the bound for EFCE \textit{cannot} be improved in this manner.
    \item[iii.] In {\em two-player} games with {\em public chance actions}, our algorithm runs in polynomial time (\Cref{th:pubchance}) for all three solution concepts. The problem in this setting had already been shown to be solvable in polynomial time using a different technique by \citet{Farina20:Polynomial}; we match their results and discuss the relationship between our algorithm and theirs in \Cref{se:pubchance-discussion}.
    \item[iv.] We show that the gap between the NFCCE bound and the EFCCE and EFCE bounds is fundamental. Matching the bound for NFCCE---in particular, removing the dependence on $d$---is impossible for EFCCE and EFCE under standard complexity assumptions, demonstrating a \emph{fundamental complexity-theoretic gap} for coarse correlation between normal and extensive form (\Cref{th:w1-hardness}).
\end{itemize}
\revision{When $k$ is a constant, our algorithms are, to our knowledge, the first efficient algorithms for the problem of computing an optimal correlated equilibrium in any of the three solution concepts. A comparison between our complexity results and those of past papers~\cite{Stengel08:Extensive,Farina20:Polynomial} can be found in \Cref{tab:summary}.}

\item[-] \emph{Algorithmic contributions.} We propose two main algorithms for computing optimal correlated equilibria in all three solution concepts.
\begin{itemize}
\item[i.] We operationalize the positive complexity results established above (\Cref{th:dag size,th:public actions,th:pubchance}) via \Cref{al:dag}. It computes an optimal strategy for the mediator in the mediator-augmented game via linear programming. At its core, 
the algorithm is based on the idea that the imperfect-recall strategy space of the mediator is the projection of the set of flows in a suitable high-dimensional directed acyclic graph (DAG), called  the {\em team belief DAG}~\cite{Zhang22:Team_DAG}. 
To our knowledge, this characterization of the complicated polytope of feasible correlated equilibria as the projection of a simpler set of flows in a higher dimension is the first example of an extended formulation (in the mathematical programming sense, \textit{e.g.}, \cite{Conforti10:Extended}) for these solution concepts.

One cannot directly apply the fixed-parameter results of \citet{Zhang22:Team_DAG}, as that would result in a worse bound. Instead, the above results are proven by carefully analyzing the size of the resulting construction with the special structure of the mediator-augmented games in mind.

\item[ii.] We propose a new practical approach to computing optimal correlated equilibria which we call \emph{two-sided column generation} (\Cref{se:colgen}).  We start by deriving an LP formulation based on the strategy polytope of~\citet{Stengel08:Extensive} and on the notion of \emph{semi-randomized correlation plan} introduced by~\citet{Farina21:Connecting} in the context of team games.
In the latter of those two prior approaches, one player is chosen to play a normal-form strategy and the other plays a mixed (sequence-form) strategy. Our approach improves upon this by allowing the master LP to {\em select} which player is chosen to play the mixed strategy, thereby increasing the space of correlation plans that can be represented for any given support, and leading to a tighter master problem. In practice, we find that this change yields a speed improvement over the algorithm of \citet{Farina21:Connecting} in almost all of the games tested, and this speed improvement can be greater than two orders of magnitude.

\end{itemize}

\noindent Our two solving techniques are complementary: where the parameter $k$ is small, writing out the DAG is superior; where it is large, the two-sided column generation is faster and more frugal in its memory usage. Furthermore, the value of $k$ can be easily computed, enabling an efficient choice between these two approaches. In experiments (\Cref{se:experiments}), we demonstrate state-of-the-art practical performance compared to prior state-of-the-art techniques with at least one, and sometimes both, of our techniques.  We also introduce two new benchmark games: a 2-vs-1 adversarial team game we call the {\em tricks game} which is the trick-taking (endgame) phase of the card game bridge, and the {\em ride-sharing game} in which two drivers seek to earn points by serving requests across a road network modeled as an undirected graph. In the tricks game, we demonstrate empirically that, even for small endgames with only three cards per player remaining, relaxing the game to be perfect information---as so-called {\em double dummy} bridge endgame solvers do (\textit{e.g.},~\citep{ginsberg1999gib})---causes incorrect solutions and game values to be generated, demonstrating the need for imperfect-information game analysis.
\end{itemize}
\begin{table}
\centering\smaller
    \begin{tabular}{p{4.5cm}cccc}
    \toprule
        & \multicolumn{4}{c}{\bf Game class}\\
        \textbf{Algorithm} & No chance & Public chance & Triangle-free & Information complexity $k$ \\
         \midrule
 \citet{Stengel08:Extensive} & poly & --- & --- & ---\\
    \rowcolor{gray!20}
       \citet{Farina20:Polynomial} & poly & poly & poly & --- \\
       \makecell[l]{Correlation DAG\\{}[\textbf{this paper}]} & \makecell{poly \\ (\Cref{th:pubchance})} & \makecell{poly \\ (\Cref{th:pubchance})} & \makecell{exp \\~} & \makecell[c]{$O^*((bd)^k)$ \\ (\Cref{th:dag size})} \\
       \bottomrule
    \end{tabular}
    \caption{
\revision{Comparison of results in our paper with previously-known results about the computation of optimal correlated equilibria. `exp' and `poly' mean exponential time and polynomial time, respectively. `---' means that the analysis of that paper cannot handle that class of games. All no-chance games are public-chance (trivially), and all public-chance games are triangle-free~\cite{Farina20:Polynomial}. Both these inclusions are strict. The correlation DAG algorithm requires the game to be timeable. The $O^*((bd)^k)$ result is for EFCE; the bound is better for EFCCE and NFCCE. Column generation has poor theoretical guarantees but can work well in practice compared to the correlation DAG, especially when the information complexity, $k$, is large. }}\label{tab:summary}
\end{table}

\section{Preliminaries}\label{se:prelim}
In this section, we review common notions for correlation in extensive-form games.

\subsection{Extensive-Form Games}

We start with the definition of (imperfect-information) extensive-form games, that is, tree-form games in which players might not observe all actions.

\begin{definition}
    An {\em extensive-form game} $\Gamma$ with $n$ {\em players}, which we will identify with the positive integers $[n] = \{1, \dots, n\}$ consists of the following:
    \begin{enumerate}
        \item A rooted tree of {\em nodes} $\mc H$, where the edges are labelled with {\em actions}.  The root node of $\mc H$ will be denoted $\Root$. The set of leaves, or {\em terminal nodes} in $\mc H$ will be denoted $\mc Z$. The set of actions at a node $h \in \mc H$ will be denoted $A_h$. The child reached by following action $a$ at node $h$ will be denoted $ha$.
        \item A partition $\mc H_0, \mc H_1, \dots, \mc H_n$ of the set of nonterminal nodes, where $\mc H_i$ for $i > 0$ is the set of decision nodes of player $i$ and nodes in $\mc H_0$ are chance nodes.
        \item For each player $i \in [n]$, a partition $\mc I_i$ of $\mc H_i$ into {\em information sets}, also known as {\em infosets} for short. The set of actions at every node in a given infoset $I$ must be the same, and we will denote it $A_I$.
        \item For each player $i \in [n]$, a {\em utility vector} $\vec u_i \in \R^{\mc Z}$, where $u_i[z]$ is the utility that player $i$ achieves upon reaching terminal node $z$.
        \item For each chance node $h \in \mc H_0$, a fixed distribution $p(\cdot \mid h)$ over $A_h$. We will use $p(z)$ to denote the probability that chance plays all actions on the path from root to $z$.
    \end{enumerate}
\end{definition}

Information sets $I \in \mc I_i$ contain all those nodes that Player $i$ cannot distinguish among when acting at those nodes. This is further elucidated in the following example.

\begin{example}\label{example:efg}
    As an example, consider the example game of \Cref{fig:example-game}. The game has two players ($n=2$), whose nodes are pictorially marked with \pone for Player 1 and \ptwo for Player 2 respectively, and 19 nodes (denoted {\nodefont a} through {\nodefont s}), of which nine ({\nodefont a} through {\nodefont i}) are nonterminal. The root node is a chance node, at which the chance player moves uniformly at random. Being the only chance node, it follows that $\mc H_0 = \{\text{\nodefont a}\}$. Player 1 (\pone) observes the outcome of the chance node, and can pick between a left or a right action. Player 2 (\ptwo) however does \emph{not} observe the outcome of the chance node; rather, the player only observes the choice of Player 1. This imperfect knowledge of the state is encoded by the information partition $\mc I_2$ of Player 2, which contains the two information sets $\{\{\text{\pfont2 d}, \text{\pfont2 e}\}, \{\text{\pfont2 f}, \text{\pfont2 g}\}\}$, denoted in the figure with dotted lines connecting the nodes in the same information set. If the game hits state {\pfont2 d}, then Player 1 (\pone) gets to play a second move. However, Player 1 will not observe the action chosen by Player 2 at {\pfont2 d}; this is captured again by the information set $\{\text{\pfont1 h},\text{\pfont1 i}\}$. Nodes \pfont1 b and \pfont1 c do not bear any uncertainty, and are therefore singleton elements in their corresponding information sets. In summary, the information partitions of the players are $\mc I_1 = \{\{\text{\pfont1 b}\}, \{\text{\pfont1 c}\}, \{\text{\pfont1 h},\text{\pfont1 i}\}\}$ and $\mc I_2 = \{\{\text{\pfont2 d}, \text{\pfont2 e}\}, \{\text{\pfont2 f}, \text{\pfont2 g}\}\}$.
    At terminal nodes, the payoffs for \pone, \ptwo are listed below the node. \ptwo has utility zero at every terminal node. Examples of correlated equilibria for this game are given in \Cref{se:example}.
\end{example}
\begin{figure}[t]
    \centering
    
\def\figscale{.9}

\tikzset{
    every node/.style={draw, text height=1ex, text depth=0pt, inner ysep=6pt, outer sep=0pt, font=\nodefont, align=center},
    p1/.style={
            regular polygon,
            regular polygon sides=3,
            rounded corners=1.2,
            inner sep=2pt, fill=p1color, draw=none, text=white},
    p2/.style={p1, shape border rotate=180, fill=p2color},
    med/.style={circle,inner sep=3pt, fill=p4color, draw=none},
    infoset1/.style={-, densely dotted, ultra thick, color=p1color},
    infoset/.style={infoset1, color=p2color},
    util/.style={},
    terminal/.style={text height=2.5ex, draw=none},
    played/.style={ultra thick, p3color},
    iiname/.style={midway,draw=none,p#1color,fill=white}
}

\forestset{
default preamble={for tree={s sep=.5cm}},
parent/.style={no edge,tikz={\draw (#1.parent anchor) to (!.child anchor);}},
}

\newcommand{\utilformat}[1]{\textbf{\textsf{#1}}}
\newcommand{\util}[2]{{\begin{tabular}{c}{#1}\\{\textcolor{p1color}{#2},\textcolor{p2color}{0}}\end{tabular}}}
    \scalebox{\figscale}{
\begin{forest}
[a,tier=1,no edge
    [b,tier=2,p1,name=b
        [d,tier=3,name=d,p2
            [h,p1,tier=4,name=h
                [\util p2,tier=5,terminal]
                [\util q1,tier=5,terminal]
            ] 
            [i,p1,tier=4,name=i
                [\util r0,tier=5,terminal]
                [\util s1,tier=5,terminal]
            ] 
        ]
        [e,tier=3,name=e,parent=c,p2
            [\util j0,tier=4,terminal]
            [\util k1,tier=4,terminal]
        ]
    ]
    [c,tier=2,p1,name=c
        [f,tier=3,name=f,parent=b,p2
            [\util l1,tier=4,terminal]
            [\util m0,tier=4,terminal]
        ]
        [g,tier=3,name=g,p2
            [\util n0,tier=4,terminal]
            [\util o1,tier=4,terminal]
        ]
    ]
]
\draw[infoset] (d) -- (e);
\draw[infoset] (f) -- (g);
\draw[infoset,p1color] (h) -- (i);
\end{forest}
}
\vspace{1ex}
    \caption{An example game, between two players \pone (P1) and \ptwo (P2). The root node is a chance node, at which chance moves uniformly at random. Dotted lines connect nodes in the same information set. Bold lowercase letters are the names of nodes. We will refer to infosets by naming all the nodes within them; for example, \pfont1{b} and \pfont2{de} are infosets. At terminal nodes, the utility of \pone is listed below the name of the node. \ptwo has utility zero at every terminal node, and in this game the only role of \ptwo is to incentivize \pone to act in a certain way.}
    \label{fig:example-game}
\end{figure}

We will use $\preceq$ to denote the precedence relation induced by a tree. For example, $h \preceq h'$ if $h$ is an ancestor of $h'$ in the tree. If $S$ and $S'$ are sets of nodes, we will use $S \preceq h$ or $S \succeq h$ to mean that there exists $s \in S$ for which $s \preceq h$ or $s\succeq h$ (respectively), and $S \preceq S'$ to mean that there exist $h \in S$ and $h' \in S'$ with $h \preceq h'$. We will use $h \land h'$ to denote the lowest common ancestor of $h$ and $h'$.

The {\em sequence} $\sigma_i(h)$ of player $i$ at node $h$ are the sequence of information sets reached and actions played by $i$ on the root $\to h$ path, {\em not} including the infoset at $h$ itself even when $h$ is a decision node of player $i$. We assume that every player has {\em perfect recall}---that is, at every player $i$ infoset $I$, every $h \in I$ has the same sequence, denoted $\sigma_i(I)$. The set of sequences of player $i$ will be denoted $\Sigma_i := \{ \sigma_i(h) : h \in \mc H \}$. The {\em empty sequence} of player $i$, $\sigma_i(\Root)$, will be denoted $\Root_i$.

In perfect-recall games, a sequence can be identified with infoset-action pair $Ia$. We will use this identification, and moreover, we will identify $Ia$ with the set of nodes $Ia := \{ ha : h \in I \}$. This will allow us to make use of statements such as ``the nodes of $Ia$ are one level deeper than those of $I$''.

A {\em pure strategy} for a player $i$ is an assignment of one action to each information set $I \in \mc I_i$. %
The {\em sequence form representation} of a pure strategy is the vector $\vec x_i \in \zo^{\Sigma_i}$, where $\vec x_i[\sigma_i] = 1$ if player $i$ plays every action on the path from $\Root_i$ to $\sigma_i$.  We will use $\Pi_i$ to denote the set of all sequence-form pure strategies.

For infosets or nodes $v$, we will use $\vec x_i[v]$ as overloaded notation for $\vec x_i[\sigma_i(v)]$. If $\vec x_i[v] = 1$, we say that $\vec x_i$ {\em plays to} $v$. A {\em mixed strategy}, also denoted $\vec x_i$, is a distribution over pure strategies. The sequence form of a mixed strategy is the appropriate convex combination of sequence forms of pure strategies. The set of mixed strategies of player $i$ is denoted by $\X_i = \co \Pi_i$. For {\em perfect-recall games}, $\X_i$ is a convex polytope characterized by a linear constraint system of size $O(\abs{\Sigma_i})$, containing one constraint for each information set~\cite{Romanovskii62:Reduction,Koller94:Fast,Stengel96:Efficient}. Note that the sequence form is also well-defined, and is still a convex polytope, even for {\em imperfect-recall} players; however, in this setting, unless $\mathsf{P = NP}$, the smallest constraint system defining the polytope may be exponential~\cite{Koller92:Complexity}.

A {\em pure profile} $\vx = (\vx_1, \dots, \vx_n)$ is a collection of pure strategies, one per player. For a pure profile $\vx$, we define $\vx[h] := \prod_{i \in N} \vx_i[h] \in \zo$ to be the indicator that all players  play all actions on root $\to h$ path.
The {\em expected utility} of player $i$ under $\vx$  is $u_i(\vx) := \E_{z \sim \vec x} u_i[z]$, where $z \sim \vx$ denotes sampling a terminal node $z$ by following the profile $\vec x$, where, since $\vec x$ is pure, the expectation is over nature's actions. A {\em correlated profile} $\mu$ is a distribution over pure profiles. %

The $\ell$th {\em layer} of the game tree consists of all nodes exactly distance $\ell$ from $\Root$. That is, layer $0$ contains only the root, the layer $1$ contains all children of the root, and so on. The {\em depth} of the game tree is the largest $\ell$ for which layer $\ell$ is nonempty. We will call an extensive-form game {\em timeable} if no infoset contains nodes in multiple layers. This is a fairly mild assumption commonly used in the extensive-form game literature (see, \textit{e.g.}, \cite{Jakobsen16:Timeability} for a discussion) and (implicitly) universal in the reinforcement learning literature. Throughout this paper, unless otherwise stated, we consider only timeable games. 

A table summarizing the notation used in this paper can be found in the appendix (\Cref{tab:notation}).

\subsection{Correlated Equilibria in Games}

Most notions of correlated equilibria in extensive-form games, including {\em normal-form coarse correlated equilibrium} (NFCCE), {\em extensive-form coarse correlated equilibrium} (EFCCE), and {\em extensive-form correlated equilibrium} (EFCE), can be thought of as correlated strategies of play that can be {\em enforced by a mediator}. The mediator first computes and publicly announces a correlated profile $\mu$. Then, {\em privately}, the mediator selects a pure profile $\vx \sim \mu$. Then, whenever a player $i$ reaches an infoset $I$, the mediator gives a {\em recommendation} that $i$ play \revision{the action played by $\vx_i$ at $I$}. The player may also choose to {\em deviate}, in which case they do not need to follow the recommendations of the mediator, but the mediator also no longer {\em gives} recommendations for the remainder of the game. The different notions of correlation are separated by what types of deviations are allowed (see also \Cref{fig:difference between correlated solution concepts}).
\begin{itemize}
    \item In NFCCE, a player may only deviate at the very beginning of the game. If she chooses not to deviate, she must follow all mediator recommendations for the whole game.
    \item In EFCCE, a player may deviate at each of her infosets {\em before} seeing a recommendation. However, if she chooses not to deviate, she must play the recommended action.
    \item In EFCE, a player may deviate at each of her infosets {\em after} seeing a recommendation, by instead playing a different action.
\end{itemize}

\begin{figure}[htp!]
\centering\scalebox{0.92}{
    \begin{tikzpicture}[xscale=4,yscale=2.2]
      \draw[very thick,->] (0, 0) --node[xshift=-2.99cm,yshift=-1mm,rotate=90]{\textbf{What is revealed?}} (0, 1.4);
      \draw[very thick,->] (0, 0) --node[yshift=-1.2cm]{\textbf{When is the commitment to follow made?}} (2.4, 0);
      \draw[gray,dashed, thin] (0, .7) -- (2.25, .7);
      \draw[gray,dashed, thin] (1.125, 0) -- (1.125, 1.4);

      \node[anchor=north] at (.5, 0) {\begin{minipage}{4.5cm}\centering\small Before seeing the\\ recommendation\end{minipage}};
      \node[anchor=north] at (1.75, 0) {\begin{minipage}{4.5cm}\centering\small After seeing the\\ recommendation\end{minipage}};

      \node[anchor=east] at (0, .4) {\begin{minipage}{2.4cm}\small\centering Single move\\incrementally\end{minipage}};
      \node[anchor=east] at (0, 1) {\begin{minipage}{2.4cm}\small\centering Whole strategy\\upfront\end{minipage}};

      \node[] at (0.5625, .35) {\begin{minipage}{4cm}
                                  \centering \textbf{EFCCE}\\
                                  \centering\small{\citet{Farina20:Coarse}}
                                \end{minipage}};
      \node[] at (1.8, .35) {\begin{minipage}{4.5cm}
                                  \centering \textbf{EFCE}\\
                                  \centering\small{\citet{Stengel08:Extensive}}
                                \end{minipage}};
      \node[] at (0.5625, 1.05) {\begin{minipage}{3.6cm}
                                  \centering \textbf{NFCCE}\\
                                  \centering \small{\citet{Moulin78:Strategically}}
                                \end{minipage}};
      \node[] at (1.8, 1.05) {\begin{minipage}{4.2cm}
                                  \centering \textbf{NFCE}\\
                                  \small{\citet{Aumann74:Subjectivity}}
                                \end{minipage}};
    \end{tikzpicture}}
    \caption{%
    Comparison of different notions of correlation in extensive-form games.
    }
    \label{fig:difference between correlated solution concepts}
\end{figure}

The fourth notion of equilibrium, called {\em  normal-form correlated equilibrium} (NFCE), is often known as simply the {\em correlated equilibrium}. In NFCE, the mediator tells each player her entire pure strategy $x_i$ at the start of the game, at which point the player may choose to deviate. It is known computing optimal NFCEs is NP-hard even in two-player games without chance nodes (unlike for the three notions we study in this paper)~\cite{Stengel08:Extensive}, making it a distinctly difficult problem that is out of the scope of this paper. Thus, throughout this paper, we use {\em ``correlated equilibrium''} to generically refer to any of the three notions of correlated equilibrium that we investigate.

\paragraph{Triggers} To formalize these notions, we use the language of {\em deviations} introduced by \citet{Gordon08:No}.
Each deviation consists of a \emph{trigger} and a \emph{continuation strategy}, which specifies the behaviour of the player when they decide to deviate from the mediator's recommendation. The trigger determines the point of the game in which the deviating player stops following the recommendation to start playing as prescribed by the continuation strategy. 
Each of the solution concepts that we consider has a different set of triggers. In an NFCCE each player is allowed to deviate only at the beginning of the interaction, before any recommendation is observed. Therefore, each player $i$ will have the empty sequence $\Root_i$ as their trigger. In an EFCCE triggers are the information sets of the game, while in an EFCE players may get triggered after observing a specific action recommendation at a specific information set of the game. 

\begin{definition}
    A {\em trigger} $\tau$ is:
    \begin{itemize}
        \item for NFCCE, the empty sequence $\Root_i$ for some player $i \in [n]$;
        \item for EFCCE, an infoset; and
        \item for EFCE, a sequence.
    \end{itemize}
\end{definition}

Given a solution concept $c \in \{\nfcce, \efcce, \efce\}$, we denote by $\mc T^c$ the set of all triggers for that concept, and $\mc T^c_i$ the set of all triggers of player $i$. %
Given a trigger $\tau \in \mc T^c$, we use $\bar\tau$ to denote where $\tau$ can be activated. That is, $\bar\tau = I$ if $\tau = Ia$ is a non-root sequence, or else $\bar\tau = \tau$. We must make this distinction because EFCE triggers are activated not by reaching a part of a game tree, but by receiving a recommendation $a$ {\em after} reaching  a part of the game tree. We use $\Sigma^{\bar\tau}_i$ to denote the set of all sequences $\sigma \succeq \bar\tau$ of player $i$.

A (pure) {\em continuation} $\vx_i' \in \{0, 1\}^{\Sigma^{\bar\tau}_i}$ following a trigger $\tau$ of player $i$ is a pure strategy defined on all infosets $I \succeq \bar\tau$. In sequence form, $\vx_i'$ is indexed by sequences $\sigma \succeq \bar\tau$, and $\vx_i'[\sigma]=1$ if the player plays all actions on the path from $\bar\tau$ to $\sigma$. Mixed continuation strategies are defined analogously. %

\paragraph{Deviations} A pair $(\tau, \vx_i')$, consisting of a trigger $\tau$ of player $i$ and a pure continuation $\vx_i'$ following $\bar\tau$, defines a {\em deviation} $\phi^{(\tau, \vx_i')} : \Pi_i \to \Pi_i$ in the following manner: $\phi^{(\tau, \vx_i')}(\vx)$ is the pure strategy that plays according to the original strategy $\vx_i$ unless it prescribes $\tau$, in which case it replaces it strategy with the continuation $\vx_i'$ wherever the latter is defined. Formally,
\begin{align}
    \phi^{(\tau, \vx_i')}(\vx)[\sigma] := \begin{cases}
                           \vx_i'[\sigma]  & \qif I \succeq \bar\tau \text{ and } x_i[\tau] = 1 \\
                           \vx_i[\sigma] & \qq{otherwise}
                       \end{cases}
\end{align}

\begin{definition}
    Given a correlated profile $\mu$, a deviation $\phi$ of a player $i$ is {\em profitable} if the deviating player improves its expected utility: $\E_{\vx \sim \mu} u_i(\phi(\vx_i), \vx_{-i}) > \E_{\vx \sim \mu} u_i(\vx)$.
\end{definition}
\begin{definition}\label{def:equilibria mu}
    NFCCEs, EFCCEs, and EFCEs are correlated profiles $\mu$ that have no profitable deviations of their respective types.
\end{definition}

Here, in deciding whether to deviate, the players have common knowledge of the correlated profile $\mu$ from which their recommendations are drawn.

Given an objective function $g : \mc Z \to \R$, we say that an equilibrium  $\mu$ is {\em optimal} with respect to an objective $g : \mc Z \to \R$ if $\mu$ maximizes the expected objective value $\E_{\vx \sim \mu, z \sim \vx} g(z)$ among all equilibria of the same notion.

\begin{remark}
    The number of triggers available to a given player will play a fundamental role in the complexity of computing a solution according to each of the three solution concepts. In particular, for NFCCE, each player has only one trigger ($\Root_i$), whereas for EFCCE and EFCE, the number of triggers for each player depends on the depth of the game. We will see in \Cref{se:dag} that this difference results in a fundamental gap: under reasonable assumptions, an optimal NFCCE can be computed faster than an optimal EFCCE or an optimal EFCE.
\end{remark}

\subsection{Example of Solution Concepts}\label{se:example}
In this section, we give an example that illustrates the difference between NFCCE, EFCCE, and EFCE. Consider the extensive-form game in \Cref{fig:example-game}. As described in \Cref{example:efg}, this game represents a signalling game between two players, \pone and \ptwo. \ptwo has no rewards and will therefore never have incentives to deviate from recommendations. \pone scores a point if \ptwo plays the same action as chance played at the root, but chance's action is only privately revealed to \pone, so \ptwo relies on \pone to signal the chance action through \pone's own action. \pone also has the opportunity to receive a bonus point for guessing \ptwo's action in case \pfont2d is reached.

\newcommand{\pstrat}[5]{{\fbox{\text{\pfont1{#1#2}\pfont2{#3#4}\pfont1{#5}}}}}
We will refer to the pure profiles in this game using the notation
\pstrat bcdfh, where the letters indicate which actions were played at the respective infosets containing those nodes. For example, \pstrat LRLRL means that \pone plays left at \pfont1b, right at \pfont1c, and left at infoset \pfont1{hi}; while \ptwo plays left at \pfont2{de} and right at \pfont2{fg}---in particular, \pone copies chance, and \ptwo copies \pone. If \pone plays right at \pfont 1b, we leave \pone's action at \pfont 1{hi} unspecified since it is irrelevant; for example, \pstrat RLRL{} is a valid pure strategy.

We make the following observations about our example game.
\begin{itemize}
    \item The correlated profile $\mu_1 := \frac12 \pstrat LRLRR + \frac12 \pstrat RLRL{}$ is an NFCCE: \pone is getting utility $1$, which is larger than any utility it can get by unilaterally deviating without seeing any recommendations: since \ptwo's marginal strategy is uniform random, a best unilateral deviation for \pone is to always play left, securing expected utility $3/4$. However, $\mu_1$ is not an EFCCE, because \pone can profitably deviate at trigger \pfont1{hi} by playing left instead of right. This deviation cannot be expressed as an NFCCE deviation, because it requires \pone to follow recommendations at \pfont1{b} and \pfont1{c}.
    \item The correlated profile $\mu_2 := \frac12 \pstrat LRLRL + \frac12 \pstrat RRRR{}$ is an EFCCE. \pone still gets total expected utility 1. \pone is already getting the optimal utility at \pfont1c and \pfont1{hi}; and at \pfont1b, \pone is currently getting a conditional utility of $1$, and she cannot improve upon this without seeing the recommendation at \pfont1b. However, $\mu_2$ is not an EFCE, because \pone can profitably deviate upon being recommended to play right at \pfont1b by instead playing left at \pfont1b and right at \pfont1{hi}. This deviation cannot be expressed as an EFCCE deviation, because, in the deviation, \pone conditions her action at infoset \pfont1{hi} on the recommendation that she received at \pfont1b.
    \item The pure profile $\pstrat LRLRL$ is an EFCE (in fact, being uncorrelated, it is a Nash equilibrium).
\end{itemize}

\section{Unifying Correlated Solution Concepts via Mediator-Augmented Games}\label{se:cp}

As mentioned in the previous section and summarized in \Cref{fig:difference between correlated solution concepts}, different correlated solution concepts for extensive-form games differ in what the mediator (correlation device) reveals to the players, and whether the players' choices to commitment to follow the recommended behavior happen before or after observing the recommendation. These differences not only materialize in different equilibrium sets, but---as we will show later in this paper---also in complexity barriers that separate the solution concepts. Consequently, a unified treatment of these solution concepts needs to be approached with care.

In this section, we define augmented games in which the mediator is made explicit, which will be pivotal to our main results. Prior to presenting a precise formalization of the notion of augmented game, we provide some intuition about how the game is constructed, using the illustrative example in \Cref{fig:example-game}. During this phase, our primary objective is to provide a straightforward intuition about the construction process, deliberately omitting certain significant details that will be formally defined in \Cref{def:augmented game}. The augmented game explicitly represents players' choices regarding whether to adhere to mediators' recommendations or to deviate from them. Consequently, the augmented games will have different structures depending on which solution concept is desired---we will define one augmented game $\Gamma^c$ for each of our target solution concepts $c$. \Cref{fig:gamma c} summarizes the main connections between the computation of an optimal correlated concept $c$ (for instance, EFCE) in the original game $\Gamma$, and the computation of a Stackelberg equilibrium in the mediator-augmented game $\Gamma^c$ corresponding to $c$. \Cref{fig:augmented-example} depicts the augmented games derived from the example of \Cref{fig:example-game} for the three solution concepts of interest.

\begin{figure}[h!]
    \centering
    \begin{tikzpicture}
        \filldraw[draw=black,rounded corners=5pt,fill=black!5] (-7.26,-3.7) rectangle (-0.01, 0);
        \filldraw[draw=black,rounded corners=5pt,fill=black!5] (1.41,-3.7) rectangle +(7.25, 3.7);
        \node[draw,fill=black!12,semithick,text width=6.99cm,inner ysep=.10cm,align=center,anchor=east] (X) at (0,0) {Mediator-augmented game $\Gamma^c$};
        \node[draw,fill=black!12,semithick,text width=6.99cm,inner ysep=.10cm,align=center,anchor=west] (Y) at (1.4,0) {Original game $\Gamma$};
        \draw[implies-implies,double equal sign distance,thick] (0,-2) -- +(1.4,0);
        \node[anchor=north] at (.7,-.3) {
            \renewcommand\arraystretch{1.1}
            \setlength{\tabcolsep}{0.95cm}\small
            \begin{tabular}{@{}p{6.8cm}p{6.8cm}@{}}
                $\bullet$ Strategy for mediator $\vec\xi$ & $\bullet$ Correlation plan \\
                $\bullet$ Strategy for deviator player $\vec x^c_i$ & $\bullet$ Deviation strategy from recommendation \\
                $\bullet$ Obedient strategy for deviator $\vec{o}_i$ & $\bullet$ Identity deviation (\textit{i.e.}, no deviation) \\
                $\bullet$ Utility of mediator $\vec g$ & $\bullet$ Optimization objective (\textit{e.g.}, social welfare) \\ 
                $\bullet$ Utility of deviator $i$ & $\bullet$ Utility of $i$ in $\Gamma$ \\
                $\bullet$ $\vec\xi$ such that $(\vec o_i)_{i\in[n]}$ is a Nash strategy profile &  $\bullet$ Solution concept $c$ (\textit{e.g.}, EFCE) \\ 
                $\bullet$ Stackelberg equilibrium (solution to \eqref{eq:orig}) & $\bullet$ Optimal $c$ (\textit{e.g.}, welfare-maximizing EFCE)
            \end{tabular}
        };
    \end{tikzpicture}
    \caption{Correspondence between notions in the mediator-augmented game, and notions in the original game.}
    \label{fig:gamma c}
\end{figure}

In all three augmented games, the mediator has {\em imperfect recall}. This is crucial to correctly capture the correlated solution concepts. The imperfect recall is necessary for the one-to-one correspondence between {\em 
mixed strategies for the mediator} in the augmented game, and {\em correlated profiles of the players} in the original game. Intuitively, this is because the mediator's decisions in the augmented game correspond to {\em recommendations} in the original game, and therefore the mediator must pick one and only one recommendation in each information set. Thus, the mediator must have one infoset in the augmented game corresponding to each infoset in the original game. If the mediator were to have perfect recall, it would have the ability to ``break'' information sets by sending recommendations to a player that depend on information not known to that player. Therefore, there could be a strategy for the mediator that does not correspond to a strategy profile in the original game. 

\begin{description}
\item{\textbf{NFCCE.}} In the case of NFCCE (\Cref{fig:augmented-example}, top), the augmented game has an initial phase in which \pone and \ptwo decide whether to deviate or obey to the mediator. Only one player is allowed to deviate in the game. When player \pone (resp., \ptwo) deviates, all subsequent infosets will belong to either \pone (resp., \ptwo) or to the mediator. The mediator takes decisions on behalf of the obedient player. If both players are obedient (see the subtree with leaf nodes {\sf p,q,r,s,j,k,l,m,n,o}), then all decisions after the initial phase are taken by the mediator. %

\item{\textbf{EFCCE.}} In the case of EFCCE (\Cref{fig:augmented-example}, middle) we can reason as follows: starting from the root of the original game $\Gamma$, we replace each information set of player \pone or \ptwo with three new infosets. The first one is a parent infoset modelling the decision of the player to obey or to deviate at the original infoset in $\Gamma$. The two children information sets encode the decision to be taken at the original infoset of $\Gamma$ being replaced. The new infoset following from the decision of the player to obey (at the parent infoset) belongs to the mediator, who takes the action on behalf of the player. The new infoset following from the decision of the player to deviate (at the parent infoset) belongs to the deviating player, and it allows them for choosing the desired deviation. As before, after one player deviated, all the subsequent information sets belong to that player or to the mediator. 

\item{\textbf{EFCE.}} In the case of EFCE (\Cref{fig:augmented-example}, bottom), each of the original infosets $I$ of $\Gamma$ is duplicated and preceded by an information set of the mediator explixitly encoding the recommendation being issued at $I$. After observing the recommendation, the player decides whether to deviate or not.  We note that actions within the same mediator's information set can represent recommendations as well as actions taken on behalf of the player. This is contingent upon whether the other player previously made a deviation or not. The information available to the mediator when recommending actions or acting on behalf of a player remains identical to what the player would have had in the original game.
\end{description}

Following this intuition, given a game $\Gamma$, we define the \emph{augmented game} $\Gamma^c$ corresponding to solution concept $c$ as follows.

\begin{definition}\label{def:augmented game}
    Given an extensive-form game $\Gamma$, a solution concept $c$, and an objective function $g : \mc Z \to \R$, the {\em augmented game} $\Gamma^c$ is defined as follows.
    \begin{itemize}
        \item {\em Players.} $\Gamma^c$ has $n+1$ players: the $n$ players in $\Gamma$, and a {\em mediator}.
        \item {\em Histories.} Unless otherwise stated, histories in $\Gamma^c$ are identified with tuples $(h, a, \tau)$, where:
              \begin{itemize}[label={\scalebox{.8}{\textbullet}}]
                  \item $h$ is a history in $\Gamma$,
                  \item $a$ is either nothing $(\bot)$, a special symbol $*$, or an action $a \in A_h$, and
                  \item $\tau$ is either nothing $(\bot)$ or a trigger.
              \end{itemize}
        Intuitively, the three components of the history represent the following.
        \begin{itemize}[label={\scalebox{.8}{\textbullet}}]
            \item $h$ is the true history of the game, representing the actions that have been taken by the players.
            \item $a$ is the recommendation from the mediator at the current infoset. $\bot$ means that the mediator has yet to make a recommendation. $*$ means that the mediator has not given a recommendation. 
            \item $\tau$ represents the trigger, if any, that has been activated. Since we need only consider deviations of one player at a time, there can be at most one active trigger---$\bot$ means there is no active trigger.
        \end{itemize}
        \item {\em Terminal nodes.} If $z$ is terminal in $\Gamma$, then $(z, \bot, \tau)$ is terminal in $\Gamma^c$ for any $\tau$. At terminal node $z$, each player $i$ receives utility $u_i(z)$, and the mediator receives utility $g(z)$.
        \item {\em Chance nodes.} If $h$ is a chance node in $\Gamma$, then $(h, \bot, \tau)$ is also a chance node in $\Gamma^c$ with the same action probabilities. Chance selects an action $a \in A_h$, and the next node is $(ha, \bot, \tau)$. If $c = \nfcce$, the next node is $(ha, \bot, \tau)$. If $c \ne \nfcce$, the next node is a dummy node at which the sole action leads to $(ha, \bot, \tau)$.\footnote{The sole purpose of this ``dummy layer'' is to maintain \revision{timeability of $\Gamma^c$: for $c = \efcce$ and $c = \efce$, each layer of the game tree of $\Gamma$ needs to become {\em two} layers in $\Gamma^c$.}}
        \item {\em Information.} Players $i$ other than the mediator have perfect recall, and their observations are specified in the game description. The mediator does {\em not} have perfect recall: two histories $(h_1, \cdot, \cdot)$ and $(h_2, \cdot, \cdot)$ belong to the same mediator infoset in $\Gamma^c$ if and only if $h_1$ and $h_2$ belong to the same infoset in $\Gamma$.
\end{itemize}
\revision{
The remainder of the construction depends on the solution concept $c$.

For \nfcce:
    \begin{itemize}
    \item {\em Pre-play phase.} There is a pre-play phase in which each of the $n$ players, in order, chooses whether or not to deviate. Only one player may deviate: if player $i$ deviates, then players $j > i$ are forced to not deviate. Hence this pre-play phase is a tree with $n+1$ layers and $n+1$ leaves. The leaf in which player $i$ deviates is $(\Root, \bot, \Root_i)$, and the leaf in which no player deviates is $(\Root, \bot, \bot)$.
        \item {\em Player nodes $(h, \bot, \tau)$}  (with $h \in \mc H_i$ and $i \ne 0$). Suppose $h \in \mc H_i$ with $i \ne 0$. If $\tau = \Root_i$ then player $i$ observes the infoset $I \ni h$ and picks an action $a \in A_h$. Otherwise, the mediator acts by picking the action $a \in A_h$. In either case the next node is $(ha, \bot, \tau)$.
        \item There are no nodes $(h, a, \tau)$ with $a \ne \bot$.
    \end{itemize}
    
For \efcce:
    \begin{itemize}
        \item There is no pre-play phase.
        \item {\em Player nodes $(h, \bot, \tau)$} (with $h \in \mc H_i$ and $i \ne 0$). If $\tau \ne \bot$, then $(h, \bot, \tau)$ is a chance node with only one action, leading to node $(h, *, \tau)$. If $\tau = \bot$, then player $i$ observes the infoset $I \ni h$ decides whether or not to deviate. If player $i$ deviates, then the next node is $(h, *, I)$. Otherwise, the next node is $(h, *, \bot)$.
        \item {\em Player nodes $(h, a, \tau)$} (with $h \in \mc H_i$, $i \ne 0$, and $a \ne \bot$). If $\tau$ is a trigger of player $i$, then player $i$ selects an action $a \in A_h$. Otherwise, the mediator selects an action $a \in A_h$. In either case the next node is $(ha, \bot, \tau)$.
    \end{itemize}

    For EFCE:
    \begin{itemize}
        \item There is no pre-play phase.
        \item {\em Player nodes $(h, \bot, \tau)$} (with $h \in \mc H_i$ and $i \ne 0$). If $\tau$ is a trigger of player $i$, then $(h, \bot, \tau)$ is a chance node with only one action, leading to $(h, *, \tau)$. Otherwise, the mediator selects an action $a \in A_h$, and the next node is $(h, a, \tau)$.
        \item {\em Player nodes $(h, a, \tau)$} (with $h \in \mc H_i$, $i \ne 0$, and $a \ne \bot$). If $\tau \ne \bot$ is a trigger not belonging to player $i$, then $(h, a, \tau)$ is a chance node with a single action, leading to $(ha, \bot, \tau)$. Otherwise, player $i$ observes the infoset $I \ni h$ and action recommendation $a$, and selects an action $a' \in A_h$. If $a' = a$ or $a = *$, then the next node is \revision{$(ha', \bot, \tau)$}; otherwise, the next node is $(ha', \bot, Ia)$.
    \end{itemize}
}
\end{definition}
For concreteness,  in \Cref{fig:augmented-example} we show the augmented games derived from the example game in \Cref{fig:example-game} for all three solution concepts. For notational shorthand, we will use $h^\tau$ to refer to the node in $\Gamma^c$ corresponding to the mediator making a recommendation with history $h$ and trigger $\tau$---that is, $h^\tau = (h, \bot, \tau)$ when $c = \nfcce$ or $\efce$, and $h^\tau = (h, *, \tau)$ when \revision{$c = \efcce$}.

\begin{figure}
    \centering
    \tikzset{
        every path/.style={-},
        every node/.style={draw},
        infoset1/.style={-, densely dotted, ultra thick, color=p1color},
        infoset2/.style={infoset1, color=p2color},
        infosetm/.style={infoset1, color=p4color},
        terminal/.style={},
    }
    \forestset{
        default preamble={for tree={
        parent anchor=south, child anchor=north, s=0pt,
        s sep=0pt, l=24pt, l sep=0pt,
}},
        p1/.style={
            regular polygon,
            regular polygon sides=3,
            rounded corners=1.2,
            inner sep=2pt, fill=p1color, draw=none},
        p2/.style={p1, shape border rotate=180, fill=p2color},
        med/.style={circle,inner sep=3pt, fill=p4color, draw=none},
        parent/.style={no edge,tikz={\draw (#1.parent anchor) to (!.child anchor);}},
        parentd/.style n args={2}{no edge,tikz={\draw[ultra thick, p#2color] (#1.parent anchor) to (!.child anchor); }},
        dum/.style={inner sep=0pt, minimum size=0.1px, circle, fill=black},
        nat/.style={},
        terminal/.style={draw=none, font=\scriptsize\sf, inner sep=2pt},
        el/.style={edge label={node[midway, fill=white, inner sep=1pt, draw=none] {\scriptsize\sf #1}}},
        d/.style={edge={ultra thick, draw={p#1color}}},
        comment/.style={no edge, draw=none, align=center, font=\tiny\sf},
    }
    \begin{forest}
    [,draw=none
    [{\pone decides \\ whether to deviate},comment
    [{if \pone did not deviate, \\ \ptwo decides whether to deviate},comment
    [{the game is played. \\ the mediator acts for all \\ players who did not deviate},comment]]]
    [,p1,calign=first,no edge
        [,p2,calign=last,d=1
            [,nat
                [,med,name=fdl
                    [,p2,name=fdll
                        [,med,name=fdlll
                            [$\mathsf{p^2}$,terminal]
                            [$\mathsf{q^2}$,terminal]
                        ]
                        [,med,name=fdllr
                            [$\mathsf{r^2}$,terminal]
                            [$\mathsf{s^2}$,terminal]
                        ]
                    ]
                    [,p2,parent=fdr,name=fdrl
                        [$\mathsf{j^2}$,terminal]
                        [$\mathsf{k^2}$,terminal]
                    ]
                ]
                [,med,name=fdr
                    [,p2,parent=fdl,name=fdlr
                        [$\mathsf{l^2}$,terminal]
                        [$\mathsf{m^2}$,terminal]
                    ]
                    [,p2,name=fdrr
                        [$\mathsf{n^2}$,terminal]
                        [$\mathsf{o^2}$,terminal]
                    ]
                ]
            ]
            [,nat,d=2
                [,med,name=ffl
                    [,med,name=ffll
                        [,med,name=fflll
                            [$\mathsf{p}$,terminal]
                            [$\mathsf{q}$,terminal]
                        ]
                        [,med,name=ffllr
                            [$\mathsf{r}$,terminal]
                            [$\mathsf{s}$,terminal]
                        ]
                    ]
                    [,med,parent=ffr,name=ffrl
                        [$\mathsf{j}$,terminal]
                        [$\mathsf{k}$,terminal]
                    ]
                ]
                [,med,name=ffr
                    [,med,parent=ffl,name=fflr
                        [$\mathsf{l}$,terminal]
                        [$\mathsf{m}$,terminal]
                    ]
                    [,med,name=ffrr
                        [$\mathsf{n}$,terminal]
                        [$\mathsf{o}$,terminal]
                    ]
                ]
            ]
        ]
        [,dum
            [,nat
                [,p1,name=dfl
                    [,med,name=dfll
                        [,p1,name=dflll
                            [$\mathsf{p^1}$,terminal]
                            [$\mathsf{q^1}$,terminal]
                        ]
                        [,p1,name=dfllr
                            [$\mathsf{r^1}$,terminal]
                            [$\mathsf{s^1}$,terminal]
                        ]
                    ]
                    [,med,parent=dfr,name=dfrl
                        [$\mathsf{j^1}$,terminal]
                        [$\mathsf{k^1}$,terminal]
                    ]
                ]
                [,p1,name=dfr
                    [,med,parent=dfl,name=dflr
                        [$\mathsf{l^1}$,terminal]
                        [$\mathsf{m^1}$,terminal]
                    ]
                    [,med,name=dfrr
                        [$\mathsf{n^1}$,terminal]
                        [$\mathsf{o^1}$,terminal]
                    ]
                ]
            ]
        ]
    ]
    ]
    \draw[infosetm] (fflll) to (ffllr);
    \draw[infosetm] (ffll) to (ffrl);
    \draw[infosetm] (fflr) to (ffrr);
    \draw[infoset2] (fdll) to (fdrl);
    \draw[infoset2] (fdlr) to (fdrr);
    \draw[infosetm] (fdlll) to (fdllr);
    \draw[infosetm, bend right=30] (fdllr) to (fflll);
    \draw[infosetm, bend left=20] (ffl) to (fdl);
    \draw[infosetm, bend right=20] (ffr) to (fdr);
    \draw[infoset1] (dflll) to (dfllr);
    \draw[infosetm] (dfll) to (dfrl);
    \draw[infosetm] (dflr) to (dfrr);
    \draw[infosetm, bend left=30] (dfll) to (ffrl);
    \draw[infosetm, bend right=30] (dflr) to (ffrr);
    \end{forest}
    \begin{forest}
    [,draw=none
    [{nature acts \\ at the root},comment
    [{\pone decides \\ whether to deviate},comment
    [{if \pone has deviated, \pone acts; \\ else, the mediator acts for \pone},comment
    [{if nobody has deviated yet, \\ \ptwo decides whether to deviate},comment
    [{if \ptwo has deviated, \ptwo acts; \\ else, the mediator acts for \ptwo},comment
    [{if nobody has deviated yet, \\ \pone decides whether to deviate},comment
    [{if \pone has deviated, \pone acts; \\ else, the mediator acts for \pone},comment
    ]]]]]]]
    [,nat,no edge
        [,p1
            [,med,d=1,name=lm
                [,p2,name=ll
                   [,p2,name=lld
                        [,dum,name=lldl
                            [,med,name=lldlm
                                [$\mathsf{p^{d}}$,terminal]
                                [$\mathsf{q^{d}}$,terminal]
                            ]
                        ]
                        [,dum,name=lldr
                            [,med,name=lldrm
                                [$\mathsf{r^{d}}$,terminal]
                                [$\mathsf{s^{d}}$,terminal]
                            ]
                        ]
                    ]
                    [,p2,name=rld,parent=rl
                        [$\mathsf{j^{e}}$,terminal]
                        [$\mathsf{k^{e}}$,terminal]
                    ]
                ]
                [,p2,parent=rm,name=rl
                    [,med,parentd={ll}{2},name=llm
                        [,p1,name=lll
                            [,med,d=1,name=lllm
                                [$\mathsf{p}$,terminal]
                                [$\mathsf{q}$,terminal]
                            ]
                            [,med,name=llrm,parentd={llr}{1}
                                [$\mathsf{r}$,terminal]
                                [$\mathsf{s}$,terminal]
                            ]
                        ]
                        [,p1,name=llr
                            [,p1,parent=lll,name=llld
                                [$\mathsf{p^h}$,terminal]
                                [$\mathsf{q^h}$,terminal]
                            ]
                            [,p1,name=llrd
                                [$\mathsf{r^i}$,terminal]
                                [$\mathsf{s^i}$,terminal]
                            ]
                        ]
                    ]
                    [,med,d=2,name=rlm
                        [$\mathsf{j}$,terminal]
                        [$\mathsf{k}$,terminal]
                    ]   
                ]
            ]   
            [,p1,name=ld
                [,dum,name=ldl
                    [,med,name=ldlm
                        [,dum,name=ldll
                            [,p1,name=ldlld
                                [$\mathsf{p^b}$,terminal]
                                [$\mathsf{q^b}$,terminal]
                            ]
                        ]
                        [,dum,name=ldlr
                            [,p1,name=ldlrd
                                [$\mathsf{r^b}$,terminal]
                                [$\mathsf{s^b}$,terminal]
                            ]
                        ]
                    ]
                ]
                [,dum,parent=rd,name=rdl
                    [,med,name=rdlm
                        [$\mathsf{j^c}$,terminal]
                        [$\mathsf{k^c}$,terminal]
                    ]   
                ]
            ]
        ]
        [,p1
            [,med,d=1,name=rm
                [,p2,name=lr,parent=lm
                   [,p2,name=lrd
                        [$\mathsf{l^f}$,terminal]
                        [$\mathsf{m^f}$,terminal]
                    ]
                    [,p2,name=rrd,parent=rr
                        [$\mathsf{n^g}$,terminal]
                        [$\mathsf{o^g}$,terminal]
                    ]
                ]
                [,p2,name=rr
                    [,med,parentd={lr}{2},name=lrm
                        [$\mathsf{l}$,terminal]
                        [$\mathsf{m}$,terminal]
                    ]
                    [,med,d=2,name=rrm
                        [$\mathsf{n}$,terminal]
                        [$\mathsf{o}$,terminal]
                    ]   
                ]
            ]
            [,p1,name=rd
                [,dum,name=ldr,parent=ld
                    [,med,name=ldrm
                        [$\mathsf{l^b}$,terminal]
                        [$\mathsf{m^b}$,terminal]
                    ]
                ]
                [,dum,name=rdr
                    [,med,name=rdrm
                        [$\mathsf{n^c}$,terminal]
                        [$\mathsf{o^c}$,terminal]
                    ]   
                ]
            ]
        ]
    ]
    ]
    \draw[infosetm] (lldlm) to (llrm);
    \draw[infoset1] (llrd) to (llld);
    \draw[infoset1] (lll) to (llr);
    \draw[infoset2] (ll) to (rl);
    \draw[infoset2] (lld) to (rld);
    \draw[infosetm] (llm) to (rdlm);
    \draw[infoset1] (ldlld) to (ldlrd);
    \draw[infoset2] (lr) to (rr);
    \draw[infoset2] (lrd) to (rrd);
    \draw[infosetm] (lrm) to (rdrm);
    \end{forest}
    \begin{forest}
    [,draw=none
    [{nature acts \\ at the root},comment
    [{mediator picks \\ a rec for \pone},comment
    [{\pone observes rec \\ and takes action},comment
    [{mediator picks \\ a rec for \ptwo},comment
    [{\ptwo observes rec \\ and takes action \\ (can only deviate if \\ \pone has not deviated)},comment
    [{if \pone has not deviated, \\ mediator picks
    \\ a rec for \pone},comment
    [{\pone observes rec \\ and takes action \\ (can only deviate if \\ \ptwo has not deviated)},comment
    ]]]]]]]
    [,nat,no edge
        [,med,name=l
            [,p1,name=lml
                [,med,d=1,name=ll
                    [,p2,name=llml
                        [,med,d=2,name=lll
                            [,p1,name=lllml
                                [$\mathsf{p}$,d=1,terminal]
                                [$\mathsf{q^{h\rL}}$,terminal]
                            ]
                            [,p1,name=lllmr
                                [$\mathsf{p^{h\rR}}$,terminal]
                                [$\mathsf{q}$,d=1,terminal]
                            ]
                        ]
                        [,med,name=lldr
                            [,dum [$\mathsf{r^{d\rL}}$,terminal]]
                            [,dum [$\mathsf{s^{d\rL}}$,terminal]]
                        ]
                    ]
                    [,p2,name=llmr
                        [,med,name=lldl
                            [,dum [$\mathsf{p^{d\rR}}$,terminal]]
                            [,dum [$\mathsf{q^{d\rR}}$,terminal]]
                        ]
                        [,med,d=2,name=llr
                            [,p1,name=llrml
                                [$\mathsf{r}$,d=1,terminal]
                                [$\mathsf{s^{i\rL}}$,terminal]
                            ]
                            [,p1,name=llrmr
                                [$\mathsf{r^{i\rR}}$,terminal]
                                [$\mathsf{s}$,d=1,terminal]
                            ]
                        ]
                    ]
                ]
                [,med,name=rdl,parent=lmr
                    [,dum
                        [,dum
                            [,p1,name=rdll
                                [$\mathsf{p^{b\rR}}$,terminal]
                                [$\mathsf{q^{b\rR}}$,terminal]
                            ]
                        ]
                    ]
                    [,dum
                        [,dum
                            [,p1,name=rdlr
                                [$\mathsf{r^{b\rR}}$,terminal]
                                [$\mathsf{s^{b\rR}}$,terminal]
                            ]
                        ]
                    ]
                ]
            ]
            [,p1,name=rlml,parent=r
                [,med,name=rll,parentd={rlml}{1}
                    [,p2,name=rllml
                        [$\mathsf{j}$,d=2,terminal]
                        [$\mathsf{k^{e\rL}}$,terminal]
                    ]
                    [,p2,name=rllmr
                        [$\mathsf{j^{e\rR}}$,terminal]
                        [$\mathsf{k}$,d=2,terminal]
                    ]
                ]
                [,med,name=rrdl,parent=rlmr
                    [,dum[$\mathsf{j^{c\rR}}$,terminal]]
                    [,dum[$\mathsf{k^{c\rR}}$,terminal]]
                ]
            ]
        ]
        [,med,name=r
            [,p1,name=lmr,parent=l
                [,med,name=ldr,parent=lml
                    [,dum [$\mathsf{l^{b\rL}}$, terminal]]
                    [,dum [$\mathsf{m^{b\rL}}$, terminal]]
                ]
                [,med,name=rr,parentd={lmr}{1}
                    [,p2,name=rrml
                        [$\mathsf{l}$,d=2, terminal]
                        [$\mathsf{m^{f\rL}}$, terminal]
                    ]
                    [,p2,name=rrmr
                        [$\mathsf{l^{f\rR}}$, terminal]
                        [$\mathsf{m}$,d=2, terminal]
                    ]
                ]
            ]
            [,p1,name=rlmr
                [,med,name=rldr,parent=rlml
                    [,dum [$\mathsf{n^{c\rL}}$, terminal]]
                    [,dum [$\mathsf{o^{c\rL}}$, terminal]]
                ]
                [,med,name=rrr,d=1
                    [,p2,name=rrrml
                        [$\mathsf{n}$,d=2, terminal]
                        [$\mathsf{o^{g\rL}}$, terminal]
                    ]
                    [,p2,name=rrrmr
                        [$\mathsf{n^{g\rR}}$, terminal]
                        [$\mathsf{o}$,d=2, terminal]
                    ]
                ]
            ]
        ]
        ]
    ]
    ]
    \draw[infosetm] (lll) to (llr);
    \draw[infosetm] (ll) to (rrdl);
    \draw[infoset1,bend right=15] (lllml) to (llrml);
    \draw[infoset1,bend left=20] (lllmr) to (llrmr);
    \draw[infoset1] (rdll) to (rdlr);
    \draw[infosetm] (ldr) to (rrr);
    \draw[infoset2, bend right=15](llml) to (rllml);
    \draw[infoset2, bend left=15](llmr) to (rllmr);
    \draw[infoset2, bend right=30](rrml) to (rrrml);
    \draw[infoset2, bend left=30](rrmr) to (rrrmr);
    \end{forest}
        \caption{Augmented games $\Gamma^c$ for NFCCE (top), EFCCE (center), and EFCE (bottom), where $\Gamma$ is the example game in \Cref{fig:example-game}. The obedient strategies $\vec o_1, \vec o_2$ are given by the thick colored lines below \pone and \ptwo's decision points. Red circles denote decision points of the mediator. Augmented histories are labeled as $h^\tau$, where $h$ is the true node and $\tau$ is the trigger. If no superscript is present, there was no trigger. For cleanliness, $\tau$ is abbreviated in all three diagrams. For NFCCE, $\tau$ is the player $i$ who deviated---for example, $\mathsf{p^2}$ means terminal node {\sf p} was reached, but \ptwo deviated. For EFCCE, $\tau$ is the node at which the player deviated---for example, $\mathsf{p^d}$ means terminal node {\sf p} was reached, but $\ptwo$ deviated at node \pfont2{d}. For EFCE, $\tau$ is the node at which the player deviated, followed by the recommendation ($\rL$ or $\rR$) given to the player at that node---for example, $\mathsf{q^{h\rL}}$ means terminal node {\sf q} was reached but \pone deviated after being recommended to play $\rL$ at \pfont1{h}.}
        \label{fig:augmented-example}
\end{figure}
\subsection{Optimal Correlation via the Augmented Game}\label{sec:lp}
We now discuss how to use the augmented game $\Gamma^c$ to compute optimal correlated equilibria in $\Gamma$. We first make a few critical observations:

First, the mediator has exactly one information set corresponding to each information set of the original game $\Gamma$. Therefore, {\em pure strategies} of the mediator correspond to {\em pure profiles} in $\Gamma$, and {\em mixed strategies} of the mediator correspond to {\em correlated profiles} in $\Gamma$. We will therefore abuse notation and also use $\vec\xi$ to refer to mixed strategies for the mediator in $\Gamma$. Critically, the sequence form of $\xi$ in each augmented game will have enough information about the correlated distribution to define the incentive constraints of the players.
Second, each player has a unique {\em obedient strategy} $\vec o_i$, defined by always obeying recommendations (for EFCE) and never choosing to deviate (or NFCCE and EFCCE). 
Finally, the size of the $\Gamma^c$ is polynomial in the size of $\Gamma$.

As a notational convention, where context is insufficient, we will generally use a superscript $c$ to distinguish the augmented game from the original game---for example, $\X_i^c$ will denote the strategy set of player $i$ in $\Gamma^c$, {\em etc.}

Now let $\vec\xi$ be a mediator mixed strategy in $\Gamma^c$. Then $\vec\xi$ represents an equilibrium in $\Gamma$ if and only if, in the profile $(\vec\xi, \vec o_1^c, \dots, \vec o_n^c)$, each (non-mediator) player $i$ is playing a best response. That is, solving the following program will give an optimal equilibrium:
\begin{align}
    \max_{\vec\xi \in \Xi^c} \quad g(\vec\xi) \qq{s.t.} \max_{\vec x_i^c \in \mc X_i^c} u_i(\vec\xi, \vec x_i^c, \vec o_{-i}^c) \le u_i(\vec\xi, \vec o_i^c, \vec o_{-i}^c) \quad \forall i \in [n]
\end{align}
where $\Xi^c$ is the mediator's sequence-form mixed strategy set in $\Gamma^c$, and $\mc X_i^c$ is player $i$'s mixed strategy set in $\Gamma^c$. Now, by representing the mixed strategy of each player (including the  mediator) in sequence form, the utility functions are linear in each strategy. Therefore, the above program can be rewritten as
\begin{align}
    \max_{\vec\xi \in \Xi^c} \quad \vec g^\top\vec\xi \qq{s.t.} \max_{\vec x_i^c \in \mc X_i^c} \vec \xi^\top \vec A_i \vec x_i^c \le \vec b_i^\top \vec\xi  \quad \forall i \in [n]\label{eq:orig}
\end{align}
for vectors and matrices $\vec g, \vec A_i$, and $\vec b_i$. Now, the inner maximization
\begin{align}
    \max_{\vec x_i^c \in \mc X_i^c} \vec \xi^\top \vec A_i \vec x_i^c \label{eq:inner}
\end{align}
is itself an LP where $\vec\xi$ is a constant. Moreover, since each player $i$ has perfect recall, the sequence-form strategy sets $\mc X_i^c$ can be represented as polytopes $\mc X_i^c = \{ \vec x_i^c \ge \vec 0: \vec F_i^c \vec x_i^c = \vec f_i^c \}$ for matrix and vector $\vec F_i^c, \vec f_i^c$ of size linear in the size of $\Gamma^c$. We therefore can formally take a dual of \eqref{eq:inner}, resulting in the LP
\begin{align}
    \min_{\vec v_i} (\vec f_i^c)^\top \vec v_i \qq{s.t.} \vec A_i^\top \vec \xi \le (\vec F_i^c)^\top \vec v_i. \label{eq:inner-dual}
\end{align}
By strong duality of linear programs (which holds in this case because \eqref{eq:inner} is always feasible), the programs \eqref{eq:inner} and \eqref{eq:inner-dual} have the same value. Therefore, \eqref{eq:orig} is equivalent to the linear program
\begin{align}\label{eq:lp}
\left\{\begin{aligned}
    \max_{\substack{\vec\xi, \vec v_i : i \in [n]}} & \quad \vec g^\top\vec\xi \\ \text{s.t.} \quad &\circled{1}~ \vec A^\top_i \vec\xi \le (\vec F_i^c)^\top \vec v_i&& \forall i \in [n] \\
            &\circled{2}~ (\vec f_i^c)^\top \vec v_i \le \vec b_i^\top \vec\xi&& \forall i \in [n]\\
    & \circled{$\star$}~~ \vec\xi \in \Xi^c
\end{aligned}\right. \tag*{LP}
\end{align}
This program has size linear in the size of $\Gamma^c$ and the description of the polytope $\Xi^c$. Unfortunately, in general, since the mediator has imperfect recall, there is no efficient way of representing $\Xi^c$, that is, there is no polynomial system of linear constraints describing $\Xi^c$. Indeed computing optimal equilibria for all three notions $c$ is NP-hard~\cite{Stengel08:Extensive}. 

Although the {\em pure strategy sets} for the mediator are essentially the same in all three augmented games, the {\em sequence-form strategy sets} $\Xi^c$ are substantially different. The differences arise due to more deviations being possible for some notions than for others. Consider for example the game $\Gamma$ depicted in \Cref{fig:example-game}. In the augmented game $\Gamma^\efce$ (\Cref{fig:augmented-example}, bottom),  there is a terminal node $\mathsf{p^{b\rR}}$ whose player reach probability $\xi[\mathsf{p^{b\rR}}]$ is the probability that the mediator recommends \pone to play right at \pfont1{b} and \ptwo to play left at infoset \pfont2{de}. There is no node in $\Gamma^\nfcce$ whose player reach probability represents the same thing. It should therefore remain intuitively plausible that $\Xi^\efce$ should be more difficult to represent than $\Xi^\nfcce$. In the next section, we will discover that this is precisely the case. 

\subsection{Comparison to Relevant Sequence-Based Construction of $\Xi$}\label{se:relevant}
Our construction via the mediator-augmented game uses a vector $\vec\xi \in \Xi^c$ to represent a correlated profile. It is instructive to compare this representation to other representations of correlated profiles, in particular, the {\em correlation plan} defined and used by \citet{Stengel08:Extensive}. In this section, we will review the notion of correlation plan defined by that paper, and compare it to our construction.

\begin{definition}\label{def:connected-infosets}
    A sequence tuple $(I_1 a_1, \dots, I_n a_n) \in \Sigma_1 \times \dots \times \Sigma_n$ is {\em relevant} if there is a history $h$ in $\Gamma$ such that either $\sigma_i(h) = I_i a_i$ for every player $i$, or there is a player $j$---the {\em deviator}---such that $\sigma_i(h) = I_i a_i$ for all $i \ne j$ and $I_j \preceq h$. 
\end{definition}

This definition was first proposed by \citet{Stengel08:Extensive} in the two-player case; here, we generalize it to arbitrarily many players. Intuitively, the relevant tuples are those that appear in the linear program defining {\em any} of the three notions.
\begin{definition}[\citealp{Stengel08:Extensive}]
    For a correlated profile $\mu \in \Delta(X_1 \times \dots \times X_n)$, the {\em correlation plan} is the vector $\vec\xi \in \R^{\Sigma}$ defined by $\vec\xi[\sigma_1, \dots, \sigma_n] = \E_{\vx \sim \mu} \prod_{i \in [n]} \vec x_i[\sigma_i]$. We denote by $\Xi$ the set of all correlation plans.
\end{definition}

\citet{Stengel08:Extensive} go on to show that correlation plans are a sufficient representation for computing (optimal) EFCE, in the sense that, if one could efficiently represent the set of all correlation plans, then one can compute optimal EFCE efficiently. Farina et al.~\cite{Farina19:Correlation,Farina20:Coarse} generalizes this observation to NFCCE and EFCCE as well. Our linear program \eqref{eq:lp} achieves the same claim: if $\Xi^c$ is efficiently representable then optimal equilibria in notion $c$ can be computed efficiently. One may wonder, therefore, about the relationship between the two.

It turns out that each of our $\Xi^c$ polytopes is in some sense merely a sub-vector of $\Xi$ with the indices renamed. That is, there is a natural injection from sequences of the mediator in $\Gamma^c$ to relevant tuples $(\sigma_1, \dots, \sigma_n) \in \Sigma$. %
A mediator sequence in $\Gamma^c$ corresponds to some history $h^\tau$. If $\tau = \bot$ then $h^\tau$ corresponds to $(\sigma_1(h), \dots,  \sigma_n(h))$, that is, $\vec\xi[h^\tau] = \vec\xi[\sigma_1(h), \dots,  \sigma_n(h)]$; if $\tau$ is a nonempty trigger (say, P1 WLOG), then $h^\tau$ corresponds to $(\sigma_1(\tau), \sigma_{2}(h), \dots, \sigma_n(h))$, where $\sigma_1(\tau)$ is the last sequence of player $i$ before $\tau$. By construction of $\Gamma^c$, this must be a relevant tuple.

In some sense, $\Xi^c$ is therefore a {\em refined} notion of correlation plan that is specific to the equilibrium concept $c$, only requiring the sequence tuples that are relevant for that concept. In the next section, we will show that, in fact, the differences between the various $\Xi^c$s result in separations in the complexity of representing each polytope, and therefore separations in the complexity of computing optimal equilibria.

The key barrier to computing optimal equilibria, in a sense, is that {\em the mediator in the augmented game has imperfect recall}. In the next two sections, we will describe two methods of overcoming this imperfect recall and thus of arriving at algorithms for computing optimal equilibria. The first (\Cref{se:dag}) applies the recent construction of \citet{Zhang22:Team_DAG}, which is a general method of representing the sequence form of an imperfect-recall player in a timeable game. The second (\Cref{se:colgen}) is a variant of {\em column generation} which is most powerful in two-player games, in which one (and only one) player is allowed to play a {\em mixed} strategy, thereby allowing a much greater strategy set to be available for any given support.

\section{Representing Imperfect-Recall Decision Spaces}\label{se:dag}
\citet{Zhang22:Team_DAG} recently developed a method for representing the sequence-form strategy spaces for imperfect-recall players (equivalently, teams of players who cannot communicate) in timeable games. Since the augmented games $\Gamma^c$ are timeable, we directly apply their main result to our problem.
\begin{definition}\label{def:connectivity}
    In a timeable extensive-form game $\Gamma'$, the {\em connectivity graph} $G_S$ of a subset of players $S \subseteq [n]$ is the graph whose nodes are histories of $\Gamma'$, and where there is an edge $(h, h')$ if $h$ and $h'$ are in the same level of the tree, {\em and} they are {\em connected}, {\em i.e.}, there is an infoset $I \in \mc I_i$, where $i \in S$, with $h, h' \preceq I$. \revision{We will use $G_i$ as shorthand for $G_{\{i\}}$}.
\end{definition}

\begin{definition}
    A set of nodes $B \subseteq \mc H$ is a {\em belief} for player $i$ if
    \begin{enumerate}
        \item $B$ contains at least one decision point for player $i$, that is, $B \cap \mc H_i \ne \emptyset$
        \item there exists a pure strategy\footnote{\revision{Recall that $\Pi_i$ is the set of pure strategies for player $i$, obeying any imperfect recall constraints.}} \revision{$\vec x_i \in \Pi_i$} for player $i$ such that $B$ is a connected component of $G_i[\{ h \in \mc H : \vx_i[h] = 1\}]$ where $G_i[\cdot]$ denotes an induced connected component of $G_i$.
    \end{enumerate}
\end{definition}
We will use $\mc B_i$ to denote the set of beliefs of player $i$.

Intuitively, beliefs represent sets of nodes that an imperfect-recall player {\em will always be able to distinguish in the future}: that is, if $B$ is a belief corresponding to pure strategy $\vec x_i$, then, upon reaching the belief $B$, player $i$ knows that it has reached belief $B$, and player $i$ knows that it will never forget having reached $B$. 
\begin{theorem}[Team Belief DAG \cite{Zhang22:Team_DAG}]\label{th:dag}
    There exists a representation of player $i$'s decision space as a polytope whose constraint matrix has $O^*(R_i)$ entries, where
    \begin{align}\label{eq:lp-size}
     R_i := \sum_{B \in \mc B_i} \prod_{\substack{I \in \mc I_i:\\ I \cap B \ne \emptyset}} |A_I|
    \end{align}
\end{theorem}
The representation uses a DAG to model the decision problem faced by player $i$, and then bounds the number of nodes in the DAG.
For intuition, when $\Gamma'$ has perfect recall, one can check that beliefs are always disjoint and every infoset $I \in \mc I_i$ is a belief, so the above expression is linear in the size of the game---indeed, in that case, the representation reduces to the sequence-form polytope. In the interest of self-containment, \Cref{app:tb-dag} contains a description of the construction of \citet{Zhang22:Team_DAG}.

We use the above result to construct a representation of the mediator's decision space, $\Xi^c$, in the augmented game $\Gamma' := \Gamma^c$. We call the representation of $\Xi^c$ using \Cref{th:dag} the {\em correlation DAG} for notion $c$. \Cref{th:dag} immediately gives an algorithm for solving the program \eqref{eq:lp}. This algorithm is given in \Cref{al:dag}. 
\begin{algorithm}[!ht]
\caption{Optimal Correlated Equilibria via Correlation DAG}
\label{al:dag}
{\bf input:} extensive-form game $\Gamma$, desired solution concept $c$, objective $g: \mc Z \to \R$\;
construct the augmented game $\Gamma^c$\;
compute a polytope representation of the mediator's strategy space, $\Xi^c$, using \Cref{th:dag}\;
solve the LP \eqref{eq:lp}\;
{\bf return} $\vec\xi$\;
\end{algorithm}

\subsection{Analyzing the Size of the Representation}

To analyze the complexity of \Cref{al:dag}, it suffices to bound the quantity in \eqref{eq:lp-size}. Notationally, we will use $R_\mediator^c$ to denote the quantity $R_\mediator$ in \eqref{eq:lp-size} in the augmented game $\Gamma^c$.  We first introduce some useful definitions.
\begin{definition}
    A {\em public state} is a connected component of $G = G_{[n]}$. \revision{We will use $\mc P$ to denote the set of public states.}
\end{definition}
\begin{definition}
    Given a node $h$ and a player $i$, the {\em last infoset} $I_i(h)$ is the lowest (\textit{i.e.}, most recent) infoset reached by player $i$ on the path to $h$.
\end{definition}
\begin{definition}
    The {\em information complexity} $k$ of an extensive-form game is the greatest number of unique last infosets in any public state. In symbols,
    $k = \max_{P \in \mc P} \abs{ \{ I_i(h) : h \in P, i \in [n] \} }$. 
\end{definition}

Notice that it is possible for $k$ to be much smaller than $n|P|$, because the set of last infosets may contain duplicates. For example, in normal-form games (converted to extensive form in the canonical manner), we have $k = n$ since each terminal node is a public state and each player has only one infoset. As an example, the information complexity of the game in \Cref{fig:example-game} is $3$: the public state \pfont2{de} has three last infosets, namely \pfont1{b}, \pfont1{c}, and \pfont2{de} itself.

\citet{Zhang22:Team_DAG} use the definition of information complexity to bound the representation size of \Cref{th:dag}. In particular, they show that if the decision problem for the imperfect-recall player $i$ can be decomposed into $n$ perfect-recall players such that the information complexity is $k$, then $R_i \le O^*((b+1)^k)$, where $b$ is the branching factor of the game. In this section, we show similar bounds in our setting. Note that $b$ and $k$ here are the branching factor and information complexity of the original game $\Gamma$, not of $\Gamma^c$---therefore, we cannot directly apply the bound $R_i \le O^*((b+1)^k)$. Indeed, the mediator in $\Gamma^c$ can have much higher information complexity than $\Gamma$. Thus, we need to be more careful in our analysis.

\begin{restatable}{theorem}{thmCorrelationDagSize}\label{th:dag size} Let $k$ be the information complexity of a timeable game $\Gamma$, $b$ be its branching factor, and $d$ be its depth. Then $R_{\mediator}^\nfcce  \le O^*\qty((b+1)^k)$, $R_{\mediator}^\efcce \le O^*\qty((b+d-1)^k)$, and  $R_{\mediator}^\efce \le O^*\qty((bd)^k)$. 
\end{restatable}
\begin{proof}
The expression \eqref{eq:lp-size} counts the number of pairs $(B, \vec a)$ where $B$ is a belief of the mediator $\Gamma^c$ and $\vec a \in \prod_{I \in \mc I_\mediator, I \cap B \ne\emptyset} A_I$. Our goal will therefore be to bound this number.

\vspace{.3cm}
\begin{itemize}[left=2cm]
    \item[NFCCE:] It suffices, for each last-infoset $I$ of $P$, to specify whether the player (1) does not play to $I$ at all, or (2) plays to $I$ and chooses one of the $b$ actions available therein. There are at most $(b+1)^k$ such choices. Each choice induces a disjoint collection of pairs $(B, \vec a)$; that is, surely at most $\abs{P}$ pairs. Thus, $R_\mediator^\nfcce = O^*((b+1)^k)$.
    \item[EFCCE:] For each of the $k$ last-infosets $I$ at $P$, we need to specify whether the player played to reach $I$ and then played one of the (at most) $b$ actions available there, or she deviated at one of the (at most) $d-2$ infosets $I' \prec I$. There are at most $b+d-1$ ways to do this, so, by the above argument, we have $R_\mediator^\efcce = O^*((b+d-1)^k)$.
    \item[EFCE:] For EFCE, we need to additionally specify which action was recommended at the deviation point, of which there are at most $b$ possibilities, for a total of $b(d-1)+b = bd$ options. Thus, again by the same argument, $R_\mediator^\efce = O^*((bd)^k)$. \qedhere
\end{itemize}
\end{proof}

As an example, consider an extensive-form game of the following form. Chance first samples and privately reveals types $t_i \in [T]$ to each player $i$. Thereafter, there is no further privacy: all actions by the players and chance after the root are public. By definition, we see that this game is a public-action game, and we have $k = nT$ because each sequence of post-root actions induces a public state with $T$ private states for each of the $n$ players. Thus, \Cref{th:dag size} gives an algorithm for computing optimal EFCEs that runs in time $\poly(\abs{\mc H}, (bd)^{nT})$; in particular, if $n = T = O(1)$ then the algorithm runs in polynomial time. To our knowledge, we are the first to give a polynomial-time algorithm for this setting, even when $n = T = 2$.

We now show two settings in which we can improve our bounds from \Cref{th:dag size}. They both depend on certain information being {\em public}.

\subsection{Public Player Actions}\label{se:public actions}
First, we discuss the setting in which {\em player} actions are public.%
\begin{definition}
A game has {\em public player actions} if, for all public states $P \in \mc P$ containing at least one non-chance node, for all actions $a \in \bigcup_{h \in P} A_h$, the set $\{ ha : h \in P, a \in A_h \}$ is a union of public states.
\end{definition}

Poker, for example, has this structure: the root public state contains only a chance node, and every action thereafter is fully public. In this setting, we can remove the dependencies on $b$ for NFCCE and EFCCE:

\begin{restatable}{theorem}{thmPublicActions}\label{th:public actions}
In games with public player actions, $R_\mediator^\nfcce = O^*\qty(3^k)$ and $R_\mediator^\efcce = O^*\qty(d^k)$.
\end{restatable}
Intuitively, the proof works by constructing a new game that reduces the branching factor of the original game to $2$ while keeping all other relevant structure intact. The fact that the players' actions are public ensures that this transformation does not increase $k$. We defer the full proof to \Cref{se:pr:public actions}, since it is similar to the proof in the team setting given by \citet{Zhang22:Team_DAG}. 

Once again, the bound for NFCCE matches that of \citet{Zhang22:Team_DAG} in team games, up to polynomial factors. The bound on $R^\efce_\mediator$ cannot be improved in this fashion, for two reasons. First, the $(bd)^k$ term in that analysis comes from counting the number of triggers at a given node, which has not changed. Second, as above, the proof of \Cref{th:public actions} modifies the original game tree to have lower branching factor. This is an invalid transformation for EFCE, because some EFCE triggers present in the original game would not be expressible in the new game.

\subsection{Two-Player Games with Public Chance}\label{se:pubchance}
We now discuss the case where {\em chance} actions are public. Since it is already NP-hard to compute optimal equilibria in three-player games with no chance nodes~\cite{Stengel08:Extensive}, we restrict our attention to {\em two-player} games. \citet{Farina20:Polynomial} showed, via a different construction, that in games with public chance, $\Xi$ has a polynomial-sized representation and therefore optimal NFCCEs, EFCCEs, and EFCEs can be computed in polynomial time. In this section, we show that our correlation DAG matches this bound.

\begin{definition}
    A game has {\em public chance actions} if, for every two nodes $h, h'$ in the same public state, the lowest common ancestor $h \land h'$ is not a chance node.
\end{definition}

We will assume for the rest of this section that levels in $\Gamma$ uniquely specify whose move it is---that is, for every level of the game tree, there exists a player $i$ (possibly nature) such that every node in the level is a decision node of player $i$.  Since we have already assumed timeability, this additional assumption is without loss of generality by adding dummy nodes~\cite{Carminati22:Public}. Most practical games, including the games we use in our experiments, already satisfy this assumption without further modification.

\begin{theorem}\label{th:pubchance}
    In two-player timeable games with public chance actions, we have $R_\mediator^c = \poly(\abs{\mc H})$ for all three notions $c$.
\end{theorem}

Initially, one may ask whether it is possible to prove this result by directly applying \Cref{th:dag size}. In particular, if it were the case that all two-player games of public chance had constant information complexity, \Cref{th:pubchance} would follow immediately. Unfortunately, this is not the case: in \Cref{fig:pubchance-high-k}, we exhibit two families of two-player extensive-form games with {\em no} chance actions and information complexity that is linear in the size of the game.

\begin{figure}[t]
\centering
\tikzset{
    every path/.style={-},
    infoset1/.style={-, densely dotted, ultra thick, color=p1color},
    infoset2/.style={-, densely dotted, ultra thick, color=p2color},
}
\forestset{
    default preamble={for tree={parent anchor=south, child anchor=north, s sep=2pt}},
    p1/.style={regular
        polygon, regular polygon
        sides=3, inner sep=2pt, fill=p1color, draw=none},
    p2/.style={p1, shape border rotate=180, fill=p2color},
    parent/.style={no edge,tikz={\draw (#1.parent anchor) to (!.child anchor);}},
}
\begin{forest}
[,p1
   [,p1 [,p2,name=1 [] []] [,p2 [] []]]
   [,p1 [,p2 [] []] [,p2 [] []]]
   [,p1 [,p2 [] []] [,p2,name=2 [] []]]
]
\draw[infoset2] (1) to (2);
\end{forest}
\qquad
\begin{forest}
[,p1
   [,p2
        [,p2,name=b1
            [,p1,name=a1 [] []]
            []
        ]
        [,p2
            [,p1,name=a2 [] []]
            []
        ]
   ]
   [,p2
        [,p2,name=b3
            [,p1,name=a3 [] []]
            []
        ]
        [,p2
            [,p1,name=a4 [] []]
            []
        ]
   ]
   [,p2
        [,p2,name=b5
            [,p1,name=a5 [] []]
            []
        ]
        [,p2
            [,p1,name=a6 [] []]
            []
        ]
   ]
]
\draw[infoset1, bend right=45] (a1) to (a2);
\draw[infoset1, bend right=45] (a3) to (a4);
\draw[infoset1, bend right=45] (a5) to (a6);
\draw[infoset2, bend right=45] (b1.center) to (b3.center) to (b5.center);
\end{forest}
    \caption{Two examples of two-player extensive-form game trees with no chance moves and large information complexity $k$. In both examples, $k$ can be increased arbitrarily by increasing the branching factor of the root node. The left example would be easily reparable with a tighter definition of information complexity (that takes into account the fact that only one of the infosets in the second layer is reachable in any pure strategy profile), but the right example is not so easily reparable, and examples such as these are the reason that the proof of \Cref{th:pubchance} is more involved than one may initially expect.}\label{fig:pubchance-high-k}
\end{figure}

The rest of this subsection is devoted to proving this result, so, for the rest of this subsection, let $\Gamma$ be a two-player game with public chance actions, and call the two players \pone and \ptwo. Let $c$ be any of the three solution concepts. %
For a node $h^\tau$ in $\Gamma^c$, we will use $\tilde\sigma_i(h^\tau)$ to denote the sequence infosets reached and recommendations received by player $i$ has received on the path from the root to $h^\tau$, not including at $h$ itself. $\tilde\sigma_i(h^\tau)$ is always a valid player $i$ sequence in $\Gamma$. However, it is not the same as player $i$'s sequence $\sigma_i(h^\tau)$: for example, for NFCCE, if player $i$ deviated at $h^\tau$ then $\tilde\sigma_i(h^\tau) = \Root_i$ (because a deviating player receives no recommendations) but player $i$ still sees information sets and actions on the path to $h$.

Throughout this proof, for notational shorthand, we write $h^\tau \in I$, where $I$ is an infoset of player $i$ in $\Gamma$, if $h \in I$ and $\tau$ is not a trigger of player $i$.
\begin{lemma}\label{lem:pubchance1}
    Let $B$ be a mediator belief in $\Gamma^c$, and suppose (WLOG) that the mediator is giving a recommendation to player $1$. Then there exists a unique information set $I \in \mc I_\pone$, and a sequence $\sigma \in \Sigma_\ptwo$, such that:
    \begin{itemize}
        \item the mediator only gives a recommendation at information set $I$: for every $h^\tau \in B$, either $h \in I$ or $\tau$ is a trigger of \pone;
        \item for every $h^\tau \in B \cap I$, we have $\tilde \sigma_\ptwo(h^\tau) \preceq \sigma$; and
        \item there is an $h^\tau \in B \cap I$ with $\tilde \sigma_\ptwo(h^\tau) = \sigma$.
    \end{itemize}
Further, the map $B \mapsto (I, \sigma)$ is injective.
\end{lemma}
\begin{proof}
    Let $h_1^{\tau_1} \in B$ be any decision point for the mediator, and let $I \ni h_1$. 
    
    We first claim that there is no other node $h_n^{\tau_n}\in I' \ne I$ and player $1$ not having deviated. (See \Cref{fig:pubchance1} for a visual representation of the argument in this paragraph.) Let $h_1^{\tau_1}\text{---}h_2^{\tau_2}\text{---}\cdots\text{---}h_n^{\tau_n}$ be a path through the induced connectivity graph $G^c_\mediator[B]$. Further, assume WLOG that $h_2^{\tau_2} \notin I$ and $h_{n-1}^{\tau_{n-1}} \notin I'$; otherwise, move $h_1^{\tau_1}$ and $h_n^{\tau_n}$ along the path toward each other until this is true. We first ask: what is $h_1 \land h_n$? By definition of public chance, it cannot be a chance node, or else $h_1$ and $h_n$ could not be in the same public state, much less the same belief. It cannot be a \pone-node, because then the mediator cannot recommend \pone to play to both $h_1$ and $h_n$. It must therefore be a \ptwo-node. We now ask: how are $h_1^{\tau_1}$ and $h_2^{\tau_2}$ connected? $h_1^{\tau_1} \in I$ and $h_2^{\tau_2} \notin I$; therefore, $h_1$ and $h_2$ must be connected by an infoset at which the mediator recommends to \ptwo. Therefore, the mediator must recommend \ptwo to play to $h_1$. The same applies to $h_n^{\tau_n}$. But this is a contradiction, because it implies that $\ptwo$ must have been recommended two distinct actions at $h_1 \land h_n$.

    Now let $h_2^{\tau_2} \in B$ with $h_2 \in I$. We claim that either $\tilde\sigma_\ptwo(h_1^{\tau_1}) \preceq \tilde\sigma_\ptwo(h_2^{\tau_2})$ or $\tilde\sigma_\ptwo(h_1^{\tau_1}) \succeq \tilde\sigma_\ptwo(h_2^{\tau_2})$. Consider the node $h := h_1 \land h_2$. Since $h_1, h_2 \in I$ and $I$ belongs to \pone, $h$ must be a \ptwo-node (again, it cannot be a chance node, because chance is public). Let $a_1$ and $a_2$ be the actions at $h$ leading to $h_1$ and $h_2$ respectively. There are two cases: 
    \begin{itemize}
        \item The mediator does not recommend \ptwo to play to $h$, or recommends an action at $h$ that is neither $a_1$ nor $a_2$. Then $\tau_1 = \tau_2 = \sigma_\ptwo(h_1^{\tau_1}) = \sigma_\ptwo(h_1^{\tau_2})$.
        \item At $h$, the mediator recommends one of $a_1$ or $a_2$ (WLOG, $a_1$). Then $\tilde\sigma_\ptwo(h_2^{\tau_2}) = I(h) a_1 \preceq \tilde\sigma_\ptwo(h_1^{\tau_1})$.
    \end{itemize}
    Therefore, the set $\{ \tilde\sigma_2(h^\tau) : h^\tau \in B \cap I \}$ is totally ordered, and so it has a maximum element, which we call $\sigma$. Then, by definition, $\sigma$ satisfies the desired properties. 
    
    We therefore have a map $\phi : \mc B_\mediator^c \to (\mc I_1 \times \Sigma_2) \sqcup (\mc I_2 \times \Sigma_1)$ associating each mediator belief to a pair consisting of an infoset of one player and a sequence of the other player. 
    
    It remains to show that $\phi$ is injective. Let $(I, \sigma) \in \mc I_1 \times \Sigma_2$ (WLOG). Let $h^\tau \in I$ with $\sigma_\ptwo(h^\tau) = \sigma$, and pick an $h^\tau$ so that $\tau = \bot$ if one exists. First, suppose $\tau = \bot$. There is only one way to reach $h^\bot$: at every belief $B'$, the mediator must play the action leading to $h$, and then observe the public observation containing $h'$. Thus, the belief containing $h^\bot$ must be unique.

If $B \cap I$ contains no trigger-less node, then it contains only nodes with \ptwo-triggers. But then $B \subseteq I$, because no node with a \ptwo-trigger can ever be connected to a node with a \pone-trigger, and every node in $B \setminus I$ must have a \pone-trigger because of \Cref{lem:pubchance1}. But this precisely fixes what $B$ is: namely, $B = \{ h^\tau \in I : \sigma_\ptwo(h^\tau) \preceq \sigma \}$, because all such $h^\tau$ must be in $B$, and \Cref{lem:pubchance1} states that no others can be.
\end{proof}

Thus, the number of beliefs is polynomial in the size of the game. Since every belief overlaps exactly one mediator information set, it follows that $R^c_\mediator$ is polynomial in the game size. This completes the proof of \Cref{th:pubchance}.

\begin{figure}[t]
\centering
\tikzset{
    every node/.style={},
    infoset/.style={-, densely dotted, ultra thick, color=p2color},
}
\forestset{
    default preamble={for tree={parent anchor=south, child anchor=north}},
    p1/.style={regular
        polygon, regular polygon
        sides=3, inner sep=2pt, fill=p1color, draw=none},
    p2/.style={p1, shape border rotate=180, fill=p2color},
    pn/.style={draw, circle, inner sep=2pt},
}
\begin{forest}
[,p2,s sep=1cm
    [,p1,edge label={node[left]{$h_1$}}
        [,p2,name=2]
    ]
    [,p1,no edge,edge label={node[right]{$h_2$}}
        [,p2,name=1]
    ]
    [$\cdots$,no edge]
    [,p1,no edge,edge label={node[left]{$h_{n-1}$}}
        [,p2,name=3]
    ]
    [,p1,edge label={node[right]{$h_n$}}
        [,p2,name=4]
    ]
]
\draw[infoset] (1) -- (2); 
\draw[infoset] (3) -- (4); 
\end{forest}
\caption{Visualization of the proof of \Cref{lem:pubchance1} (other nodes, such as the ancestors of $h_2$, are not shown). Since both \ptwo infosets are used to make connections in $G^c[B]$, they must both be played to by \ptwo, which is impossible since this would require \ptwo to make two different moves at the top node.}\label{fig:pubchance1}
\end{figure}

\subsection{Discussion: Relationship to Triangle-Freeness}\label{se:pubchance-discussion}

\Cref{th:pubchance} implies that \Cref{al:dag} runs in polynomial time in two-player games of public chance. As we mentioned, we are not the first to exhibit a polynomial-time algorithm in this setting; \citet{Farina20:Polynomial} has exhibited one using a different technique, namely by showing that the {\em von Stengel--Forges} (vSF) polytope~\cite{Stengel08:Extensive} is tight. It is instructive to compare the two approaches. The approach of \citet{Farina20:Polynomial} carries many similarities to our approach for this special case---in particular, their approach also works by effectively constructing a DAG representation of $\Xi^\efce$. However, while their approach dynamically chooses which information set to expand next on the fly, our approach uses the fixed ordering provided by the timeable game to decide which information set is ``next''.  When the game is timeable, our approaches give essentially the same representation: indeed, the proof in the previous section shows that there is a decision point of the mediator in $\Gamma^c$ for every relevant pair $(I_1, \sigma_2)$ or $(\sigma_1, I_2)$, which are precisely the branching points in the representation of \citet{Farina20:Polynomial}.

\begin{figure}
\begin{center}
\tikzset{
    every node/.style={},
    infoset/.style={-, densely dotted, ultra thick, color=p2color},
}
\forestset{
    default preamble={for tree={parent anchor=south, child anchor=north}},
    p1/.style={regular
        polygon, regular polygon
        sides=3, inner sep=2pt, fill=p1color, draw=none},
    p2/.style={p1, shape border rotate=180, fill=p2color},
    pn/.style={draw, circle, inner sep=2pt},
            nat/.style={draw},
        terminal/.style={draw=none, font=\scriptsize\sf, inner sep=2pt},
        dum/.style={inner sep=0pt, minimum size=0.1px, circle, fill=black},
}
\begin{forest}
[,nat
[,p1 [] [,p2,l+=5mm,s=1cm,name=a [] []]]
[,p1 [] [,p2,l+=5mm,s=1cm [] []]]
[,p1 [] [,p2,l+=5mm,s=1cm,name=b [] []]]
]
\draw[infoset] (a) -- (b);
\end{forest}
\end{center}
\caption{An example of a timeable triangle-free game in which our construction will be exponentially-sized (in the branching factor of the root node). In this game, the algorithm of \citet{Farina20:Polynomial} works by essentially ``re-ordering'' the game tree so that \ptwo's decision point is moved to the root, at which point the chance decision can be treated as public, thereby removing the exponentiality.}\label{fig:trianglefree}
\end{figure}

Unlike their approach, our correlation DAG algorithm provides \revision{a parameterized} guarantee on any game. However, it is limited to timeable games, whereas theirs generalizes beyond timeable games to a family they coin {\em triangle-free games}. Here, for the sake of completeness, we include a definition of triangle-freeness.

\begin{definition}
    In a two-player game, two information sets $I_1 \in \mc I_1$ and $I_2 \in \mc I_2$ are {\em connected}, denoted $I_1 \rele I_2$, if there exists a node $h$ with $h \succeq I_1$ and $h \succeq I_2$. A {\em triangle} is a collection of four infosets $I_1, I_1' \in \mc I_1$ and $I_2, I_2' \in \mc I_2$ such that $I_1 \rele J_1$, $I_2 \rele J_2$, and $I_1 \rele J_2$.
\end{definition}

Intuitively, triangle-freeness is useful because it guarantees the existence of some ``branching order'' that can be used to fill in the polytope $\Xi^\efce$. We refer the reader to the paper of \citet{Farina20:Polynomial} for more details. It is not difficult to construct triangle-free games in which our construction would be exponentially-sized; see \Cref{fig:trianglefree}. We leave to future research the question of whether it is possible to extend our algorithm so that it is also runs in polynomial time in all triangle-free games, achieving the best of both worlds.

\subsection{Fixed-Parameter Hardness of Representing $\Xi^\efcce$ and $\Xi^\efce$}
A natural question is whether it is possible to achieve the same bound for EFCCE and EFCE as achieved for NFCCE and team games---namely, a construction whose exponential term depends only on $b$ and $k$. It turns out that our construction does {\em not} accomplish this, and in fact, {\em no} representation of $\Xi^c$ for $c = \efcce$ or $c = \efce$ can have size $O^*(f(k))$ for any function $f$ under standard complexity assumptions even when $b=2$. To do this, we first review some fundamental notions of {\em parameterized complexity}.
\begin{definition}
    A {\em fixed-parameter tractable} (FPT) algorithm for a problem is an algorithm that takes as input an instance $x$ and a {\em parameter} $k \in \N$, and runs in time $f(k) \poly(\abs{x})$, where $\abs{x}$ is the bit length of $x$ and $f : \mc \N \to \N$ is an {\em arbitrary function}.
\end{definition}

The $k$-CLIQUE problem\footnote{The $k$-CLIQUE problem is to decide whether a given graph contains a clique of size at least $k$.} is widely conjectured to not admit an FPT algorithm parameterized by the clique size $k$. In the literature on parameterized complexity, this conjecture is known as {\em FPT $\ne$ W[1]}, and is implied by the exponential time hypothesis~\cite{Chen05:Tight}. We now show that this conjecture implies lower bounds on the complexity of representing the polytopes $\Xi^\efcce$ and $\Xi^\efce$.
\begin{restatable}{theorem}{thmWOneHardness}\label{th:w1-hardness}
    Assuming FPT $\ne$ W[1], there is no FPT algorithm for linear optimization over $\Xi^\efcce$ or $\Xi^\efce$ parameterized by information complexity, even in two-player games with constant branching factor.
\end{restatable}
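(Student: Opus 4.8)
The plan is to reduce $k$-\textsc{Clique} to linear optimization over $\Xi^\efcce$ (and $\Xi^\efce$) via a game whose information complexity is bounded by a function of the clique parameter, while the branching factor stays constant. The high-level idea is that the incentive constraints defining an EFCCE/EFCE, combined with the correlation plan's bookkeeping of reach probabilities of joint sequences $\vec\sigma(z^\tau)$, are expressive enough to encode the existence of a $k$-clique. Concretely, I would build a two-player game in which one player (say $\ptwo$) is tasked with ``guessing'' a clique: the game tree lets $\ptwo$ privately commit to a set of $k$ vertices (one vertex per round, over $\log$-depth gadgets so that $b=2$), and the other player $\pone$ is a ``verifier'' whose utility, together with the trigger/deviation structure, forces the mediator's correlation plan to only be feasible if the guessed set is actually a clique. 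The payoff structure is chosen so that the objective $\vec g^\top \vec\xi$ is maximized (hitting a fixed threshold) exactly when some feasible $\vec\xi$ corresponds to a correlated play concentrated on a genuine $k$-clique.

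The key steps, in order, are: (1) construct the game $\Gamma$ from a graph $G$ on $m$ vertices so that a pure strategy of $\ptwo$ encodes a choice of $k$ vertices, using binary gadgets to keep $b=2$; (2) argue that the public-state structure is such that the information complexity is $k' = O(k)$ — this is where care is needed, since each public state must contain only $f(k)$ private states; the natural design makes each ``round'' a separate public state where $\ptwo$'s private state records only the vertices chosen so far, so the relevant count is controlled by the number of ways to be mid-selection, which should be $\poly(k)$ per public state if the rounds are laid out carefully and the vertex identities are revealed/forgotten appropriately; (3) design $\pone$'s decision points and the utility functions $u_1, u_2$ so that, for every EFCCE (resp. EFCE) incentive constraint — i.e., for every trigger $\tau$ of $\pone$ and every pure deviation — the constraint $\sum_{z\succeq \mc H[\tau]} u_i(z) p(z)(\xi[z^\tau]\mu_i'[z] - \xi[z]) \le 0$ is tight/violated precisely when two chosen vertices fail to be adjacent in $G$; this uses the extensive-form nature (recommendations given at infosets, deviations allowed) in an essential way, which is why the reduction works for EFCCE/EFCE but not NFCCE; (4) combine with \Cref{th:optimal ce lp} / \Cref{co:lp size}: an FPT algorithm for linear optimization over $\Xi^\efcce$ parameterized by information complexity would, run on $\Gamma$, decide $k$-\textsc{Clique} in time $f(k')\poly(\abs{G}) = g(k)\poly(\abs{G})$, contradicting FPT $\ne$ W[1].

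The main obstacle I expect is step (3) combined with (2): making the incentive constraints ``check adjacency'' requires the deviation player to be able to probe pairs of $\ptwo$'s chosen vertices, but the correlation plan only records marginal reach of joint sequences $\vec\sigma(z^\tau)$, not the full joint distribution over $\ptwo$'s pure strategies. So the gadget must force the *support* of the correlated strategy to be a single clique (rather than merely have clique-consistent marginals) — this is the analogue of the well-known subtlety that correlation plans lose information, and it is exactly the feature that separates EFCCE/EFCE (which is hard) from the team/NFCCE case (which is FPT). I would handle this by making $\pone$'s verification interactive: $\pone$ gets to sequentially query, via the mediator's recommendations, the individual vertices in $\ptwo$'s set, and a trigger deviation by $\pone$ at the query for vertex $v$ followed by re-querying some $v'$ lets $\pone$ gain utility unless $\{v,v'\}\in E(G)$ — and because triggers cut off future recommendations, the only way the mediator can preempt all such profitable deviations across all trigger points is if every queried pair is an edge, i.e. the set is a clique. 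Verifying that this simultaneously keeps $k$-many vertices ``live'' in a public state without blowing up the information complexity beyond $f(k)$, and that the promise gap in the objective is robust, is the delicate part of the construction; the rest (the reduction machinery and invoking the earlier LP theorem) is routine.
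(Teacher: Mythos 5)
There are two genuine problems with your plan, and together they mean the construction as described would not prove the stated theorem. First, you are proving the wrong statement: the theorem is about linear optimization over the correlation-plan polytopes $\Xi^\efcce$ and $\Xi^\efce$ \emph{themselves}, with no incentive constraints and no utilities involved (indeed, the paper explicitly notes afterwards that the theorem does \emph{not} establish hardness of computing optimal equilibria). Your step (3) encodes the clique check into the trigger/deviation incentive constraints and the utility functions, which would at best show hardness of optimizing over the equilibrium set $\Xi^c_*$ --- a different claim, and one that does not yield the theorem. The actual source of hardness is purely geometric: the coordinates of $\Xi^\efcce$ and $\Xi^\efce$ indexed by terminal \emph{trigger} histories $z^\tau$ expose, as \emph{linear} functions of $\vec\xi$, joint probabilities of the form ``player 1 did X \emph{and} player 2 did Y'' (e.g.\ $\xi[z^{\tau'}]-\xi[z^{\tau}]$ for nested triggers $\tau \prec \tau'$ of the other player), whereas $\Xi^\nfcce$ does not --- that is exactly where the NFCCE/EFCCE separation comes from, and it lets one write a clique-detecting \emph{linear objective} directly, with no equilibrium machinery at all.

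Second, your gadget in which one player privately commits to all $k$ vertices cannot keep the information complexity bounded by $f(k)$. Players have perfect recall, so \ptwo cannot ``forget'' the vertices chosen so far: after $j$ rounds there are $\binom{n}{j}$-many distinct private states $\bar\sigma_2(h)$, and unless the choices are made public (which destroys the reduction) these all sit in a common public state, giving information complexity $n^{\Omega(k)}$ rather than $f(k)$. The paper's construction avoids this by having \emph{chance} privately assign each player a single index $j_i\in[k]$ and having that player select only \emph{one} vertex by choosing where to exit a public chain of $n$ binary exit/continue nodes; then each public state contains at most $k$ private states per player (one per value of $j_i$), so the information complexity is $2k$ with branching factor $2$. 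Consistency of the $k$ choices ($v_{1,j}=v_{2,j}$) and adjacency of cross pairs are then enforced by a single linear objective over the joint-probability coordinates described above, whose optimum equals $1$ iff $G$ has a $k$-clique. Your instinct that ``the correlation plan only records marginal reach, not the full joint distribution'' is the right worry, but the resolution is that the EFCCE/EFCE trigger coordinates already contain enough pairwise joint information --- no interactive verification by \pone is needed or used.
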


    \begin{figure}
        \centering
            \tikzset{
    every path/.style={-},
    every node/.style={draw},
    infoset/.style={-, densely dotted, ultra thick, color=p2color},
    terminal/.style={draw=none},
}
\forestset{
    l sep=8mm,
    default preamble={for tree={parent anchor=south, child anchor=north}},
    p1/.style={regular
        polygon, regular polygon
        sides=3, inner sep=2pt, calign=last, fill=p1color, draw=none},
    p2/.style={p1, shape border rotate=60, fill=p2color},
}
\begin{forest}
baseline
[
[,p1 [,terminal] [,p1 [,terminal] [,p1 [,terminal]
    [
        [,p2,name=1 [,terminal] [,p2,name=5 [,terminal] [,terminal] ]]
        [,p2,name=2 [,terminal] [,p2,name=6 [,terminal] [,terminal] ]]
    ]
]]]
[,p1 [,terminal] [,p1 [,terminal] [,p1 [,terminal]
    [
        [,p2,name=3 [,terminal] [,p2,name=7 [,terminal] [,terminal] ]]
        [,p2,name=4 [,terminal] [,p2,name=8 [,terminal] [,terminal] ]]
    ]
]]]
]
\draw[infoset, bend left] (1) to (3);
\draw[infoset, bend right] (2) to (4);
\draw[infoset, bend left] (5) to (7);
\draw[infoset, bend right] (6) to (8);
\end{forest}
        \caption{The game used in the proof of \Cref{th:w1-hardness}, for $n=k=2$.}\label{fig:w1-hard}
    \end{figure}
\noindent{\em Proof.}
    We reduce from $k$-CLIQUE. Let $G = (V, E)$ be a graph with $n$ nodes (identified with the positive integers $[n]$), and construct the following two-player game $\Gamma$ (see also \Cref{fig:w1-hard}):
    \begin{itemize}
        \item Chance chooses an integer $j_1 \in [k]$ and tells \pone but not \ptwo. Transition to the node $(j_1, 1)$.
        \item For each $v_1 \in [n+1]$, the node $(j_1, v_1)$ is a decision node for \pone. \pone may {\em exit} or {\em continue}. If \pone exits, transition to the terminal node $(j_1, v_1, {\sf E})$. Otherwise, transition to $(j_1, v_1+1)$.
        \item At the node $(j_1, n+2)$, Chance chooses an integer $j_2 \in [k]$ and tells \ptwo, Transition to the node $(j_1, j_2, {\sf E})$.
        \item For each $v_2 \in [n]$, the node $(j_1, j_2, v_2)$ is a decision node for \ptwo. \ptwo may {\em exit} or {\em continue}. If \ptwo exits, transition to the terminal node $(j_1, j_2, v_2, {\sf E})$. Otherwise, transition to  $(j_1, v_1+1)$.
        \item Finally, $(j_1, j_2, n+1)$ is a terminal node for all $j_1, j_2$.
    \end{itemize}
    Since this result is only concerned with representing the correlation plan polytope (not necessarily with computing optimal equilibria), we do not need to specify utilities or chance probabilities---these do not affect the construction of the augmented game $\Gamma^c$ nor the polytope $\Xi^c$.\footnotemark{}
    We will identify the information sets of both players $i$ by $(j_i, v_i)$ for $j \in [k]$, and the infoset-action pairs by $(j_i, v_i, {\sf E})$ and  $(j_i, v_i, {\sf C})$ for exiting and continuing respectively.

    $\Gamma$ has information complexity $2k$ since every public state has at most $k$ sequences for each player. Every non-chance node has branching factor exactly $2$. 
    \footnotetext{Note that $\Xi^c$ is not the set of EFCEs---it is a representation of the set of correlation plans. That set does not depend on utilities or chance probabilities.}
    
Given a correlation plan $\vec\xi$, define the vector $\vec m^{\vec\xi} \in [0, 1]^{[k] \times [n] \times [k] \times [n]}$ where $\vec m^{\vec\xi}[j_1, v_1, j_2, v_2]$ is the probability that each player $i$ exits at exactly the $v_i$th opportunity conditioned on observing $j_i$. 
Notice that, for $j_1, j_2 \in [k]$ and $v_1, v_2 \in [n]$,  $\vec m^{\vec\xi}[j_1, v_1, j_2, v_2]$ is a linear function of both the correlation plan spaces  $\Xi^\efcce$ and $\Xi^\efce$: for $\vec\xi \in \Xi^\efcce$, it is exposed as $\xi[(j_1, j_2, v_2, {\sf E})^{(j_1, v_1+1)}]-\xi[(j_1, j_2, v_2, {\sf E})^{(j_1, v_1)}]$; for $\vec\xi \in \Xi^\efce$, it is exposed as $\xi[(j_1, j_2, v_2, {\sf E})^{(j_1, v_1, {\sf E})}]$. (For $\Xi^\nfcce$, $\vec m^{\vec\xi}$ is not a linear function of $\vec\xi$, so, as expected, the argument fails here.)

    Let $M = \{ \vec m^{\vec\xi} : \vec\xi \in \Xi^c\} \subseteq [0, 1]^{[k] \times [n] \times [k] \times [n]}$ be the polytope of vectors $\vec m$ corresponding to correlated strategies. At this point, since $M$ does not depend on the notion of equilibrium, we have no more need to distinguish between EFCCE and EFCE. It suffices to show that linear optimization on $M$ can decide $k$-CLIQUE. First, we characterize the vertices of $M$. A vertex of $M$ is characterized by, for each player $i \in \{1, 2\}$ and each $j \in [k]$, picking at most one vertex $v_{i,j} \in [n]$, and constructing $\vec m$ by setting $\vec m[j_1, v_1, j_2, v_2] = {\vec 1} \qty{ v_{1,j_1} = v_1 \text{ and } v_{2, j_2} = v_2}$. Now consider the objective function $g : M \to \R$ defined by
    \begin{align}
        g(\vec m) = \E_{\substack{j_1, j_2 \in [k] \\ v_1, v_2 \in [n]}} 
        \begin{cases}
            \vec m[j_1, v_1, j_2, v_2] & \text{if $j_1 = j_2 \text{ and } v_1 = v_2 \le n; \text{  or } j_1 \ne j_2 \text{ and } (v_1, v_2) \in E$} \\
            0                     & \text{otherwise}
        \end{cases}
    \end{align}
    where the expectation is over a uniformly random sample. We now claim that $\max_{\vec m \in M} g(\vec m) = 1$ if and only if $G$ has a clique of size $k$, which will complete the proof.
    \begin{itemize}
        \item [($\Leftarrow$)] If $G$ has a $k$-clique $\{ v^*_1, \dots, v^*_k \}$, then we set $v_{i, j} = v^*_j$ for both players $i \in \{1, 2\}$, and indeed this achieves $g(\vec m) = 1$ by construction.
        \item [($\Rightarrow$)] If $g(\vec m) = 1$, then for all $j$ we must have $\sum_{v \in [n]} \vec m[j, v, j, v] = 1$, \textit{i.e.}, $v_{1, j} = v_{2, j}$. But then $\{ v_{1, j}, \dots, v_{1, k}\}$ must be a clique by construction, because otherwise there would be some $j_1 \ne j_2$ for which $\vec m[j_1, v_{1, j}, j_2, v_{1, j}] = 0$.  \hfill\qed
    \end{itemize}

Technically speaking, this result does not establish parameterized hardness of computing optimal EFCCEs or EFCEs, as there could hypothetically be a method for doing so that exploits the special nature of the~\eqref{eq:lp}. Indeed, the proof of \Cref{th:w1-hardness} exploits the fact that the objective coefficient $g[h^\tau]$ may depend on $\tau$ as well as $h$, which is not the case for the LP \eqref{eq:lp}. However, we know of no technique for optimal equilibria that would not also imply the ability to optimize over $\Xi^c$. Therefore, \Cref{th:w1-hardness} is a lower bound that applies to all known techniques for computing optimal EFCCEs and EFCEs.

\section{Two-Sided Column Generation Approach}\label{se:colgen}

The approach in the previous section overcomes the imperfect recall of the mediator player in the augmented game by using an extended formulation, in which the imperfect recall is eliminated at the cost of a (controlled) exponential increase in the size of the decision problem. As we have seen, this process begets new parameterized complexity results for the problem of computing optimal correlated solution concepts in extensive-form games, as well as, in many cases, the current state-of-the-art algorithm for computing optimal correlated solution concepts in EFGs. Yet, there are some cases in which the exponential reformulation is prohibitive.
In this section, we propose an alternative approach based on column generation (with some domain-specific tweaks and improvements) that might be helpful in such scenarios. To reduce the burden on notation, our presentation will focus on the case of two players, though in principle the method can be extended to multiple players without significant obstacles.

\subsection{Semi-Randomized Correlation Plans}\label{sec:semi randomized}

We introduce the following notation for two-player games, which was also used by \citet{Farina21:Connecting}: it is worth re-emphasizing that sequence pair $(I_1 a_1, I_2 a_2) \in \Sigma_1 \times \Sigma_2$ is {\em relevant} if there is a history $h$ with $h \succeq I_1$ and $h \succeq I_2$ (see also \Cref{def:connected-infosets}). We write $\sigma_1 \rele \sigma_2$ to denote a relevant sequence pair, and $\Sigma \subseteq \Sigma_1 \times \Sigma_2$ for the set of all relevant sequence pairs. For a sequence $\sigma$ of Player 1 and an infoset $I$ of Player 2, we write $\sigma \rele I$ if $\sigma \rele Ia$ for each action $a$. Similarly, for a Player 1 infoset $I$ and Player 2 sequence $\sigma$, we write $I \rele \sigma$ if $Ia \rele \sigma$. In the subsequent discussion, for ease of notation, the symbol $\Xi$ will be used to denote $\Xi^\efce$, as it encompasses both the EFCCE and NFCCE cases.

Now, we introduce the strategy representation which we employ in our algorithm. We observe that variables in LP~\eqref{eq:lp}  belong to the convex polytope $\Xi$, but that polytope cannot be compactly represented in general. 
Therefore, we tackle LP~\eqref{eq:lp} by adopting the notion of \emph{semi-randomized correlation plan} proposed by \citet{Farina21:Connecting}. For completeness, we show how \emph{semi-randomized correlation plan} can be derived from the \emph{von Stengel-Forges polytope} \cite{Stengel08:Extensive} representing interlaced sequence-form ``probability mass conservation'' constraints for the two players.
\begin{definition}\label{def:vsf}
    The \emph{von Stengel-Forges polytope}, denoted $\mc V$, is the polytope of all vectors $\vec{\xi} \in \mathbb{R}_{\ge 0}^{\Sigma}$ (\textit{i.e.}, indexed over relevant sequence pairs) such that: 
    \begin{align}
        &\circled{A}\quad \vec \xi[\Root_1,\Root_2] = 1 
        \\& \circled{B}\quad \sum_{a \in A_{I}} \vec \xi[Ia,\sigma_{2}] = \vec\xi[\sigma_1(I),\sigma_{2}] \quad \forall I \rele \sigma_2\in \mc I_1\times\Sigma_2, \qq{and} 
        \\&\circled{C}\quad \sum_{a \in A_{I}} \vec\xi[\sigma_1,Ia] = \vec\xi[\sigma_1,\sigma_{2}(I)] \quad \forall I \rele\sigma_2\in \mc I_2\times\Sigma_1.
    \end{align}
\end{definition}

The set of linear constraints defining $\mc V$ is polynomially-sized. Moreover, the set of correlation plans is a subset of the von Stengel-Forges polytope, that is, $\Xi \subseteq \mc V$ \cite{Stengel08:Extensive}. 

The notation in this section is different from that in the preceding section: $\vec\xi$ here is indexed by {\em relevant sequence pairs}, whereas $\vec\xi$ in \Cref{se:cp} is indexed by {\em sequences for the mediator in the augmented game}. However, the two notations essentially describe the same thing: given a history $h^\tau$ of any of the augmented games, we identify the realization-form player reach probability $\vec\xi[h^\tau]$ with the term $\vec\xi[\sigma_1(h^\tau), \sigma_2(h^\tau)]$, where $\sigma_i(h^\tau) = \sigma_i(\tau)$ if $\tau$ is a trigger of player $i$, and otherwise $\sigma_i(h)$ (that is, $\sigma_i(h^\tau)$ is the sequence of recommendations that the mediator has given to player $i$ on the path to node $h^\tau$ in the augmented game.)

Finally, a \emph{semi-randomized correlation plan} is composed of a deterministic sequence form strategy for one player, while the other player independently plays a mixed strategy\footnote{The ideas in this section also naturally extend to games with more than two players, where a semi-randomized correlation plan is a correlation plan in which a single player is allowed to randomize and the other players must play pure strategies. The notation for that would be significantly more cumbersome, and the performance benefit would be less noticeable because the majority of players would still be forced to play pure strategies; as such, we restrict our attention in this section to the two-player case.}. 
\begin{definition}[\cite{Farina21:Connecting}]
The sets of semi-randomized correlation plans are 
\[
\Sr[1]\defeq \mleft\{ \vec\xi\in\mc V: \vec\xi[\Root, \sigma_{2}] \in \{0,1\} \,\, \forall\ \sigma_{2} \in \Sigma_{2}\mright\}\,\,\text{\normalfont and }\,\,
\Sr[2]\defeq \mleft\{ \vec\xi\in\mc V: \vec\xi[\sigma_1, \Root] \in \{0,1\} \,\, \forall\ \sigma_{1} \in \Sigma_{1}\mright\}.
\]
\end{definition}

Given $i\in\{1,2\}$, a point $\vec\xi\in\Sr$ can be expressed using real and binary variables, in addition to the linear constraints defining the von Stengel-Forges polytope $\mc V$. In particular, we rely on the observation by \citet{Farina21:Connecting} that $\Xi=\co(\Sr[1])=\co(\Sr[2])=\co(\Sr[1]\cup\Sr[2]).$

\subsection{Correlation-Plan Decomposition and Iterative Framework}

We say that a correlation plan $\vec\xi$ is a \emph{product correlation plan} if, for any $(\sigma_1,\sigma_2)\in\Sigma$, $\vec\xi[\sigma_1,\sigma_2] = \vec\xi[\sigma_1,\Root]\cdot\vec\xi[\Root,\sigma_2]$.
Since any semi-randomized correlation plan corresponds to a distribution of play where one player plays a pure sequence-form strategy, while the other plays a mixed sequence-form strategy, $\vec\xi\in\Sr$ is guaranteed to be a product correlation plan for any $i$ (see \cite[Lemma 3]{Farina21:Connecting}).

Given $\vec\xi\in \Sr[1]\cup\Sr[2]$, let $\marg[1]\in \X_1$ be the \emph{marginal vector} such that
$\margi[1][\cdot]\defeq\vec\xi[\cdot,\Root]$, and let $\marg[2]$ be defined analogously. %
Then, we can decompose any correlation plan $\vec\xi\in\Sr[1]\cup\Sr[2]$ as 
\[
\vec\xi=\lambda\, \vec x_1\marg[2]^\top + (1-\lambda)\,\marg[1]\vec x_2^\top,
\]
for some appropriate choice of $\lambda \in \{0, 1\}$, and mixed strategies $\vec x_1,\vec x_2$ for Player 1 and Player 2, respectively\footnote{Notation like $\vec x_1 \marg[2]^\top$ is technically an abuse of notation, since $\vec\xi$ is only indexed over {\em relevant} sequence pairs $\Sigma \subset \Sigma_1 \times \Sigma_2$. To be fully precise, by writing $\vec a \vec b^\top \in \R^\Sigma$ we refer to the vector $(\vec a \vec b^\top)[\sigma_1, \sigma_2] = \vec a[\sigma_1] \vec b[\sigma_2]$. But we will still call it an ``outer product''.}. Moreover, given $\vec\xi\in\Sr[1]\cup\Sr[2]$, we have $\lambda\, \vec x_1\marg[2]^\top + (1-\lambda)\,\marg[1]\vec x_2^\top \in\co(\Sr[1]\cup\Sr[2])$ for any $\lambda\ge0$ and well-formed sequence-form strategies $\vec x_1,\vec x_2$.

Our column-generation algorithm will solve a sequence of linear programs. Each program refines the previous by allowing the correlation plan variable $\vec\xi$ to be expressed as a convex combination of points from a growing {\em support set} $S$. In our case, the support generated after any $T$ iterations is in the form $S = \{ \marg[1]^{(1)}, \marg[2]^{(1)}, \marg[1]^{(2)}, \marg[2]^{(2)},  \dots, \marg[1]^{(T)}, \marg[2]^{(T)} \}$, where $\marg[1]^{(t)}$ and $\marg[2]^{(t)}$ are marginal strategies for P1 and P2, respectively. With this support, we will allow the mediator to select any {\em mixture of semi-randomized correlation plans}, where {\em at least one} of the players is playing a strategy in $S$. Formally, the mediator is allowed to select weights $\lambda_i^{(t)}$, for $i \in \{1, 2\}$ and $t \in [T]$, such that $\sum_{i=1}^2 \sum_{ t=1}^T \lambda_i^{(t)} = 1$ and $\lambda_i^{(t)} \ge 0$, as well as the other player's strategy $\vec x_{i}^{(t)} \in \X_i$, resulting in the correlation plan
\begin{align}
    \vec\xi := \sum_{ t=1}^T \lambda_1^{(t)} \qty((\marg[1]^{(t)})(\vec x_2^{(t)})^\top + \lambda_2^{(t)} (\vec x_1^{(t)})(\marg[2]^{(t)})^\top).\label{eq:sr-xi}
\end{align}
We will denote by $\Xi^S$ the set of all correlation plans admissible in the above sense for a given support $S$.

Compared to \cite{Farina21:Connecting}, this notion of mixing allows more correlation plans to be formed for any given support size $T$: that paper {\em fixes} upfront which player is allowed to play a mixed strategy and which player plays the pure strategy, whereas ours allows the master problem to decide this. We say that the column-generation algorithm of \citet{Farina21:Connecting} is \emph{one-sided} since one player has to select a pure strategy, while the other picks a sequence-form strategy after observing the pure strategy. In contrast, we call our framework \emph{two-sided}, each player can have both roles, and the parameter $\vec\lambda$ dictates who has which role.
As such, the correlation-plan decomposition which we introduced allows us to exploit correlation plans already in the support $S$ in a more powerful way than what is possible in other \emph{one-sided} column-generation approaches like the one by~\citet{Farina21:Connecting}. In particular, we remark that $\tilde{\vec x}_{1}^{(t)}, \tilde{\vec x}_{2}^{(t)}$ are continuous variables in LP~\eqref{eq:master}. Therefore, each player is allowed to re-optimize their mixed strategies, enabling them to reach a richer set of correlation plans starting from the same support set. As a result, for any $T$, our master problem will be substantially tighter, leading to faster convergence.

Given an equilibrium concept $c$, our master problem at time $T$ is the following linear program:

\begin{equation}\label{eq:master}
\left\{\begin{aligned}
            \max\quad & \vec g^\top\vec\xi \\[2mm]
            \text{s.t.}\quad  &\circled{1}~ \vec A^\top_i \vec\xi \le (\vec F_i^c)^\top \vec v_i&& \forall i \in \{1, 2\} \\
            &\circled{2}~ (\vec f_i^c)^\top \vec v_i \le \vec b_i^\top \vec\xi&& \forall i \in \{1, 2\}\\
            &\circled{3}~\vec\xi = \sum_{ t=1}^T \qty((\marg[1]^{(t)})(\tilde {\vec x}_2^{(t)})^\top + (\tilde {\vec x}_1^{(t)})(\marg[2]^{(t)})^\top) \\
            &\circled{4}~  \vec F_i\tilde{\vec x}_{i}^{(t)}=\lambda_{i}^{(t)}\vec f_i, \quad \tilde {\vec x}_i^{(t)} \ge 0 &&\forall i\in\{1,2\}, t\in [T]\\
            &\circled{5}~  \sum_{i=1}^2\sum_{t=1}^T \lambda_i^{(t)}=1 , \quad \lambda_i^{(t)}\ge 0~\forall (i,t)
\end{aligned}\right.\tag{M}
\end{equation}
where $\vec g, \vec F_i^c, \vec f_i^c, \vec v_i, \vec A_i$ are defined in \eqref{eq:lp}, and $\vec F_i = \vec x_i$ are the sequence-form equality constraints for player $i$ in the original game $\Gamma$---that is, $\X_i := \{ \vec x_i \ge \vec 0 : \vec F_i \vec x_i = \vec f_i \}$.

\begin{itemize}
    \item each correlation plan $\vec\xi$ is represented through the decomposition which we defined above, where we perform the change of variables $\tilde{\vec x}_i^{(t)}\defeq \lambda_i^{(t)}\,\vec{x}^{(t)}_i$ for each $i$;
    
   \item Constraint \circled{1} and \circled{2} are the same constraints as in \eqref{eq:lp}.
   \item Constraints~\circled{3} to~\circled{5} define $\vec \xi$ according to the aforementioned semi-randomized mixture. They take the place of the hard constraint~\circled{$\star$} in \eqref{eq:lp}.
\end{itemize}

If we could afford to set $S=\Pi_1$ or $S = \Pi_2$, finding an optimal NFCCE, EFCCE, EFCE for an arbitrary objective $\vec{g}$ would amount to solving LP~\eqref{eq:master} once. However, $\Pi_1$ and $\Pi_2$ are usually exponentially large. 
Therefore, we follow the approach by~\citet{Ford58:Suggested} and generate the support $S$ iteratively.

Algorithm~\ref{al:colgen} describes the main steps of our iterative procedure. First, we initialize the support $S$ through a seeding phase in which $S$ is endowed with one or more correlation plans which are known to belong to $\Xi$. In our experiments, we start by assigning to $S$ the correlation plan obtained as the product of one \emph{uniform} mixed strategy per player (\textit{i.e.}, a strategy such that, at each $I$, the player draws one action from $A_I$ according to a uniform probability distribution). Then, at each iteration $t$, we solve the master LP \eqref{eq:master} with the current support $S$. Each time we solve \eqref{eq:master}, we keep track of the resulting primal and dual variables. In particular, when solving \eqref{eq:master}, the algorithm keeps track of the current solution $\vec\xi^{(T)}$ (\textit{i.e.}, the correlation plan corresponding to the optimal decomposition), and the dual variables for constraints \circled{1} and \circled{2}. %

\subsection{Expansion of Support: The Pricing Problem}

At iteration $T$, the marginal strategies $\vec m_1^{(T+1)}, \vec m_2^{(T+1)}$ of a new correlation plan $\vec{\xi}^{(T+1)}$ are added to $S$. The selection of $\vec\xi^{(T+1)}$ is controlled by the function \textsc{Pricer} (\Cref{al:colgen}), which solves the pricing problem of finding the correlation plan that would lead to the maximum gradient of the objective (\textit{i.e.}, maximum \emph{reduced cost}) if it was to be included in the convex combination computed by \eqref{eq:master}. At iteration $t$, such correlation plan can be computed from the solution to the dual of the master LP. We use a tuple $((\tilde{\vec x}_i^c)^{(T)}, \gamma_i^{(T)})_{i \in [n]}$, where $(\tilde{\vec x}_i^c)^{(T)}$ be the sub-vector of the dual variables corresponding to constraints \circled{1} and player $i$, and $\gamma_i^{(T)}$ be the sub-vector of dual variables corresponding to constraints \circled{2} and trigger $\tau$. Then, by letting
$$
\vec w^{(T)} \defeq \sum_{i \in \{1, 2\}}\qty(\vec A_i(\tilde{\vec x}_i^c)^{(T)} -\gamma_i^{(T)}\vec b_i),
$$
the pricing problem amounts to solving $\max_{\vec\xi\in\Xi}(\vec g-\vec w^{(T)})^\top  \vec \xi$.
We know that $\Xi=\co(\Sr[i])$, for any $i$. Therefore, by linearity of the objective and by convexity, we have 
\[\max_{\vec\xi\in\Xi}~(\vec g-\vec w^{(T)})^\top \vec \xi=\max_{\vec\xi\in\Sr}~(\vec g-\vec w^{(T)})^\top \vec \xi.\] 
This is a well-defined mixed integer LP (MIP), which can be solved through a commercial solver such as Gurobi. We denote by $\delta$ the optimal value of the pricing problem, and by $\vec\xi^{(t+1)}$ a correlation plan attaining such value (see Line~\ref{line:pricer}).

In the initial stage of the algorithm, \eqref{eq:master} may be infeasible as the support generated up to that point might be insufficient to generate an equilibrium. In this case, we define an LP (M') by replacing constraint \circled{2} with $(\vec f_i^c)^\top \vec v_i \le \vec b_i^\top \vec\xi+u$, where $u$ is an introduced slack variable, and making the objective $\max (-u)$. This LP is guaranteed to be feasible. 

The column generation algorithm admits a clear game-theoretic interpretation\footnote{In this paragraph only, we drop the superscript $(T)$ for simplicity of notation.}. The dual variables $\tilde{\vec x}_i^c$ correspond to an augmented-game deviator strategy ${\vec x}_i^c := \tilde{\vec x}_i^c/\gamma_i \in \X_i^c$, scaled by a Lagrangian multiplier $\gamma_i \ge 0$. In the initial (infeasible) phase, it solves the zero-sum game
\begin{align}
    \max_{\vec\xi \in \Xi^S} \min_{\substack{i \in [n],\\ \vec x_i^c \in \X_i^c}} {-\vec\xi^\top\qty( \vec A_i \vec x_i^c - \vec b_{\tau})}\label{eq:infeas}
\end{align}
by column generation for the maximizing player. For any given $\vec\xi$, the objective value above is nothing but the maximum deviation benefit of a deviator strategy $\vec x_i^c$ against $\vec\xi$. Therefore, since every game {\em has} equilibria, the equilibrium value of this game is zero, and that value is achieved when $S$ is large enough that $\Xi^S$ contains at least one equilibrium, at which point the algorithm moves to the second stage. In the second stage, the algorithm solves the zero-sum game 
\begin{align}
    \max_{\vec\xi \in \Xi^S} \min_{\substack{\gamma_i \ge 0,\\ \vec x_i^c \in \X_i^c}} \vec g^\top \vec\xi - \sum_{i \in \{1, 2 \}} \gamma_i \cdot \vec\xi^\top\qty( \vec A_i \vec x_i^c - \vec b_{\tau}) \label{eq:feas}
\end{align}
again by column generation. The optimal solutions to this game are, by construction, the optimal equilibria. The feasibility of \eqref{eq:infeas} guarantees that \eqref{eq:feas} has a finite equilibrium value, and moreover the equilibria of this zero-sum game are precisely the solutions to \eqref{eq:lp}, that is, the optimal equilibria with concept $c$. In the language of this zero-sum game, the pricer is the maximizing player (mediator)'s best response value against the current equilibrium $(\vec\xi,\{{\vec x}_i^c,\gamma_i\}_{i\in\{1,2\}}$. Thus, when equilibrium is achieved (\textit{i.e.}, when the pricer fails to find an improving direction), the game is solved.

\begin{algorithm}[H]
\caption{Two-Sided Column Generation}
\label{al:colgen}
\DefineFunction{ComputeOpt}
\DefineFunction{Pricer}
\Function{\ComputeOpt{game $\Gamma$, concept $c\in\{\nfcce,\efcce,\efce\}$, objective $\vec g$, tolerance $\epsilon \ge 0$}}{
\textbf{initialization:} $T \gets 1, S \gets \{\text{uniformly random strategy for both players}\}$\;
\While{{\normalfont within computational budget}}{
    \If{\eqref{eq:master} {\normalfont is infeasible}}{
         (M') $\gets$ replace constraint \circled{2} in \eqref{eq:master} with $(\vec f_i^c)^\top \vec v_i \le \vec b_i^\top \vec\xi+u$, where $u$ is an introduced slack variable, and replace the objective with $\max (-u)$\Comment*{\color{commentcolor}see discussion in text above]}
         $\vec\xi^{(T)},\{(\tilde{\vec x}_i^c)^{(T)},\gamma_i^{(T)}\}_{i\in\{1,2\}}\gets$ solve (M') \Comment*{\color{commentcolor}primal $\curxi$, dual $\{(\tilde{\vec x}_i^c)^{(T)},\gamma_i^{(T)}\}_{i\in\{1,2\}}$]}
    $\beta,\vec m_1^{(T+1)}, \vec m_2^{(T+1)}\gets\textsc{Pricer}(\vec \xi^{(T)}, \vec 0, \{(\tilde{\vec x}_i^c)^{(T)},\gamma_i^{(T)}\}_{i\in\{1,2\}})$\;
    }
        $S\gets S\cup\{\vec m_1^{(T+1)}, \vec m_2^{(T+1)}\}$\;
    $T \gets T+1$\;
    \Else {
    $\vec\xi^{(T)},\{(\tilde{\vec x}_i^c)^{(T)},\gamma_i^{(T)}\}_{i\in\{1,2\}}\gets$ solve \eqref{eq:master} \Comment*{\color{commentcolor}primal $\curxi$, dual $\{(\tilde{\vec x}_i^c)^{(T)},\gamma_i^{(T)}\}_{i\in\{1,2\}}$]}
        $\beta,\vec m_1^{(T+1)}, \vec m_2^{(T+1)}\gets\textsc{Pricer}(\vec \xi^{(T)}, \vec g, \{(\tilde{\vec x}_i^c)^{(T)},\gamma_i^{(T)}\}_{i\in\{1,2\}})$\label{line:pricer}\;
        \lIf{$\beta \le \epsilon$}{\Return $\vec\xi^{(T)}$}
}
    }
}
\Hline{}
\Function{\Pricer{$\vec\xi, \vec g, \{\tilde{\vec x}_i^c,\gamma_i\}_{i\in\{1,2\}}$}}{
$\vec{w}\gets\sum_{i\in\{1,2\}}\qty(\vec A_i\tilde{\vec x}_i^c-\gamma_i\vec b_{i})$\;
Select one player $i$\;
$\beta, \vec\xi^* \gets \max_{\vec\xi^*\in\Sr}~ (\vec g - \vec w)^\top (\vec \xi^* - \vec \xi)$\;
$\vec m_1^*, \vec m_2^* \gets{}$ marginals of $\vec\xi^*$\;
return $\beta, \vec m_1^*, \vec m_2^*$
}
\end{algorithm}

\section{Experiments}\label{se:experiments}

We ran experiments to evaluate our proposed algorithms on a suite of standard benchmark games, as well as two new benchmarks that we introduce. 
Each experiment was allocated $4$ threads, $64$ GB of RAM, and $6$ hours of runtime. We used Gurobi 9.5 to solve LPs and MIPs. %

\subsection{Implementation details}
The correlation DAG LP sometimes causes Gurobi's presolver to produce seemingly poorly-conditioned LPs, for reasons unknown to us. We therefore run the correlation DAG twice for each experiment, once with presolver on and once with presolver off, and report only the better of the two runtimes.

In the implementation of the two-sided column-generation algorithm (\Cref{al:colgen}), before solving a pricing problem via its MIP formulation, we try to solve the linear relaxation in which $\vec\xi\in\mc V$. If the solution to such LP is a semi-randomized correlation plan we can avoid the overhead of solving a MIP. 
Moreover, our implementation makes use of Gurobi's solution pools: since the MIP solver used for pricing problems is already tracking additional suboptimal feasible solutions, we add, together with the optimal one, such suboptimal correlation plans to $S$ with no additional computational cost. This does not affect the optimality of the final solution, and was shown to improve performances in the team games domain~\cite{Farina21:Connecting}.

\subsection{Game Instances}
We ran experiments on the following standard benchmark games. For compatibility, we use the same notation for referencing games as \citet{Zhang22:Team_DAG}. 
\begin{enumerate}
    \item \gamelbl{$^3$K$r$} is 3-player {\em Kuhn poker}~\cite{Kuhn50:Simplified} with $r$ ranks.
    \item \gamelbl{$^3$L$brs$} is 3-player {\em Leduc poker}~\cite{Southey05:Bayes} with $b$ bets per round, $r$ ranks, and $s$ suits. 
    \item \gamelbl{$^3$GL} is 3-player {\em Goofspiel}~\cite{Ross71:Goofspiel} with 3 ranks and imperfect information.
    \item \gamelbl{$^3$D$d$} is 3-player {\em Liar's Dice}~\cite{Lisy15:Online} with one $d$-sided die per player.
    \item \gamelbl{$^2$B$hwr$} is 2-player {\em Battleship}~\cite{Farina19:Correlation} on a grid of size $h \times w$, one unit-size ship per player, and $r$ rounds. 
    \item \gamelbl{$^2$S$nbr$} is a simplified version of the 2-player {\em Sheriff of Nottingham}~\cite{Farina19:Correlation} game, with $n$ items for the smuggler, a maximum bribe of $b$, and $r$ rounds of bargaining.
\end{enumerate}
Detailed rules for all of these games can be found in \citet{Farina21:Connecting} and \citet{Farina19:Correlation}. We also introduce two new parametric families of games:
\begin{enumerate}[resume]
    \item \gamelbl{$^3$T[$L$]} is a {\em trick-taking game}, which emulates the trick-taking (endgame) phase of the card game {\em bridge} where each player only has three cards remaining. When $L$ is given, $L$ deals are randomly selected at the beginning of the game, and it is common knowledge that the true deal is among them\footnote{The full game has $L = 9!/(3!)^3 = 1680$.}. \gamelbl{$^3$TP} is the {\em perfect-information} (``double-dummy'', as it is known in the bridge community) variant, which could in principle be solved by perfect-information techniques such as alpha-beta search. Nonetheless, our algorithms still run in that game, so we use them.
    
    Bridge is one of the most well-known adversarial team games. To our knowledge, computer agents in bridge have not achieved performance comparable to top humans, making it an excellent benchmark for research. The techniques in this paper obviously will not scale to the full game of bridge, but nonetheless we can show interesting results even on small endgames.%
    
    \item \gamelbl{$^2$RS$iT$} is a {\em ride-sharing game}. It is played on finite graph. Two {\em drivers} seek to earn points by reaching specific nodes of the graph and serving the requests at those nodes.  Parameter $i$ specifies the graph configuration, while $T$ is the time horizon.  
    
    Ride sharing is of course ubiquitous in the modern day. A ride-sharing company is tasked with directing its drivers in such a way that it maximizes some objective function (say, the social welfare of all drivers). But the company has no ultimate way of enforcing behavior, only recommending it. This is exactly the scenario where correlated equilibria are the right notion. Further, to our knowledge, this game is the only benchmark in the literature in general-sum correlation in which the polytope of \citet{Stengel08:Extensive} is not tight, and thus for which we know no polynomial-time algorithm. As such, it is a good testbed for our algorithms, which can run in all games. %
\end{enumerate}
Full details on our new benchmarks are given in \Cref{se:game rules}.

\subsection{Optimal Correlation}

\begin{table}[p]
      \centering
        \newcommand{\rowheight}{5.7mm}
        \def\cboxwgt{15mm}
        \def\sp{0.5mm}
      \newcommand{\cbox}[2]{%
        \definecolor{temp}{rgb}{#1}%
        \tikz[baseline,anchor=base] \node[fill=temp,text width=\cboxwgt,align=center,rounded corners=1.5pt,minimum height=\rowheight] (X) {\centering #2};%
      }
      \newcommand{\mymidrulegray}{\arrayrulecolor{lightgray}\mymidrule}
      \newcommand{\mymidruledgray}{\arrayrulecolor{gray}\mymidrule}
      \newcommand{\mymidrule}{
            \midrule
            \arrayrulecolor{black}}
      \newcommand{\unk}{\textcolor{black!30}{---}}
      \newcommand{\tworow}[1]{\multirow{2}{*}{\bf #1}}
      \setlength{\tabcolsep}{2mm}
      \scalebox{.87}{
            \begin{tabular}{@{}lrr|rrrrrrr@{}}
                  \toprule

                  \tworow{Game}                             &
                                                            &
                                                            &
                  \tworow{Concept}                          &
                  \tworow{$\abs{\mc E^c}$}                  &
                  \tworow{Value}                            &
                  \tworow{[vSF08]}                          &
                  \multicolumn{2}{c}{\bf Column generation} &
                  \centercell{\bf DAG}
                  \\[\sp]
                                                            &                &            &       &             &          &                                                                         & [FCGS21]                                                                & \llap{This paper}                                                        & \llap{This paper}
                  \\[\sp] \mymidrule
                                                            & $\abs{\mc Z}$  & 1,072      & NFCCE & 6,895       & 0.000    & \cbox{0.859361783929258,0.859361783929258,0.9718569780853518}{0.06s}    & \cbox{0.9820069204152249,0.8322952710495963,0.8322952710495963}{0.98s}  & \cbox{0.9441753171856978,0.9331795463283352,0.9331795463283352}{0.23s}   & \cbox{0.7843137254901961,0.7843137254901961,1.0}{0.02s}                 \\[\sp]
                  \gamelbl{$^2$B222}                                  & $\abs{\Sigma}$ & 11,049     & EFCCE & 20,909      & $-$0.525 & \cbox{0.8458285274894272,0.8458285274894272,0.9769319492502884}{0.15s}  & \cbox{1.0,0.7843137254901961,0.7843137254901961}{1m 57s}                & \cbox{1.0,0.7843137254901961,0.7843137254901961}{19.61s}                 & \cbox{0.7843137254901961,0.7843137254901961,1.0}{0.06s}                 \\[\sp]
                                                            & $k$            & 8          & EFCE  & 20,547      & $-$0.525 & \cbox{0.8224529027297194,0.8224529027297194,0.9856978085351787}{0.28s}  & \cbox{1.0,0.7843137254901961,0.7843137254901961}{36m 58s}               & \cbox{1.0,0.7843137254901961,0.7843137254901961}{2m 21s}                 & \cbox{0.7843137254901961,0.7843137254901961,1.0}{0.16s}                 \\[\sp]
                  \mymidrulegray
                                                            & $\abs{\mc Z}$  & 19,116     & NFCCE & 91,582      & 0.000    & \cbox{0.9589388696655132,0.8938100730488273,0.8938100730488273}{2.62s}  & \cbox{1.0,0.7843137254901961,0.7843137254901961}{13m 33s}               & \cbox{1.0,0.7843137254901961,0.7843137254901961}{11m 49s}                & \cbox{0.7843137254901961,0.7843137254901961,1.0}{0.13s}                 \\[\sp]
                  \gamelbl{$^2$B322}                                  & $\abs{\Sigma}$ & 264,541    & EFCCE & 331,310     & $-$0.317 & \cbox{0.859361783929258,0.859361783929258,0.9718569780853518}{4.20s}    & \cbox{1.0,0.7843137254901961,0.7843137254901961}{$>$ 6h}                & \cbox{1.0,0.7843137254901961,0.7843137254901961}{1h 10m}                 & \cbox{0.7843137254901961,0.7843137254901961,1.0}{1.38s}                 \\[\sp]
                                                            & $k$            & 12         & EFCE  & 503,053     & $-$0.317 & \cbox{0.8458285274894272,0.8458285274894272,0.9769319492502884}{12.97s} & \cbox{1.0,0.7843137254901961,0.7843137254901961}{$>$ 6h}                & \cbox{1.0,0.7843137254901961,0.7843137254901961}{$>$ 6h}                 & \cbox{0.7843137254901961,0.7843137254901961,1.0}{5.23s}                 \\[\sp]
                  \mymidrulegray
                                                            & $\abs{\mc Z}$  & 191,916    & NFCCE & 1,040,814   & 0.000    & \cbox{0.9677047289504037,0.8704344482891195,0.8704344482891195}{1m 5s}  & \cbox{1.0,0.7843137254901961,0.7843137254901961}{4h 26m}                & \cbox{1.0,0.7843137254901961,0.7843137254901961}{2h 44m}                 & \cbox{0.7843137254901961,0.7843137254901961,1.0}{2.35s}                 \\[\sp]
                  \gamelbl{$^2$B323}                                  & $\abs{\Sigma}$ & 3,893,341  & EFCCE & 5,321,677   & $-$0.375 & \cbox{0.8741253364090734,0.8741253364090734,0.966320645905421}{2m 13s}  & \cbox{1.0,0.7843137254901961,0.7843137254901961}{$>$ 6h}                & \cbox{1.0,0.7843137254901961,0.7843137254901961}{$>$ 6h}                 & \cbox{0.7843137254901961,0.7843137254901961,1.0}{35.48s}                \\[\sp]
                                                            & $k$            & 12         & EFCE  & 7,274,633   & $-$0.375 & \cbox{1.0,0.7843137254901961,0.7843137254901961}{oom}                   & \cbox{1.0,0.7843137254901961,0.7843137254901961}{$>$ 6h}                & \cbox{1.0,0.7843137254901961,0.7843137254901961}{oom}                    & \cbox{0.7843137254901961,0.7843137254901961,1.0}{2m 1s}                 \\[\sp]
                  \mymidrulegray
                                                            & $\abs{\mc Z}$  & 969,516    & NFCCE & 6,388,479   & 0.000    & \cbox{1.0,0.7843137254901961,0.7843137254901961}{oom}                   & \cbox{1.0,0.7843137254901961,0.7843137254901961}{oom}                   & \cbox{1.0,0.7843137254901961,0.7843137254901961}{oom}                    & \cbox{0.7843137254901961,0.7843137254901961,1.0}{20.01s}                \\[\sp]
                  \gamelbl{$^2$B324}                                  & $\abs{\Sigma}$ & 26,443,741 & EFCCE & 40,732,129  & $-$0.489 & \cbox{1.0,0.7843137254901961,0.7843137254901961}{oom}                   & \cbox{1.0,0.7843137254901961,0.7843137254901961}{oom}                   & \cbox{1.0,0.7843137254901961,0.7843137254901961}{oom}                    & \cbox{0.7843137254901961,0.7843137254901961,1.0}{10m 56s}               \\[\sp]
                                                            & $k$            & 12         & EFCE  & 49,299,490  & $-$0.489 & \cbox{1.0,0.7843137254901961,0.7843137254901961}{oom}                   & \cbox{1.0,0.7843137254901961,0.7843137254901961}{oom}                   & \cbox{1.0,0.7843137254901961,0.7843137254901961}{oom}                    & \cbox{0.7843137254901961,0.7843137254901961,1.0}{33m 12s}               \\[\sp]
                  \mymidrulegray
                                                            & $\abs{\mc Z}$  & 396        & NFCCE & 2,861       & 13.636   & \cbox{0.8790465205690119,0.8790465205690119,0.9644752018454441}{0.04s}  & \cbox{0.9907727797001153,0.8089196462898884,0.8089196462898884}{0.69s}  & \cbox{0.9672433679354094,0.8716647443291041,0.8716647443291041}{0.28s}   & \cbox{0.7843137254901961,0.7843137254901961,1.0}{0.01s}                 \\[\sp]
                  \gamelbl{$^2$S122}                                  & $\abs{\Sigma}$ & 3,717      & EFCCE & 7,385       & 9.565    & \cbox{0.8310649750096116,0.8310649750096116,0.9824682814302191}{0.04s}  & \cbox{1.0,0.7843137254901961,0.7843137254901961}{11.86s}                & \cbox{0.9815455594002307,0.8335255670895809,0.8335255670895809}{0.96s}   & \cbox{0.7843137254901961,0.7843137254901961,1.0}{0.02s}                 \\[\sp]
                                                            & $k$            & 12         & EFCE  & 6,227       & 9.078    & \cbox{0.8310649750096116,0.8310649750096116,0.9824682814302191}{0.08s}  & \cbox{1.0,0.7843137254901961,0.7843137254901961}{51.10s}                & \cbox{1.0,0.7843137254901961,0.7843137254901961}{4.09s}                  & \cbox{0.7843137254901961,0.7843137254901961,1.0}{0.04s}                 \\[\sp]
                  \mymidrulegray
                                                            & $\abs{\mc Z}$  & 2,376      & NFCCE & 17,999      & 13.636   & \cbox{0.9122645136485967,0.9122645136485967,0.9520184544405997}{0.26s}  & \cbox{1.0,0.7843137254901961,0.7843137254901961}{18.17s}                & \cbox{0.9930795847750865,0.8027681660899654,0.8027681660899654}{3.02s}   & \cbox{0.7843137254901961,0.7843137254901961,1.0}{0.04s}                 \\[\sp]
                  \gamelbl{$^2$S123}                                  & $\abs{\Sigma}$ & 33,633     & EFCCE & 69,539      & 10.000   & \cbox{0.8482891195693963,0.8482891195693963,0.9760092272202999}{0.59s}  & \cbox{1.0,0.7843137254901961,0.7843137254901961}{1h 1m}                 & \cbox{1.0,0.7843137254901961,0.7843137254901961}{3m 11s}                 & \cbox{0.7843137254901961,0.7843137254901961,1.0}{0.23s}                 \\[\sp]
                                                            & $k$            & 12         & EFCE  & 52,559      & 10.000   & \cbox{0.8421376393694733,0.8421376393694733,0.9783160322952711}{1.24s}  & \cbox{1.0,0.7843137254901961,0.7843137254901961}{48m 55s}               & \cbox{1.0,0.7843137254901961,0.7843137254901961}{7m 5s}                  & \cbox{0.7843137254901961,0.7843137254901961,1.0}{0.53s}                 \\[\sp]
                  \mymidrulegray
                                                            & $\abs{\mc Z}$  & 5,632      & NFCCE & 43,939      & 18.182   & \cbox{0.9607843137254902,0.8888888888888888,0.8888888888888888}{1.07s}  & \cbox{1.0,0.7843137254901961,0.7843137254901961}{3m 22s}                & \cbox{1.0,0.7843137254901961,0.7843137254901961}{8.82s}                  & \cbox{0.7843137254901961,0.7843137254901961,1.0}{0.05s}                 \\[\sp]
                  \gamelbl{$^2$S133}                                  & $\abs{\Sigma}$ & 95,768     & EFCCE & 165,491     & 15.000   & \cbox{0.8618223760092272,0.8618223760092272,0.9709342560553633}{2.10s}  & \cbox{1.0,0.7843137254901961,0.7843137254901961}{$>$ 6h}                & \cbox{1.0,0.7843137254901961,0.7843137254901961}{1h 28m}                 & \cbox{0.7843137254901961,0.7843137254901961,1.0}{0.67s}                 \\[\sp]
                                                            & $k$            & 12         & EFCE  & 165,859     & 15.000   & \cbox{0.8630526720492118,0.8630526720492118,0.970472895040369}{6.39s}   & \cbox{1.0,0.7843137254901961,0.7843137254901961}{$>$ 6h}                & \cbox{1.0,0.7843137254901961,0.7843137254901961}{3h 39m}                 & \cbox{0.7843137254901961,0.7843137254901961,1.0}{2.01s}                 \\[\sp]
                  \mymidrulegray
                                                            & $\abs{\mc Z}$  & 400        & NFCCE & 15,256      & 6.010    & \cbox{1.0,0.7843137254901961,0.7843137254901961}{not run}          & \cbox{0.8372164552095348,0.8372164552095348,0.980161476355248}{0.04s}   & \cbox{0.8113802383698577,0.8113802383698577,0.9898500576701269}{0.03s}   & \cbox{0.7843137254901961,0.7843137254901961,1.0}{0.02s}                 \\[\sp]
                  \gamelbl{$^2$RS12}                                  & $\abs{\Sigma}$ & 613        & EFCCE & 15,256      & 6.010    & \cbox{1.0,0.7843137254901961,0.7843137254901961}{not run}          & \cbox{0.8790465205690119,0.8790465205690119,0.9644752018454441}{0.08s}  & \cbox{0.8790465205690119,0.8790465205690119,0.9644752018454441}{0.08s}   & \cbox{0.7843137254901961,0.7843137254901961,1.0}{0.02s}                 \\[\sp]
                                                            & $k$            & 15         & EFCE  & 8,846       & 6.010    & \cbox{1.0,0.7843137254901961,0.7843137254901961}{not run}          & \cbox{0.9847750865051903,0.8249134948096886,0.8249134948096886}{0.54s}  & \cbox{0.9460207612456747,0.9282583621683967,0.9282583621683967}{0.12s}   & \cbox{0.7843137254901961,0.7843137254901961,1.0}{0.01s}                 \\[\sp]
                  \mymidrulegray
                                                            & $\abs{\mc Z}$  & 4,356      & NFCCE & 107,201,638 & 9.398    & \cbox{1.0,0.7843137254901961,0.7843137254901961}{not run}          & \cbox{0.8052287581699347,0.8052287581699347,0.9921568627450981}{3.33s}  & \cbox{0.7843137254901961,0.7843137254901961,1.0}{2.44s}                  & \cbox{0.9750865051903114,0.8507497116493656,0.8507497116493656}{1m 32s} \\[\sp]
                  \gamelbl{$^2$RS13}                                  & $\abs{\Sigma}$ & 15,063     & EFCCE & 177,846,004 & 9.385    & \cbox{1.0,0.7843137254901961,0.7843137254901961}{not run}          & \cbox{0.8347558631295655,0.8347558631295655,0.9810841983852364}{2m 11s} & \cbox{0.7843137254901961,0.7843137254901961,1.0}{1m 2s}                  & \cbox{0.8458285274894272,0.8458285274894272,0.9769319492502884}{2m 35s} \\[\sp]
                                                            & $k$            & 40         & EFCE  & 135,762,741 & 9.367    & \cbox{1.0,0.7843137254901961,0.7843137254901961}{not run}          & \cbox{1.0,0.7843137254901961,0.7843137254901961}{$>$ 6h}                & \cbox{0.9085736255286428,0.9085736255286428,0.9534025374855825}{14m 25s} & \cbox{0.7843137254901961,0.7843137254901961,1.0}{2m 18s}                \\[\sp]
                  \mymidrulegray
                                                            & $\abs{\mc Z}$  & 484        & NFCCE & 53,983      & 7.188    & \cbox{1.0,0.7843137254901961,0.7843137254901961}{not run}          & \cbox{0.8003075740099962,0.8003075740099962,0.994002306805075}{0.08s}   & \cbox{0.8901191849288735,0.8901191849288735,0.960322952710496}{0.29s}    & \cbox{0.7843137254901961,0.7843137254901961,1.0}{0.06s}                 \\[\sp]
                  \gamelbl{$^2$RS22}                                  & $\abs{\Sigma}$ & 701        & EFCCE & 53,983      & 7.176    & \cbox{1.0,0.7843137254901961,0.7843137254901961}{not run}          & \cbox{0.8347558631295655,0.8347558631295655,0.9810841983852364}{0.13s}  & \cbox{0.8175317185697808,0.8175317185697808,0.9875432525951557}{0.10s}   & \cbox{0.7843137254901961,0.7843137254901961,1.0}{0.06s}                 \\[\sp]
                                                            & $k$            & 15         & EFCE  & 31,503      & 7.176    & \cbox{1.0,0.7843137254901961,0.7843137254901961}{not run}          & \cbox{0.9441753171856978,0.9331795463283352,0.9331795463283352}{0.56s}  & \cbox{0.8901191849288735,0.8901191849288735,0.960322952710496}{0.24s}    & \cbox{0.7843137254901961,0.7843137254901961,1.0}{0.05s}                 \\[\sp]
                  \mymidrulegray
                                                            & $\abs{\mc Z}$  & 4,096      & NFCCE & oom         & 10.961   & \cbox{1.0,0.7843137254901961,0.7843137254901961}{not run}          & \cbox{0.7843137254901961,0.7843137254901961,1.0}{2.65s}                 & \cbox{0.7990772779700115,0.7990772779700115,0.9944636678200692}{3.33s}   & \cbox{1.0,0.7843137254901961,0.7843137254901961}{oom}                   \\[\sp]
                  \gamelbl{$^2$RS23}                                  & $\abs{\Sigma}$ & 13,277     & EFCCE & oom         & 10.820   & \cbox{1.0,0.7843137254901961,0.7843137254901961}{not run}          & \cbox{0.8507497116493656,0.8507497116493656,0.9750865051903114}{1m 51s} & \cbox{0.7843137254901961,0.7843137254901961,1.0}{42.03s}                 & \cbox{1.0,0.7843137254901961,0.7843137254901961}{oom}                   \\[\sp]
                                                            & $k$            & 44         & EFCE  & oom         & 10.791   & \cbox{1.0,0.7843137254901961,0.7843137254901961}{not run}          & \cbox{1.0,0.7843137254901961,0.7843137254901961}{$>$ 6h}                & \cbox{0.7843137254901961,0.7843137254901961,1.0}{7m 11s}                 & \cbox{1.0,0.7843137254901961,0.7843137254901961}{oom}                   \\[\sp]
                  \bottomrule
            \end{tabular}
      }
      \caption{Experiments on general-sum correlated equilibria, comparing both our correlation DAG LP and two-sided column generation to earlier approaches. {\bf vSF08} is the relaxation of \citet{Stengel08:Extensive}, which is only correct in triangle-free games~\cite{Farina20:Polynomial}. RS is not triangle-free, so vSF fails in that game. {\bf FCGS21} is the one-sided column generation approach of \citet{Farina21:Connecting}. $k$ is the information complexity. All runs were performed to convergence.  `oom' means out of memory. Runtimes are colored according to the ratio with the best runtime in that row, according to the scale \includegraphics[scale=.65,valign=t, trim=8 0 6 4, clip]{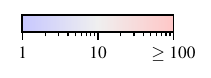}.}
      \label{ta:correl 2p body}
\end{table}

\begin{table}[t]
\centering
        \newcommand{\rowheight}{5.7mm}
        \def\cboxwgt{15mm}
        \def\sp{0.9mm}
      \newcommand{\cbox}[2]{%
        \definecolor{temp}{rgb}{#1}%
        \tikz[baseline,anchor=base] \node[fill=temp,text width=\cboxwgt,align=center,rounded corners=1.5pt,minimum height=\rowheight] (X) {\centering #2};%
      }
\newcommand{\mymidrulegray}{\arrayrulecolor{gray}\mymidrule}
\newcommand{\mymidrule}{
\midrule
\arrayrulecolor{black}}
\newcommand{\unk}{\textcolor{black!30}{---}}
\setlength{\tabcolsep}{2mm}
\colorlet{teamcol}{gray}
\newcommand{\teamprint}[1]{\{\textcolor{teamcol}{#1}\}}
\newcommand{\oom}{\cbox{1.0,0.7843137254901961,0.7843137254901961}{oom}}
\newcommand{\tworow}[1]{\multirow{2}{*}{\bf #1}}
\setlength{\tabcolsep}{4mm}
\scalebox{.85}{
\begin{tabular}{lrrrrrrrrrrrrr}
\toprule
    
      \tworow{Game} & \tworow{$\{\ptwo\}$} &
      \tworow{Leaves} &
      \tworow{\pone Value} &
      \tworow{$k$} &
      \bf DAG CFR  &
      \multicolumn{2}{c}{\bf Column generation}
      \\[\sp]
      &&&&& [ZFS22] & [FCGS21] & This paper
      \\[\sp] \mymidrule
    \gamelbl{$^3$K3} & \teamprint{3} &      78 &    0.000 &  6 &   \cbox{0.9414071510957324,0.940561322568243,0.940561322568243}{0.00s} &     \cbox{0.9414071510957324,0.940561322568243,0.940561322568243}{0.00s} &   \cbox{0.9414071510957324,0.940561322568243,0.940561322568243}{0.00s} \\[\sp]
    \gamelbl{$^3$K4} & \teamprint{3} &     312 & $-$0.042 &  8 &   \cbox{0.9414071510957324,0.940561322568243,0.940561322568243}{0.00s} &     \cbox{0.9547866205305652,0.904882737408689,0.904882737408689}{0.02s} &   \cbox{0.9414071510957324,0.940561322568243,0.940561322568243}{0.01s} \\[\sp]
    \gamelbl{$^3$K6} & \teamprint{3} &   1,560 & $-$0.024 & 12 & \cbox{0.8445982314494425,0.8445982314494425,0.9773933102652825}{0.01s} &   \cbox{0.9746251441753172,0.8519800076893502,0.8519800076893502}{0.15s} &   \cbox{0.9414071510957324,0.940561322568243,0.940561322568243}{0.04s} \\[\sp]
    \gamelbl{$^3$K8} & \teamprint{3} &   4,368 & $-$0.019 & 16 &                \cbox{1.0,0.7843137254901961,0.7843137254901961}{0.79s} &     \cbox{0.9852364475201846,0.823683198769704,0.823683198769704}{0.36s} &   \cbox{0.9414071510957324,0.940561322568243,0.940561322568243}{0.06s} \\[\sp]
   \gamelbl{$^3$K12} & \teamprint{3} &  17,160 & $-$0.014 &  24 &                                                                   \oom &   \cbox{0.9681660899653979,0.8692041522491349,0.8692041522491349}{1.24s} &   \cbox{0.9414071510957324,0.940561322568243,0.940561322568243}{0.43s} \\ \mymidrulegray
  \gamelbl{$^3$L132} & \teamprint{3} &   4,500 &    0.293 &  6 &                \cbox{0.7843137254901961,0.7843137254901961,1.0}{0.01s} &  \cbox{0.9981545559400231,0.7892349096501345,0.7892349096501345}{52.66s} &   \cbox{0.9414071510957324,0.940561322568243,0.940561322568243}{5.75s} \\[\sp]
  \gamelbl{$^3$L133} & \teamprint{3} &   6,477 &    0.215 &  6 &                \cbox{0.7843137254901961,0.7843137254901961,1.0}{0.01s} &                 \cbox{1.0,0.7843137254901961,0.7843137254901961}{1m 16s} &   \cbox{0.9414071510957324,0.940561322568243,0.940561322568243}{7.71s} \\[\sp]
  \gamelbl{$^3$L151} & \teamprint{3} &  10,020 & $-$0.019 & 10 &                \cbox{0.7843137254901961,0.7843137254901961,1.0}{0.07s} &  \cbox{0.9898500576701269,0.8113802383698577,0.8113802383698577}{22.03s} &   \cbox{0.9414071510957324,0.940561322568243,0.940561322568243}{3.29s} \\[\sp]
  \gamelbl{$^3$L223} & \teamprint{3} &   8,762 &    0.516 &  4 &                \cbox{0.7843137254901961,0.7843137254901961,1.0}{0.01s} &                 \cbox{1.0,0.7843137254901961,0.7843137254901961}{2m 13s} &   \cbox{0.9414071510957324,0.940561322568243,0.940561322568243}{4.61s} \\[\sp]
  \gamelbl{$^3$L523} & \teamprint{3} & 775,148 &    0.953 &  4 &                \cbox{0.7843137254901961,0.7843137254901961,1.0}{3.60s} &                 \cbox{1.0,0.7843137254901961,0.7843137254901961}{$>$ 6h} &  \cbox{0.9414071510957324,0.940561322568243,0.940561322568243}{4h 39m} \\ \mymidrulegray
    \gamelbl{$^3$D2} & \teamprint{3} &     504 &    0.250 &  4 & \cbox{0.8310649750096116,0.8310649750096116,0.9824682814302191}{0.00s} &    \cbox{0.960322952710496,0.8901191849288735,0.8901191849288735}{0.11s} &   \cbox{0.9414071510957324,0.940561322568243,0.940561322568243}{0.05s} \\[\sp]
    \gamelbl{$^3$D3} & \teamprint{3} &  13,797 &    0.284 &  6 &                \cbox{0.7843137254901961,0.7843137254901961,1.0}{0.06s} &                 \cbox{1.0,0.7843137254901961,0.7843137254901961}{5m 13s} &   \cbox{0.9414071510957324,0.940561322568243,0.940561322568243}{5.95s} \\ \mymidrulegray
    \gamelbl{$^3$GL} & \teamprint{3} &   1,296 &    1.252 &  2 &                \cbox{0.7843137254901961,0.7843137254901961,1.0}{0.00s} &   \cbox{0.9741637831603229,0.8532103037293348,0.8532103037293348}{0.78s} &   \cbox{0.9414071510957324,0.940561322568243,0.940561322568243}{0.21s} \\ \mymidrulegray
 \gamelbl{$^3$T[50]} & \teamprint{2} &  10,300 &    0.600 &  5 &                \cbox{0.7843137254901961,0.7843137254901961,1.0}{0.00s} &                 \cbox{0.984313725490196,0.8261437908496732,0.8261437908496732}{5.49s} &   \cbox{0.9414071510957324,0.940561322568243,0.940561322568243}{0.67s} \\[\sp]
\gamelbl{$^3$T[100]} & \teamprint{2} &  20,992 &    0.710 & 18 &                \cbox{0.7843137254901961,0.7843137254901961,1.0}{0.07s} &  \cbox{0.9893886966551326,0.8126105344098423,0.8126105344098423}{10.54s} &   \cbox{0.9414071510957324,0.940561322568243,0.940561322568243}{1.62s} \\[\sp]
\gamelbl{$^3$T[840]} & \teamprint{2} & 190,228 &    0.569 &  141 &                                                                   \oom & \cbox{0.9972318339100346,0.7916955017301038,0.7916955017301038}{16m 40s} &  \cbox{0.9414071510957324,0.940561322568243,0.940561322568243}{1m 51s} \\[\sp]
     \gamelbl{$^3$T} & \teamprint{2} & 379,008 &   $\approx$ 0.573 &  141 &                                                                   \oom &                 \cbox{1.0,0.7843137254901961,0.7843137254901961}{4h 10m} & \cbox{0.9414071510957324,0.940561322568243,0.940561322568243}{22m 58s} \\[\sp]
    \gamelbl{$^3$TP} & \teamprint{2} & 379,008 &    0.658 &  2 &                \cbox{0.7843137254901961,0.7843137254901961,1.0}{0.49s} &  \cbox{0.9893886966551326,0.8126105344098423,0.8126105344098423}{6m 55s} &   \cbox{0.9414071510957324,0.940561322568243,0.940561322568243}{1m 3s} \\[\sp]
\bottomrule
\end{tabular}
}
\caption{Experiments on TMECor in adversarial team games, comparing our two-sided column generation approach to earlier approaches. {\bf DAG CFR} is the CFR-based team DAG algorithm of  \citet{Zhang22:Team_DAG}. {\bf FCGS21} is the one-sided column generation approach of \citet{Farina21:Connecting}. All runtimes are reported to a target precision of $0.005$ times the reward range of the game. The game value of $^3$T is after our new incremental algorithm ran to the time limit, and is accurate to $\pm 0.002$. All other game values are accurate to three decimals. Runtimes are colored according to the ratio with the runtime of our two-sided column generation, according to the scale \includegraphics[scale=.65,valign=t, trim=8 0 6 4, clip]{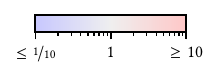}.}
\label{ta:tmecor experiments}
\end{table}

\begin{figure}[tbp]\centering
\payoffspace B2{323}
\payoffspace S2{133}
\payoffspace{RS}{2}{12}
\\
\payoffspace D3{2}
\payoffspace{GL}{3}{}
\payoffspace K35
\\
\payoffspace L3{223}
\payoffspace T3{50}
\payoffspace{TP}{3}{}
\\
\makelegend

\caption{Payoff spaces in several games, with all three notions of equilibrium. More such plots can be found in the appendix. }\label{fi:payoff spaces}
\end{figure}

We evaluated the performance of the DAG-based LP and the two-sided column-generation framework against the prior state-of-the art algorithms for computing optimal correlated equilibria in general-sum extensive-form games: the polytope of~\citet{Stengel08:Extensive} (denoted by {\bf [vSF08]}), which is correct only for a certain family of games called {\em triangle-free games} (we denote with `n/a' when this is not the case), and the one-sided column-generation algorithm by \citet{Farina21:Connecting} (denoted by {\bf [FCGS21]}), which we adapted from the team domain.

Table~\ref{ta:correl 2p body}  summarizes the comparison over two-player game instances. As expected, the correlation DAG LP has the best running times for games with small information complexity parameter $k$. When this is the case, it dramatically outperforms previous algorithms: it can solve in a matter of seconds instances that previously exceeded 6 hours (see, \textit{e.g.}, \gamelbl{$^2$B323} and \gamelbl{$^2$S133}), and it can solve in less than 1 hour instances that previously were not computationally feasible (\textit{e.g.}, \gamelbl{$^2$B324}). On the other hand, when $k$ is large (\textit{e.g.}, in \gamelbl{$^2$RS23}), the two-sided column-generation algorithm provides the best running times. For example, when computing optimal EFCE, it requires 6 minutes while the prior one-sided column-generation algorithm {\bf [vSF08]} exceeds 6 hours. Combining the two techniques that we propose yields uniformly better performance than  prior work for any value of the parameter $k$.

We also ran experiments on three-player games using the correlation DAG LP (see Table~\ref{ta:correl 3p body} in the appendix). This shows, for the first time, that it is possible to compute optimal NFCCE/EFCCE/EFCE in practice for large game instances even when the number of players is greater than two.

\subsection{Payoff Space Plots}

In \Cref{fi:payoff spaces}, we show plots of the space of feasible payoffs in several tested games.  All three-player games we tested on were constant sum, so for those games we show a 2D projection. 

In most games tested, all three payoff spaces are different, and show very detailed boundaries that almost seem smooth (though, of course, they cannot be, since the payoff space is a polytope). This confirms the findings of earlier papers, \textit{e.g.},~\citet{Farina20:Coarse}, and demonstrates the importance of defining the various notions as separate.

\subsection{Adversarial Team Games}
We compared our new column generation approach (\Cref{al:colgen}) to prior approaches for finding team-correlated equilibria (TMECor) in zero-sum adversarial team games. Specifically, we compared to the CFR algorithm on the team DAG introduced by \citet{Zhang22:Team_DAG} and the prior column generation-based approach of \citet{Farina21:Connecting}. Results can be found in \Cref{ta:tmecor experiments}. Our results clearly give several conclusions. First, our algorithm is an improvement upon \citet{Farina21:Connecting}, achieving speedups of more than an order of magnitude in some games. Second, our algorithm, like \citet{Zhang21:Computing}, scales well in the information complexity $k$ compared to that of \citet{Zhang22:Team_DAG}: while ours is slower when $k$ is small, it begins to match and quickly exceed the performance of that algorithm when $k$ grows larger, as happens in Kuhn poker. 

In the Tricks game instances, we observe that the perfect-information value, $0.66$, does not match the team game value, $0.57$. The discrepancy of nearly $0.1$ tricks is nontrivially large given that there are only three tricks remaining. This establishes that, even in small endgames with three cards left, the fact that players do not know the cards of their teammate or opponent is still relevant information in a game of bridge, showing the importance of viewing bridge as a true imperfect-information game between two teams, rather than as a perfect-information game as {\em double dummy} bridge endgame solvers do~[\textit{e.g.},~\citenum{ginsberg1999gib}]. 

\section{Conclusions and Future Research}

In this paper, we introduced and analyzed two new approaches for finding optimal correlated equilibria in general-sum games: the {\em correlation DAG} and a {\em two-sided column generation} algorithm. The former has complexity parameterizable by the {\em information complexity} $k$ of the game. The two techniques have complementary practical strengths and weaknesses: when $k$ is small, the correlation DAG shines; when $k$ grows large, the column generation technique is faster and more frugal in terms of memory usage. Furthermore, the value of $k$ can be easily computed, enabling an efficient choice between the two approaches. Our techniques are the state of the art in practice across all the games tested (except two small games where the prior column-generation approach was slightly faster). Possible directions of future research include the following.
\begin{enumerate}
    \item Extend the correlation DAG in such a way that it also has polynomial size in all triangle-free games.
    \item An intelligent combination---rather than merely a selection of one versus the other---of the correlation DAG and the column generation algorithm may lead to faster practical algorithms.
    \item Investigate possible use of the payoff structure in the game; for example, investigate extensions of the concept of {\em smooth games}~\cite{Roughgarden15:Intrinsic}.
    \item Our algorithms for optimal correlation all ultimately reduce to linear programs or mixed-integer programs. On the other hand, as we have discussed, regret minimization algorithms are known to be able to find {\em one} correlated equilibrium in all the notions we discuss in the paper, as well as equilibria in adversarial team games. We leave it to future research to answer whether regret minimization---with exact utility gradient or even with sampling (when only a gradient estimate is available)---can be made to lead to \textit{optimal} correlated equilibria. 
\end{enumerate}

\revision{Since the publication of the initial version of this paper, some developments have been made along these lines. \citet{Zhang24:DAG} developed a method for combining the team belief DAG with column generation, which can also be applied to optimal equilibria, leading to ``best-of-both-worlds'' performance that combines the strengths of the two approaches. \citet{Peng24:Fast} and \citet{Dagan24:External} have shown that, for the stronger notion of NFCE and absolute constant gap $\eps$, there exists a learning-based polynomial-time algorithm for computing an $\eps$-NFCE.  (It remains an open question whether there exists such an algorithm for $\eps < 1/\poly(|\mc H|)$.) \citet{Zhang23:Computing} developed learning-based algorithms for computing optimal equilibria including---but not limited to---NFCCE, EFCCE, and EFCE that essentially match the parameterized complexity guarantees offered by the present paper by taking a Lagrangian relaxation of \eqref{eq:orig} and viewing it as a zero-sum game. }

\section*{Acknowledgements}
We thank the anonymous reviewers for their valuable suggestions, which greatly improved the manuscript. We also thank Bernhard von Stengel for stimulating discussions that greatly improved the presentation of the material. This material is based on work supported by the Vannevar Bush Faculty
Fellowship ONR N00014-23-1-2876; the National Science Foundation under grants IIS-1901403, CCF-1733556, RI-2312342, and RI-1901403; the ARO under awards W911NF2010081 and W911NF2210266; and NIH award
A240108S001.

\bibliographystyle{plainnat}
\bibliography{dairefs}

\appendix

\section{Team Belief DAG}\label{app:tb-dag}

In the interest of self-containment, in this section we give a description of the construction of \citet{Zhang22:Team_DAG}. Given a timeable extensive-form game $\Gamma$ and a player $i$ with imperfect recall, the goal is to represent the sequence form of player $i$. To do this, we construct a DAG $\mc D$ with two types of nodes: {\em decision nodes}, at which $i$ makes a decision, and {\em observation nodes}, at which $i$ observes something. Nodes in $\mc D$ will be identified with {\em sets} of histories in $\Gamma$, with the exception that we allow an observation node and a decision node to have the same set.
\begin{itemize}
    \item The root node of $\mc D$ is the decision node $\{ \Root \}$, where $\Root$ is the root node of $\Gamma$.
    \item Decision nodes $\{z \}$ for $z \in Z$ are leaves of $\mc D$.
    \item At a decision node $B$, let $I_1, \dots, I_m \in \mc I_i$ be the infosets with nonempty intersection with $B$. Player $i$ may pick any {\em prescription} $$\vec a \in \bigtimes_{i \in [m]} A_{I_i}.$$ The next node is the observation node
    \begin{align}
        B \vec a := \{ h \vec a_i : h \in I_i \cap B \} \cup \{ ha : h \in B \setminus \mc H_i, a \in A_h \}
    \end{align}
    \item At an observation node $O$, let $B_1, \dots, B_m$ be the connected components of the induced subgraph $G_i[O]$, where $G_i$ is player $i$'s connectivity graph (\Cref{def:connectivity}). 
\end{itemize}
The critical observation of \citet{Zhang22:Team_DAG} (which we will not explicitly prove here) is that the set of sequence-form mixed strategies for player $i$ in $\Gamma$ is a projection of the set of strategies of the player in the decision problem $\mc D$, and moreover the latter set has a representation using linear constraints with size $O^*(|\mc D|)$, where $|\mc D|$ denotes the number of nodes in $\mc D$. \Cref{th:dag} then follows by observing that the decision nodes in $\mc D$ are precisely the beliefs $B \in \mc B_i$ and for each such belief $B$ there are $\prod_{I \in \mc I_i : B \cap \mc I_i \ne \emptyset} A_I$ actions.
\section{Omitted Proofs}
\subsection{\Cref{th:public actions} (Games with Public Actions)}\label{se:pr:public actions}
\thmPublicActions*
\begin{proof}
  \citet{Zhang22:Team_DAG} devise an algorithm for constructing, starting with a game $\Gamma$ with public actions, a new strategically equivalent game $\Gamma'$ with branching factor $2$, no higher parameter $k$, and at most polynomially larger. The method works by breaking up each high-branching-factor node into several successive binary decisions, in such a way that public state size is preserved. For NFCCE, this is sufficient to immediately conclude the desired result. For EFCCE, it suffices to additionally observe that one only needs to care about trigger histories $h^\tau$ in $\Gamma'$ where $\tau$ is a valid trigger in $\Gamma$. The number of these is at most the depth $d$ of the original game $\Gamma$.
\end{proof}

\section{Games in Experiments}\label{se:game rules}

\subsection{Trick-Taking Game (Bridge Endgame)}

We introduce a {\em trick-taking game}, which is effectively a bridge endgame scenario. There is a fixed deck of playing cards consisting of 3 ranks ($2, 3, 4$) of each of four suits (\cS{}, \cH{}, \cD{}, \cC{}). Spades (\cS{}) is designated as the {\em trump suit}. There are four players: two {\em defenders}, who sit across from each other at the table, the {\em dummy}, and the {\em declarer}. The actions of the dummy will be controlled by the declarer; as such, there are actually only three players in the game. However, in this section, we will use the four-player terminology because it is easier to understand.

The whole deck is randomly dealt to four players. The dummy's cards are then publicly revealed. Play proceeds in {\em tricks}. The player to the left of the declarer leads the first trick. In each trick, the leader of the trick first plays a card. The suit of that card is the {\em lead suit}. Then, in clockwise order around the table, the other three players play a card from their hand. Players must play a card of the lead suit if they have such a card; otherwise, they may play any card. If any \cS{} has been played, then whoever plays the highest \cS{} wins the trick. Otherwise, the highest card of the lead suit wins the trick. The winner of one trick leads the next trick. At the end of the game, each player earns as many points as tricks they have won. For the adversarial team game, the two defenders are teammates, playing against the declarer (who controls the dummy).

We use \gamelbl{$^3$T} to refer to the trick-taking game. In the {\em perfect information} variant \gamelbl{$^3$TP}, all information is public, creating a perfect-information game. This is equivalent to what the bridge community calls a {\em double dummy} game. In all our games, the dummy's hand is fixed as \cS{2} \cH{2} \cH{3}.

In the {\em limited deals} variant \gamelbl{$^3$T[$L$]}, $L$ deals are randomly selected at the beginning of the game,  and it is common knowledge that the true deal is among them. This limits the size of the game tree, as well as the parameters on which the complexity of our algorithms depend. $L = 9!/(3!)^3 = 1680$ is the full game.

\subsection{Ride-Sharing Game Instances}

We introduce a new benchmark which we call \emph{ride-sharing game}.

\paragraph{General rules of the game} The game models the interaction between two players (\emph{a.k.a.}, drivers), which compete to serve requests on a road network. In particular, the network is modeled as an undirected graph $\Grs=(\Vrs,\Ers)$. Each vertex $v\in \Vrs$ corresponds to a ride request to be served. Each ride request has a reward in $\mathbb{R}_{\ge 0}$. Each edge in the road network has some cost (representing the time incurred to traverse the edge).
The first driver who arrives on node $v\in \Vrs$ serves the corresponding ride, and receives the corresponding reward. Once a node has been served, it stays clean until the end of the game. The game terminates when all requests have been served, or when a timeout is met (\textit{i.e.}, there's a fixed time horizon $T$). If the two drivers arrive on the same vertex at the same time they get reward 0.
The final utility of each driver is the sum of the rewards obtained from the beginning until the end of the game.
The initial position of the two drivers is randomly selected at the beginning of the game. Finally, the two drivers can observe each other's position only when they are simultaneously on the same node, or they are in adjacent nodes.

\paragraph{Objective and remarks} Ride-sharing games are particularly well-suited to study the computation of optimal correlated equilibria because they are two-player, general-sum games which are \emph{not} triangle-free~\cite{Farina20:Polynomial}. That is not the case for some of the existing two-player general-sum benchmarks, such as Goofspiel.
We take the perspective of a \emph{centralized platform} that has the goal of steering the drivers' behavior so as to maximize the overall social welfare. The platform can send \emph{recommendations} to players in the form of navigation instructions. The goal of the platform is to ensure that such recommendations are incentive compatible, and maximize the SW attained at the equilibrium. Depending on the type of interaction in place between the platform and the players, the platform's goal amounts to finding an optimal (\textit{i.e.}, social-welfare maximizing) NFCCE/EFCCE/EFCE. For example, if the platform implemented an EFCE-like interaction protocol, at each new vertex in $\Vrs$ a driver would receive a suggestion about the next road to take from there. The driver would be free to deviate as such decision point, since they could decide to take another direction, and that would come at the cost of future recommendations.

\paragraph{Implementation details} In our experiments, we employ road networks with unitary cost associated to edges. We write \gamelbl{$^n$RS$iT$} to indicate the game instance has $n$ players, and was generated from map $i$ with time horizon $T$ (\textit{i.e.}, each driver can make at most $T$ steps). We employ two maps (map 1 and map 2), and we generate the instances \gamelbl{$^2$RS13}, \gamelbl{$^2$RS14}, \gamelbl{$^2$RS23}.
In Figure~\ref{fig:rs} we report the structure of the two maps. The value between curly brackets is the reward for a request on that node.

\begin{figure}[p]
  \centering
  \scalebox{.7}{
    \begin{tikzpicture}[shorten >=1pt,auto,node distance=3cm,
        thick,main node/.style={circle,draw,
            font=\sffamily\Large\bfseries,minimum size=5mm}]

      \node[main node, label={[align=left]\{1\}}] at (0, 0)   (0) {0};
      \node[main node, label={[align=left]\{0.5\}}] at (1.5, 2)   (1) {1};
      \node[main node, label={[left,label distance=.15cm]\{.5\}}] at (1.5, 0)   (2) {3};
      \node[main node, label={[anchor=north]below:\{1.5\}}] at (1.5, -2)   (3) {2};
      \node[main node, label={[align=left]\{4.5\}}] at (3, 0)   (4) {4};
      \node[main node, label={[align=left]\{2\}}] at (4, 2)   (5) {5};
      \node[main node, label={[anchor=north]below:\{1.5\}}] at (4, -2)   (6) {6};

      \path[every node/.style={font=\sffamily\small,
            fill=white,inner sep=1pt}]

      (0) edge[-] (1)
      (0) edge[-] (3)
      (1) edge[-] (2)
      (3) edge[-] (2)
      (4) edge[-] (2)
      (3) edge[-] (6)
      (5) edge[-] (6)
      (5) edge[-] (1)
      (2) edge[-] (5)
      (2) edge[-] (6);
    \end{tikzpicture}
  }
  \hspace{1cm}
  \scalebox{.7}{
    \begin{tikzpicture}[shorten >=1pt,auto,node distance=3cm,
        thick,main node/.style={circle,draw,
            font=\sffamily\large\bfseries,minimum size=5mm}]

      \node[main node, label={[align=left]\{1\}}] at (0, 0)   (0) {0};
      \node[main node, label={[align=left]\{0.5\}}] at (2, 3)   (1) {1};
      \node[main node, label={[align=left]\{0.5\}}] at (2, 1.2)   (2) {2};
      \node[main node, label={[align=left]right:\{1.5\}}] at (2, -1.5)   (3) {3};
      \node[main node, label={[align=left]\{1\}}] at (2, -3)   (4) {4};
      \node[main node, label={[align=right]right:\{2.5\}}] at (3.5, 2.25)   (5) {5};
      \node[main node, label={[align=right]right:\{1.5\}}] at (3.5, -2.25)   (6) {6};
      \node[main node, label={[align=right]right:\{5\}}] at  (3.5, 0)  (7) {7};

      \path[every node/.style={font=\sffamily\small,
            fill=white,inner sep=1pt}]

      (0) edge[-] (1)
      (0) edge[-] (2)
      (0) edge[-] (3)
      (0) edge[-] (4)
      (5) edge[-] (1)
      (5) edge[-] (2)
      (0) edge[-] (1)
      (3) edge[-] (2)
      (3) edge[-] (6)
      (4) edge[-] (6)
      (0) edge[-] (1)
      (5) edge[-] (7)
      (7) edge[-] (6);
    \end{tikzpicture}

  }
  \caption{The two road network configurations which we consider. \emph{Left}: map 1 (used for \gamelbl{$^2$RS13}, \gamelbl{$^2$RS14}). \emph{Right}: map 2 (used for \gamelbl{$^2$RS23}). In both cases the position of the two drivers is randomly chosen at the beginning of the game, edge costs are unitary, and one reward per node is indicated between curly brackets.}
  \label{fig:rs}
\end{figure}

\begin{table}[p]
\centering
\newcommand{\cbox}[2]{\fcolorbox{white}[rgb]{#1}{\phantom{00m 00s}\llap{#2}}}
\newcommand{\mymidrulegray}{\arrayrulecolor{lightgray}\mymidrule}
\newcommand{\mymidruledgray}{\arrayrulecolor{gray}\mymidrule}
\newcommand{\mymidrule}{
\midrule
\arrayrulecolor{black}}
\newcommand{\unk}{\textcolor{black!30}{---}}
\newcommand{\tworow}[1]{\multirow{2}{*}{\bf #1}}

\scalebox{0.8}{
\begin{tabular}{lrr|rrrrrrr}
\toprule
      \tworow{Game} & \tworow{Leaves} & \tworow{$k$} & \tworow{Concept} & \bf \tworow{$\abs{\mc E^c}$} & \tworow{Runtime} & \multicolumn{3}{c}{\bf Optimal value} \\
      &&&&&& \centercell{P1} & \centercell{P2} & \centercell{P3}
      \\ \mymidrule
          &         &    & NFCCE &     259,176 &  0.63s & $-$0.018 & $-$0.007 & 0.064 \\
\gamelbl{$^3$K4} &     312 & 12 & EFCCE &     370,408 &  0.87s & $-$0.020 & $-$0.012 & 0.057 \\
          &         &    &  EFCE &     249,508 &  0.72s & $-$0.021 & $-$0.013 & 0.055 \\
\mymidrulegray
          &         &    & NFCCE &   4,182,981 & 18.44s & $-$0.011 &    0.017 & 0.057 \\
\gamelbl{$^3$K5} &     780 & 15 & EFCCE &   7,236,161 & 49.89s & $-$0.016 &    0.015 & 0.052 \\
          &         &    &  EFCE &   5,150,241 & 39.14s & $-$0.016 &    0.013 & 0.052 \\
\mymidrulegray
          &         &    & NFCCE &     605,941 &  2.32s &    2.450 &    2.197 & 2.072 \\
 \gamelbl{$^3$L223} &   8,762 &  6 & EFCCE &   6,342,970 & 34.84s &    1.302 &    1.431 & 1.309 \\
          &         &    &  EFCE &   5,251,772 & 50.56s &    0.877 &    1.009 & 1.000 \\
\mymidrulegray
          &         &    & NFCCE &      34,212 &  0.05s &    0.250 &    0.250 & 0.131 \\
   \gamelbl{$^3$D2} &     504 &  6 & EFCCE &      54,627 &  0.08s &    0.250 &    0.250 & 0.000 \\
          &         &    &  EFCE &     621,237 &  1.57s &    0.250 &    0.250 & 0.000 \\
\mymidrulegray
          &         &    & NFCCE &      29,865 &  0.04s &    2.505 &    2.505 & 2.505 \\
   \gamelbl{$^3$GL} &   1,296 & 10 & EFCCE &      36,933 &  0.06s &    2.476 &    2.476 & 2.476 \\
          &         &    &  EFCE &      16,950 &  0.06s &    2.467 &    2.467 & 2.467 \\
\mymidrulegray
          &         &    & NFCCE & 154,973,683 & 1m 44s &    1.463 &    1.380 & 0.887 \\
\gamelbl{$^3$T[50]} &  10,300 & 15 & EFCCE & 155,188,423 & 1m 40s &    1.420 &    1.360 & 0.840 \\
          &         &    &  EFCE & 155,340,357 & 1m 31s &    1.420 &    1.360 & 0.840 \\
\mymidrulegray
          &         &    & NFCCE &   6,678,856 & 12.91s &    1.466 &    1.477 & 1.037 \\
   \gamelbl{$^3$TP} & 379,008 &  3 & EFCCE &  11,625,688 & 28.73s &    1.451 &    1.442 & 0.922 \\
          &         &    &  EFCE &   6,714,256 & 17.13s &    1.451 &    1.442 & 0.922 \\
\bottomrule
\end{tabular}
}
\caption{Experiments on general-sum correlated equilibria in 3-player games, with the correlation DAG. All runs were performed to convergence. Since all games tested are constant sum, instead of reporting the social welfare optimum (which is always the constant sum), we report the optimal utility for each individual agent in every solution concept.}
\label{ta:correl 3p body}
\end{table}

\noindent
\begin{figure}[p]\centering
  \payoffspace B2{222}
  \payoffspace B2{322}
  \payoffspace B2{323}
  \\
  \payoffspace B2{324}
  \payoffspace S2{122}
  \payoffspace S2{123}
  \\
  \payoffspace S2{133}
  \payoffspace{RS}{2}{12}
  \payoffspace{RS}{2}{22}
  \\
  \makelegend
  \caption{Payoff spaces in two-player games.}
\end{figure}

\begin{figure}[p]\centering
  \payoffspace D3{2}
  \payoffspace{GL}{3}{}
  \payoffspace K3{4}
  \\
  \payoffspace K35
  \payoffspace L3{223}
  \payoffspace T3{50}
  \\
  \payoffspace{TP}{3}{}
  \\
  \makelegend
  \caption{Payoff spaces in three-player, fixed-sum games.}
\end{figure}

\begin{table}[t]
\newcommand{\header}[1]{\multicolumn{2}{l}{\textbf{{#1}}}\\\midrule}
    \begin{tabular}{ll}
    \header{Games}
       $[n] = \{ 1, ..., n \}$ & the number of players
    \\ $h \in \mc H$ & a node in a game
    \\ $\Root$ & the root node
    \\ $z \in \mc Z$ & a terminal node
    \\ $A_h$, $A_I$ & the set of actions available at node $h$ or infoset $I$
    \\ $a$ & an action
    \\ $ha$ & the child of $h$ reached by playing $a$
    \\ $\mc H_i$ & the set of nodes at which player $i$ (possibly chance) acts
    \\ $p(z)$ & the probability that chance plays all actions on the path to $z$
    \\ $\sigma_i \in \Sigma_i$ & a sequence of player $i$
    \\ $\sigma_i(h)$ & the sequence of player $i$ at node $h$
    \\ $\vec x_i[s]$ & the probability that player $i$ plays to the given sequence or node $s$
    \\ $\vec x_i \in \mc X_i$ & a (sequence-form) mixed strategy of player $i$
    \\ $u_i(z), u_i(\vec x), \cdots$ & the (expected) utility of player $i$
    \\ $\preceq$ & the precedence order induced by the game tree
    \\ $h \land h'$ & the lowest common ancestor of $h$ and $h'$
    \\ $\Root_i$ & the empty sequence of player $i$ 
    \\[5mm] \header{Correlation and Correlated Profiles}
       $\tau$ & a trigger
    \\ $\bot$ & the empty trigger
    \\ $\bar\tau$ & where $\tau$ was activated---that is, $I$ if $\tau = Ia$ is a sequence, and $\tau$ otherwise.
    \\ $(h, a, \tau)$ & a node in the augmented game 
    \\ $h^\tau$ & shorthand for $(h, \bot, \tau)$ for NFCCE and EFCE; or $(h, *, \tau)$ for EFCCE
    \\ $\vec\xi \in \Xi^c$ & a correlation plan, in notion $c$
        \\ $P \in \mc P$ & a public state
    \\ $R^c_\mediator$ & the size of the representation of $\Xi^c$
    \\[5mm] \header{Parameters}
       $b$ & the (non-chance) branching factor
    \\ $d$ & the depth of the game tree
    \\ $k$ & the information complexity
    \\[5mm] \header{von Stengel--Forges and Column Generation}
       $\Sigma$ & the set of {\em relevant} joint sequences
    \\ $\sigma_1 \bowtie \sigma_2$ & $(\sigma_1, \sigma_2) \in \Sigma$
    \\ $\sigma_1 \bowtie I$ & $(\sigma_1, Ia) \in \Sigma$ for all $a \in A_I$
    \\ $\mc V$ & the von Stengel--Forges polytope
    \\ $\Sr[1]$ & the set of semi-randomized plans of player $i$
    \end{tabular}
    \label{tab:notation}
    \caption{Summary of notation used in the paper}
\end{table}
\end{document}